\tikzstyle{RectObject}=[rectangle,fill=white,draw,line width=0.5mm]
\tikzstyle{line}=[draw]
\tikzstyle{arrow}=[draw, -latex]
\newcommand{\cnt}{\mathsf{C}}
\newcommand{\pnu}{\mathsf{Pn}}
\newcommand{\UT}{\mathsf{UT}}
\newcommand{\icmet}{\mathsf{ICMET}}
\newcommand{\iecm}{\mathsf{IECM}}
\newcommand{\mcnt}{\mathsf{MC}}
\newcommand{\re}{\mathsf{re}}
\newcommand{\UM}{\mathsf{UM}}
\newcommand{\Ss}{\mathsf{S}}
\newcommand{\Cc}{\mathsf{C}}
\newcommand{\Pp}{\mathsf{P}}
\newcommand{\singl}{\mathsf{single}}
\newcommand{\regm}{\mathsf{Rat}}
\newcommand{\uregm}{\mathsf{URat}}
\newcommand{\reg}{\mathsf{Rat}}
\newcommand{\ureg}{\mathsf{URat}}
\newcommand{\regmtl}{\mathsf{RatMTL}}
\newcommand{\sfmtl}{\mathsf{SfrMTL}}
\newcommand{\po}{\mathsf{po}}
\newcommand{\N}{\mathcal{N}}
\newcommand{\regmitl}{\mathsf{RatMITL}}
\newcommand{\at}{\mathsf{atom}}
\newcommand{\Threads}{\mathsf{Threads}}
\newcommand{\Th}{\mathsf{Th}}
\newcommand{\Next}{\mathsf{Nxt}}
\newcommand{\ovs}{\mathsf{ovs}}
\newcommand{\BD}{\mathsf{BD}}
\newcommand{\BDset}{\mathsf{BDSet}}
\newcommand{\Merge}{\mathsf{Merge}}
\newcommand{\merge}{\mathsf{merge}}
\newcommand{\In}{\mathsf{In}}
\newcommand{\wB}{\Box^{\mathsf{ns}}}
\newcommand{\wU}{\until^{\mathsf{ns}}}
\newcommand{\wF}{\fut^{\mathsf{ns}}}
\newcommand{\until}{\:\mathsf{U}}
\newcommand{\since}{\:\mathsf{S}}
\newcommand{\weaku}{\:\mathsf{W}}
\newcommand{\R}{\:\mathbb{R}}
\newcommand{\fut}{\Diamond}
\mathchardef\mhyphen="2D
\mathchardef\mhyph="2D
\newcommand{\nex}{\mathsf{O}}
\newcommand{\nx}{\mathsf{O}}
\newcommand{\mtl}{\mathsf{MTL}}
\newcommand{\ltl}{\mathsf{LTL}}
\newcommand{\Beh}{\mathsf{Beh}}
\newcommand{\combine}{\mathsf{combine}}
\newcommand{\isEven}{\mathsf{iseven}}
\newcommand{\tptl}{\mathsf{TPTL}}
\newcommand{\mitl}{\mathsf{MITL}}
\newcommand{\oomit}[1]{}
\newcommand{\optptl}{\mbox{$1-\mathsf{OpTPTL}$}}
\begin{document}
\title{Making Metric Temporal Logic Rational}

\titlerunning{Making Metric Temporal Logic Rational}

\author[1]{S. Krishna}
\author[1]{Khushraj Madnani}
\author[2]{P. K. Pandya}
\affil[1]{
%IIT Bombay, Mumbai, India\\
  \text{krishnas,khushraj@cse.iitb.ac.in}}
\affil[2]{
%Tata Institute of Fundamental Research, Mumbai, India\\
  \text{pandya@tifr.res.in}}
\authorrunning{Krishna, Madnani, Pandya}

\maketitle

\begin{abstract} 
We study an extension of $\mtl$ in pointwise time with rational expression guarded modality
$\reg_I(\re)$ where $\re$ is a rational expression over subformulae. We study the decidability and expressiveness of this extension ($\mtl$+$\varphi \ureg_{I, \re} \varphi$+$\reg_{I,\re}\varphi$), called $\regmtl$, as well as its fragment 
$\sfmtl$ where only star-free rational expressions are allowed. Using the technique of temporal projections, we show that $\regmtl$ has decidable satisfiability by giving an equisatisfiable reduction to $\mtl$. We also identify a subclass $\mitl+\ureg$ 
of $\regmtl$ for which our equi-satisfiable reduction gives rise to formulae of $\mitl$, yielding elementary decidability. As our second main result, we show 
a tight automaton-logic connection between $\sfmtl$ and partially ordered (or very weak) 1-clock alternating timed automata. 
%We also identify natural fragments of logic $\tptl$ which correspond to $\sfmtl$. 

\end{abstract}

\section{Introduction}
Temporal logics provide constructs to specify qualitative ordering between events in time. Real time logics are quantitative extensions of temporal logics with the ability to specify real time constraints amongst events. Logics $\mtl$ and $\tptl$ are amongst the prominent real time logics \cite{AH93}.  Two notions of $\mtl$ semantics have been studied in the literature : continuous and pointwise \cite{fst05}. The expressiveness and decidability results vary 
considerably with the semantics used : while the satisfiability checking of $\mtl$ is undecidable in the continuous semantics even for finite timed words  \cite{AFH96}, it is decidable in pointwise semantics with non-primitive recursive complexity  \cite{Ouaknine05}. 
Due to limited expressive power of $\mtl$, several additional modalities have been proposed : the $\mathsf{threshold~ counting}$ modality \cite{count} $\cnt_I^{\geq n} \phi$ states that in time interval $I$ relative to current point, $\phi$ occurs at least $n$ times. The $\mathsf{Pnueli}$ modality \cite{count} $\pnu_I(\phi_1, \ldots, \phi_n)$ states that there is a subsequence of $n$ time points inside interval $I$ where at $i$th point the formula $\phi_i$ holds. In a recent result,  Hunter \cite{hunter} showed that, in continuous time semantics, $\mtl$ enriched with  $\cnt$  modality (denoted $\mtl+\cnt$) is as expressive as $\mathsf{FO}$ with distance $\mathsf{FO}[<,+1]$, which is as expressive as $\tptl$. Unfortunately, satisfiability and model checking of all these logics are  undecidable. This has led us to 
focus on the pointwise case with only the future modality, i.e. logic  $\mtl[\until_I]$, which we abbreviate as $\mtl$ in rest of the paper.
  Also, $\mtl+op$ means $\mtl$ with modalities $\until_I$ as well as $op$. 

In  pointwise semantics, it can be shown that $\mtl \subset \mtl+\cnt \subset \mtl+\pnu$ (see \cite{fossacs16}).  In this paper, we propose
a generalization of threshold counting and Pnueli modalities by a $\mathsf{rational~expression}$ modality $\regm_I \re(\phi_1,\ldots,\phi_k)$, which 
specifies that the truth of the subformulae, $\phi_1,\ldots,\phi_k$, at the set of points within interval $I$ is in accordance with the 
rational expression $\re(\phi_1,\ldots,\phi_k)$. The resulting logic is called $\regmtl$ and is the subject of this paper.
The expressive power of logic $\regmtl$ raises several points of interest.
It can be shown that $\mtl+\pnu \subset \regmtl$, and it can express several new and interesting properties:
(1) Formula $\reg_{(1,2)}((aa)^*)$ states that within time interval $(1,2)$ there is an even number of occurrences of $a$. We will define
 a derived modulo counting modality which states this directly as the formula  $\mcnt_{(1,2)}^{0\%2} a$.
 \oomit{
(2) An exercise regime of 600 to 610 seconds consists  an arbitrary many repetitions of 3 pushup CYCLES where each CYCLE  lasts between 55 and 65 seconds
and it consists of repeated rounds of the action sequence up, strech, down  where the up-down sequence must 
be completed in at most 2 seconds and rounds are followed by rest of exactly 5 seconds. The following formula specifies the routine. (We
assume that all events are mutually exclusive in time. This is easy to state and omitted.)
\[
\begin{array}{l}
 \reg_{[600,610]} ~~((STCYCLE (UPP.strech.down)^* \reg_{[0,5]} (rest))^* end) ~~~\mbox{where} \\
 UPP ~=~ up \land Freg_{(0,2]} (up.strech.down) \\
 STCYCLE ~=~ st \land  Ureg_{[55,65],~(\neg st \land \neg end)^*} ~(st + end)
\end{array}
\]
}
(2) An exercise regime lasting between 60 to 70 seconds  consists of arbitrary many repetitions of three pushup cycles which must be completed 
within 2 seconds. There is no restriction on delay between two cycles to
accomodate weak athletes. This is given by 
$\reg_{[60,70]}((UPP.up.up)^*)$ where 
 $UPP=(up~ \ureg_{(0,2], up} up)$.  
\oomit{
(3) Consider bus interface between processor and memory which specifies timing constraints on address and data signals for 
    memory to be read properly.
    Address should be stable (denoted by proposition $adr-stable$ for $t_{setup}$ time before the enable signal goes high and must continue to remain stable till enable is high.
    The data becomes available from $t_{ready}$ time and it will remain stable till enable is high.
    This can be naturally specified using $\reg$ operator as follows.
       \[
           adr-stable       
       \]
%\end{itemize}
}
The inability to specify rational expression constraints has been an important lacuna of LTL and its practically useful extensions
such as  PSL sugar \cite{psl}, \cite{psl1} (based on Dymanic Logic \cite{DL}) which extend LTL with both counting and 
rational expressions. This indicates that our logic $\regmtl$ is a natural and useful logic for specifying properties. 
%In the timed setting, Asarin   \cite{TRE} has studied timed regular expressions and shown its equivalence to nondeterministic timed automata. 
However, to our knowledge, impact of rational expression constraints on metric temporal modalities have not been studied before. 
As we show in the paper, timing and regularity constraints interact in a fairly complex manner.

As our first main result, we show that satisfiability of $\regmtl$ is decidable by giving an equisatisfiable reduction to $\mtl$. The reduction
makes use of the technique of \emph{oversampled temporal projections} which was previously proposed \cite{time14}, \cite{fossacs16}  and used for proving the decidability of $\mtl+\cnt$. The reduction given here has several novel features such as an $\mtl$  encoding of the run tree of an alternating automaton which restarts the DFA of a given rational expression at each time point (section \ref{sec:eqsat-dfa}).
We  identify two syntactic subsets of $\regmtl$ denoted $\mitl+\ureg$ with 2$\mathsf{EXPSPACE}$ hard satisfiability, and its further subset  $\mitl+\UM$ with $\mathsf{EXPSPACE}$-complete satisfiability. 
As our second main result, we show that the star-free fragment $\sfmtl$ of $\regmtl$ characterizes exactly the class of partially ordered 1-clock alternating timed automata, thereby giving a tight logic automaton connection. The most non-trivial part
of this proof is the construction  of $\sfmtl$ formula 
equivalent to a given partially ordered 1-clock alternating timed automaton $\mathcal{A}$ (Lemma \ref{lem:poata-sfmtl}).

%
%This  construction describes the behaviours 
%starting at each location $s$ of the  $\mathcal{A}$,  by giving LTL formulae in each of the intervals $[0,0], \dots, (K, \infty)$ 
%where  $K$ is the maximal constant used in $\mathcal{A}$. Each of these LTL formulae  can be written as a star-free expression that describes 
%the behaviour of location $s$ over the chosen time interval $I$. This is then used in obtaining a $\sfmtl$  
%formula where $\reg_I(\re)$ describes the star-free expression $\re$ that is true on interval $I$. The partial order of locations of $\mathcal{A}$
% helps in stitching together the behaviours as one moves from one location to another, yielding at the end, 
% a  $\sfmtl$ formula for the initial location $s_0$ that results in acceptance. 

%Secondly, previous work of Ouaknine and Worrell \cite{LICS05} showed that  $\mtl$ (over finite pointwise models) can be 
%reduced to equivalent partially ordered 1-clock alternating timed automata. However, the converse was not clear.
%%Moreover, they showed that 1-clock $\tptl$ augmented with 
%%fixed point operator is expressively equivalent to 1-clock alternating timed automata.
%%Generalising this, it is fairly easy to show that $\regmtl$ can also be reduced to equivalent  1-clock alternating timed automata.
%Let $\sfmtl$ be subset  of $\regmtl$ where the rational expressions are star-free.

%\input{intro-k.tex}
\section{Timed Temporal Logics}
\label{sec:mtlmc}
This section describes the syntax and semantics of the timed temporal logics 
needed in this paper : $\mathsf{MTL}$ and  $\mathsf{TPTL}$.
%\subsection{Timed Temporal Logics}
  Let $\Sigma$ be a finite set of propositions. A finite timed word over $\Sigma$ is a tuple
$\rho = (\sigma,\tau)$.   $\sigma$ and $\tau$ are sequences $\sigma_1\sigma_2\ldots\sigma_n$ and  $\tau_1\tau_2\ldots \tau_n$ respectively, with $\sigma_i \in \mathcal{P}(\Sigma)-\emptyset$,  and $\tau_i \in \R_{\geq 0}$
 for $1 \leq i \leq n$ and $\forall i \in dom(\rho)$,  $\tau_i \le \tau_{i+1}$, where $dom(\rho)$ is the set of positions $\{1,2,\ldots,n\}$ in the timed word. For convenience, we assume $\tau_1=0$. 
 The $\sigma_i$'s can be thought of as labeling positions $i$ in $dom(\rho)$.  
 For example, given $\Sigma=\{a,b,c\}$,  $\rho=(\{a,c\},0)(\{a\},0.7)(\{b\},1.1)$ is a timed word.
$\rho$ is strictly monotonic iff $\tau_i < \tau_{i+1}$ for all $i,i+1 \in dom(\rho)$. 
Otherwise, it is weakly monotonic. 
The set of finite timed words over $\Sigma$ is denoted $T\Sigma^*$. 
Given $\rho=(\sigma,\tau)$ with $\sigma=\sigma_1\dots \sigma_n$,  
 $\sigma^{\singl}$ denotes the set of 
words $\{w_1w_2 \dots w_n \mid w_i \in \sigma_i\}$. For $\rho$ as above,
$\sigma^{\singl}$ consists of $(\{a\},0)(\{a\},0.7)(\{b\},1.1)$
and $(\{c\},0)(\{a\},0.7)(\{b\},1.1)$. 
%Metric Temporal Logic ($\mathsf{MTL}$) extends  linear temporal logic ($\mathsf{LTL}$) by adding timing constraints 
%to the ``until'' modality of $\mathsf{LTL}$. $\mathsf{MTL}$ is parametrized by using a  permitted 
Let $I\nu$ be a set of open, half-open or closed time intervals. 
 The  end points of these intervals are  in $\mathbb{N}\cup \{0,\infty\}$.   For example, 
$[1,3), [2, \infty)$. For $\tau \in \R_{\geq 0}$ and interval  $\langle a, b\rangle$, with $<\in \{(,[\}$ and $>\in \{ ],) \}$, 
$\tau+\langle a, b\rangle$ stands for the interval 
$\langle \tau+a, \tau+b\rangle$. \\
%\subsection*{Metric Temporal Logic}
%\label{prelim}
\noindent{\bf Metric Temporal Logic}($\mtl$). Given a finite alphabet $\Sigma$,  the formulae of $\mathsf{MTL}$ are built from $\Sigma$  using boolean connectives and 
time constrained version of the modality $\until$ as follows:\\
$\varphi::=a (\in \Sigma)~|true~|\varphi \wedge \varphi~|~\neg \varphi~|
~\varphi \until_I \varphi$, 
where  $I \in I\nu$.    
\label{point}
For a timed word $\rho=(\sigma, \tau) \in T\Sigma^*$, a position 
$i \in dom(\rho)$, and an $\mathsf{MTL}$ formula $\varphi$, the satisfaction of $\varphi$ at a position $i$ 
of $\rho$ is denoted $(\rho, i) \models \varphi$, and is defined as follows: (i)
\noindent $\rho, i \models a$  $\leftrightarrow$  $a \in \sigma_{i}$, (ii) $\rho,i  \models \neg \varphi$ $\leftrightarrow$  $\rho,i \nvDash  \varphi$, 
(iii) $\rho,i \models \varphi_{1} \wedge \varphi_{2}$   $\leftrightarrow$ 
$\rho,i \models \varphi_{1}$ 
and $\rho,i\ \models\ \varphi_{2}$, (iv)
$\rho,i\ \models\ \varphi_{1} \until_{I} \varphi_{2}$  $\leftrightarrow$  $\exists j > i$, 
$\rho,j\ \models\ \varphi_{2}, \tau_{j} - \tau_{i} \in I$, and  $\rho,k\ \models\ \varphi_{1}$ $\forall$ $i< k <j$. 

%We assume the existence of  a special point called 0, outside $dom(\rho)$. The time stamp 
%of this point is 0 ($\tau_0=0$).\footnote{All distances in $\mtl$ are relative to the starting point; $\mtl$ cannot check the time stamp of the first action point. For instance, the words $(a,1.2)(b,2.1)$ and $(a,0)(b,0.9)$ are indistinguishable by $\mtl$.  
%This is not the case with (alternating) timed automata. The assumption of $\tau_0=0$ is to bridge this gap, especially while going from automata to logic, as in  lemma \ref{lem:ata-1tptl}.}    
%\noindent $\rho$ satisfies $\varphi$ denoted $\rho \models \varphi$ 
%iff $\rho,1 \models \varphi$.
 The language of a $\mathsf{MTL}$ formula $\varphi$ is $L(\varphi)=\{\rho \mid \rho, 1 \models \varphi\}$. 
Two formulae $\varphi$ and $\phi$ are said to be equivalent denoted as $\varphi \equiv \phi$ iff $L(\varphi) = L(\phi)$.
Additional temporal connectives are defined in the standard way: 
we have the constrained future eventuality operator $\fut_I a \equiv true \until_I a$ 
and its dual 
$\Box_I a \equiv \neg \fut_I \neg a$.
We also define the next operator as $\nex_I \phi \equiv \bot \until_I  \phi$. 
Non strict versions of  operators 
are defined as  $\wF_I a=a \vee \fut_I a, 
\wB_I a\equiv a \wedge \Box_I a$, $a \wU_I b\equiv b \vee [a \wedge (a \until_I b)]$ if $0 \in I$, and 
$[a \wedge (a \until_I b)]$ if $0 \notin I$. Also, $a \weaku b$ is a shorthand for $\Box a \vee (a \until b)$. 
 The subclass of $\mathsf{MTL}$ obtained by restricting the intervals $I$ in the until modality 
to non-punctual intervals is denoted  $\mathsf{MITL}$.  \\
\noindent{\bf Timed Propositional Temporal Logic} ($\mathsf{TPTL}$). 
$\tptl$ is a prominent real time extension of $\mathsf{LTL}$, where timing constraints are specified with the help of freeze clocks. 
The set of $\mathsf{TPTL}$ formulas are defined inductively as 
$\varphi::=a (\in \Sigma)~|true~|\varphi \wedge \varphi~|~\neg \varphi~|~\varphi \until \varphi~|~y.\varphi~|~y\in I$. 
$\mathcal{C}$ is a set of
clock variables progressing at the same rate, 
$y\in \mathcal{C}$, and $I$ is an interval as above. %of the form ${<}a,b{>}$ $a,b\in \mathbb{N}$ with $<\in \{(,[\}$ and $>\in \{ ],) \}$.
%$\mathsf{TPTL}$ is interpreted over finite timed words over  $\Sigma$.
%The truth of a formula is interpreted at a position
%$i\in \mathbb{N}$ along the word.
 For a timed word $\rho=(\sigma_1,\tau_1)\dots(\sigma_n,\tau_n)$, we define the satisfiability relation, $\rho,
i, \nu \models \phi$ saying that  the formula $\phi$ is true at position
$i$ of the timed word $\rho$ with valuation $\nu$ of all the clock
variables as follows: 
(1) 	$\rho, i, \nu \models a$   $\leftrightarrow$  $a \in \sigma_{i}$,
(2) $\rho,i,\nu  \models \neg \varphi$  $\leftrightarrow$  $\rho,i,\nu \nvDash  \varphi$,
(3)		$\rho,i,\nu \models \varphi_{1} \wedge \varphi_{2}$  $\leftrightarrow$   $\rho,i,\nu \models \varphi_{1}$ 
		and $\rho,i,\nu\ \models\ \varphi_{2}$,
(4) 		$\rho,i,\nu \models x.\varphi $  $\leftrightarrow$ $\rho,i,\nu[x \leftarrow \tau_i] \models \varphi$,
(5)		$\rho,i,\nu \models x \in I $   $\leftrightarrow$  $\tau_i - \nu(x) \in I$,
(6)		$\rho,i,\nu\ \models\ \varphi_{1} \until \varphi_{2}$  $\leftrightarrow$  $\exists j > i$, 
		$\rho,j,\nu \ \models\ \varphi_{2}$,  and 
		 $\rho,k,\nu \ \models\ \varphi_{1}$ $\forall$ $i < k < j$.
$\rho$ satisfies $\phi$ denoted $\rho \models \phi$ iff $\rho,1,\bar{0}\models \phi$. Here $\bar{0}$ 
is the valuation obtained by setting all clock variables to 0.  
We denote by $k{-}\mathsf{TPTL}$ the fragment of  $\mathsf{TPTL}$ using at most $k$ clock variables.
%The fragment of $\mathsf{TPTL}$ with $k$ clock variables is denoted $k{-}\mathsf{TPTL}$. 
 \begin{theorem}[\cite{Ouaknine05}]
 \label{thm-basic}
	 $\mathsf{MTL}$ satisfiability is decidable over finite timed words and is non-primitive recursive. 
\end{theorem}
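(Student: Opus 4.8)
The plan is to prove the two assertions separately: decidability via an automata-theoretic reduction together with a well-quasi-order argument, and the non-primitive recursive lower bound via a reduction from a hard counter-machine problem. For decidability, the first step I would take is to compile the given $\mtl$ formula $\varphi$ into a one-clock alternating timed automaton $\Aa_\varphi$ accepting exactly $L(\varphi)$. The translation is structural: each subformula contributes a location, the single clock is reset when a timed eventuality $\until_I$ is spawned so that its guard $I$ can later be tested, and alternation captures the conjunctive/disjunctive branching inherent in the semantics of $\until_I$. Because $\mtl$ here uses only future modalities, a single clock suffices, and satisfiability of $\varphi$ reduces to nonemptiness of $\Aa_\varphi$ over finite timed words.

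The crux is then to show that nonemptiness of a one-clock alternating timed automaton is decidable. I would abstract a configuration -- the finite set of (location, clock-value) pairs reached after reading a prefix -- by recording, for each integer threshold occurring in the guards, the ordering of the fractional parts of the active clocks together with the set of locations sharing each fractional region. This yields an infinite-state transition system whose configurations can be viewed as words over a finite alphabet. The key lemma is that this system is \emph{well-structured}: the configuration ordering (a Higman-style subword ordering) is a well-quasi-order, and the transition relation is monotone with respect to it. Given an effective pred-basis, the standard backward-reachability algorithm for well-structured transition systems terminates, deciding coverability and hence nonemptiness, which establishes decidability.

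For the lower bound, I would reduce from a problem already known to be non-primitive recursive, namely reachability/termination for counter machines with incrementing (insertion) errors. Counter values are encoded by the number of events falling within successive unit-length time windows, and the metric constraints of $\mtl$ enforce the copying of a configuration from one window to the next, simulating increments, decrements, and zero-tests; the insertion errors tolerated by the machine correspond precisely to the imprecision that the well-quasi-order cannot rule out, which is exactly why this class -- rather than a Turing-complete model -- is the right source and yields an Ackermannian bound. The main obstacle is the decidability half: establishing that the abstracted configuration space carries a well-quasi-order under which the transitions of the one-clock alternating automaton are monotone. Passing from real-valued clocks to a finitary word representation while simultaneously preserving both the well-quasi-order (so Higman's lemma applies) and the monotonicity of transitions (so the well-structured-transition-system machinery applies) is the delicate technical heart of the argument.
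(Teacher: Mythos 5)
This statement is quoted from \cite{Ouaknine05} and the paper gives no proof of its own, so there is nothing internal to compare against; your sketch faithfully reconstructs the cited Ouaknine--Worrell argument. The two halves you outline --- compiling $\mtl$ into a one-clock alternating timed automaton and deciding its emptiness by encoding configurations as words over a finite region alphabet so that Higman's lemma yields a well-quasi-order compatible with the transition relation, and obtaining the non-primitive recursive lower bound by simulating a faulty (insertion-error) machine with counter values encoded as event counts in unit windows copied forward by punctual constraints --- are exactly the ingredients of that proof, and indeed the same gadgets reappear in this paper's appendices for the $1{-}\tptl$ and $\mitl+\mcnt$ hardness results.
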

\paragraph*{$\mtl$ with Rational Expressions($\regmtl$)}
We propose an  extension of $\mathsf{MTL}$ with rational expressions, that forms the core of the paper.  
These modalities can assert the truth of a rational expression (over subformulae) within a particular time interval with respect to the present point. 
For example, $\regm_{(0,1)}(\varphi_1.\varphi_2)^+$ when evaluated at a point $i$, asserts the existence of  $2k$ points 
$\tau_i < \tau_{i+1} < \tau_{i+2} < \dots < \tau_{i+2k} < \tau_{i}+1$, $k >0$, 
such that $\varphi_1$ evaluates to true at $\tau_{i+2j+1}$, and $\varphi_2$ evaluates to true at 
$\tau_{i+2j+2}$, for all $0 \leq j <k$. 
%Since we have to check the truth of $\varphi_1, \varphi_2$ 
%at (intermediate) time points, we write $\regm_{(0,1)}(\varphi_1.\varphi_2)^+$ as 
%$\regm^{\{\varphi_1, \varphi_2\}}_{(0,1)}(\varphi_1.\varphi_2)^+$. In general, this superscript will be a set of formulae $S$, which will be useful to evaluate the truth of the rational expression. 

\noindent{$\regmtl$ \bf{Syntax}:} Formulae of $\regmtl$ are built from $\Sigma$ (atomic propositions) as follows:\\
	$\varphi::=a (\in \Sigma)~|true~|\varphi \wedge \varphi~|~\neg \varphi~|~ \regm_I \re(\Ss)~|
	~\varphi \uregm_{I,\re(\Ss)} \varphi$, 
	where $I \in I\nu$
	and $\mathsf{S}$ is a finite set of formulae of interest, and  $\re(\Ss)$ is defined as a rational expression over $\Ss$. 
	 $\re(\Ss)::= \varphi (\in \Ss)~|~\re(\Ss).\re(\Ss)~|~\re(\Ss)+\re(\Ss)~|~[\re(\Ss)]^*$. Thus, $\regmtl$ 
	 is $\mtl + \ureg+ \reg$. 
An \emph{atomic}  rational expression $\re$ is any well-formed formula $\varphi \in \regmtl$. 
%For a rational expression $\re$,
%let $\Gamma$ be the set of 
%all subformulae and their negations appearing in $\re$. 
%For example, if $\re=a \ureg^\{\}_{(0,1), \reg_{(1,2)}[\reg_{(0,1)}b]} b$, then 
%$\Gamma$ consists of 
%$\reg_{(1,2)}[\reg_{(0,1)}b],\reg_{(0,1)}b,b$ and their negations. \\
%Let $\mathsf{Cl}(\Gamma)$ denote consistent sets\footnote{a set $S$ is consistent iff $\varphi \in S\leftrightarrow \neg \varphi \notin S$} in $\mathcal{P}(\Gamma)$. 
%$L(\re)$ is the set of strings over $\mathsf{Cl}(\Gamma)$  defined as follows.  Let $S \in \mathsf{Cl}(\Gamma)$. 
%    $$   
%  L(\re)=
%  \begin{cases}
%    \{S \mid a \in S\} & \text{if } \re=a,  \\
%    \{S \mid \varphi_1, \varphi_2 \in S\}  & \text{if } \re=\varphi_1 \wedge \varphi_2, \\
%    \{S \mid \varphi \notin S\} & \text{if } \re=\neg \varphi, \\
%     L(\re_1).L(\re_2) & \text{if } \re=\re_1.\re_2, \\
%    L(\re_1) \cup L(\re_2)   & \text{if } \re=\re_1+\re_2, \\
%    [L(\re_1)]^*  & \text{if } \re=(\re_1)^*. 
%    \end{cases}
%  $$
%If $\re$ is not an atomic rational expression, but has the form $\re_1+\re_2$ or $\re_1.\re_2$ 
%or $(\re_1)^*$, then we use the standard definition of $L(\re)$ as $L(\re_1) \cup L(\re_2)$, 
%$L(\re_1).L(\re_2)$ and $[L(\re_1)]^*$ respectively. 

%We assume that 
%	 $\Ss$ is \emph{exclusive}; that is, no two formulae of $\Ss$ evaluate 
%	 to true at the same point. If $\Ss=\{\varphi_1, \dots, \varphi_n\}$ is not exclusive, we 
%	 can make it exclusive by constructing $\Exc(\Ss)=\{\bigwedge_{i \in K} \varphi_i \wedge \bigwedge_{i \notin K} \neg \varphi_i \mid K \subseteq \{1,2,\dots,n\}\}$.
	
\noindent{$\regmtl$ {\bf Semantics}:} 
For a timed word $\rho=(\sigma, \tau) \in T\Sigma^*$, a position 
$i \in dom(\rho)$, and a $\regmtl$ formula $\varphi$, a finite set $\Ss$ of formulae, we define the satisfaction of $\varphi$ at a position $i$ 
as follows. For positions $i < j \in dom(\rho)$, let $\mathsf{Seg}(\Ss, i, j)$ denote 
the untimed word over $\mathcal{P}(\mathsf{S})$ 
obtained by marking  the positions $k \in \{i+1, \dots, j-1\}$ of $\rho$ with 
$\psi \in \mathsf{S}$ iff $\rho,k \models \psi$.  
For  a position $i{\in} dom(\rho)$ and an interval $I$, 
 let $\mathsf{TSeg}(S, I, i)$ denote 
the untimed word over $\mathcal{P}(\mathsf{S})$
obtained by marking all the positions $k$ 
such that $\tau_k - \tau_i \in I$ 
  of $\rho$ with 
$\psi \in \mathsf{S}$ iff $\rho,k \models \psi$. 
\begin{enumerate}
\item %$\varphi=\varphi_{1} \ureg_{I,\re(\Ss)} \varphi_{2}$. 
%\begin{itemize}
%\item 
 $\rho,i \models \varphi_{1} \ureg_{I,\re(\Ss)} \varphi_{2}$  $\leftrightarrow$  $\exists j {>} i$, 
$\rho,j {\models}\ \varphi_{2}, \tau_{j} - \tau_{i} {\in} I$, $\rho,k\ {\models}\ \varphi_{1}$ ${\forall} i{<} k {<}j$ and, 
$[\mathsf{Seg}(\Ss, i, j)]^{\singl} \cap  
 L(\re(S)) \neq \emptyset$, where $L(\re(\Ss))$ is the language  
of the rational expression $\re$ formed over the set $\Ss$. 
The subclass of $\regmtl$ using only the $\ureg$ modality is denoted 
$\regmtl[\ureg]$ or $\mtl+\ureg$ and if only non-punctual intervals are used, then it is denoted $\regmitl[\ureg]$
or $\mitl+\ureg$.
%Since $\mathsf{Seg}(\Ss, i, j)$ is a word over $\mathcal{P}(\mathsf{S})$, 
%and $L(\re(\Ss))$ is a language over $\Ss$, we clarify the notation 
%$\mathsf{Seg}(\Ss, i, j) \in L(\re(\Ss))$.  
%  If a position $p$ in the word $\mathsf{Seg}(\Ss, i, j)$ is marked with formulae $\varphi_1, \dots, \varphi_j$, then  we obtain $j$ words, $w_1, \dots, w_j$, where the $p$th position of $w_i$  
% is marked only $\varphi_i$.  Thus, if $\mathsf{Seg}(\Ss, i, j)=A_{i+1} \dots A_{j-1}$, with each $A_k \in \mathcal{P}(\mathsf{S})$, then 
% we obtain  the set of words 
% $\mathcal{W}=\{\psi_{i+1}  \dots \psi_{j-1} \mid \psi_k \in A_k\}$.   
%Then  $\mathsf{Seg}(\Ss, i, j) \in L(\re(\Ss))$ iff  $\mathcal{W} \cap L(\re(\Ss)) \neq \emptyset$. 
% For example, %for $\Ss=\{\varphi_1 \wedge \neg \varphi_2, \varphi_2 \wedge \neg \varphi_1, \varphi_1 \wedge \varphi_2, \neg \varphi_1 \wedge \neg \varphi_2\}$, 
%$\mathsf{Seg}(\Ss, 1, 4)$ is the untimed word obtained by marking positions 2,3 of the timed word. 
% \end{itemize}

\item %$\varphi=\reg_I \re$. 
$\rho,i \models \reg_I \re$ $\leftrightarrow$ 
$[\mathsf{TSeg}(S, I, i)]^{\singl} \cap L(\re(S)) \neq \emptyset$. 
\end{enumerate}
The language accepted by a $\regmtl$ formula $\varphi$ is given by 
$L(\varphi)=\{\rho \mid \rho, 0 \models \varphi\}$. \\
\noindent \emph{Example 1}. 
Consider the formula $\varphi=a \uregm_{(0,1), ab^*} b$. Then $\re{=}ab^*$, and the subformulae 
of interest are $a,b$. 
For $\rho{=}(\{a\},0)(\{a,b\},0.3)(\{a,b\},0.99)$, 
 $\rho, 1 \models \varphi$, since $a {\in} \sigma_2, b {\in} \sigma_3$, $\tau_3 {-} \tau_1 {\in} (0,1)$ and 
 $a \in [\mathsf{Seg}(\{a,b\}, 1, 3)]^{\singl} \cap L(ab^*)$. 
On the other hand, for the word $\rho=(\{a\},0)(\{a\},0.3)(\{a\},0.5)(\{a\},0.9)(\{b\},0.99)$, we know that 
$\rho, 1 \nvDash \varphi$, since even though $b \in \sigma_5, a \in \sigma_i$ for $i <5$, 
$[\mathsf{Seg}(\{a,b\}, 1, 5)]^{\singl}=aaa$ and $aaa \notin L(ab^*)$. \\
\noindent \emph{Example 2}.
Consider the formula $\varphi=\regm_{(0,1)}[\neg \regm_{(0,1)}a]$.
% \begin{enumerate}
 %\item 
 For the word $\rho=(\{a,b\}, 0)$ $(\{a,b\},0.91)(\{a\},1.2)$, 
 to check $\varphi$ at position 1, we check position 2 of the word, since  $\tau_2-\tau_1 \in (0,1)$. The formulae of interest for marking 
 is $\{\neg \regm_{(0,1)}a\}$. Position 2 is not marked, since 
 $\rho,2 \models \regm_{(0,1)}a$.  
 Then $[\mathsf{TSeg}(\Ss, (0,1), 1)]^{\singl}=\emptyset \notin L(\neg \regm_{(0,1)}a)$. However, for the word
   $\rho=(\{a,b\}, 0)$ $(\{a,b\},0.91)(\{b\},1.1)$,
 $\rho, 1 \models \varphi$, since position 2 is marked with 
$\neg \regm_{(0,1)}a$, and 
 $\neg \regm_{(0,1)}a \in L(\neg \regm_{(0,1)}a) \cap [\mathsf{TSeg}(S, (0,1), 1)]^{\singl}$.

%\item  For   $\rho=(\{a,b\}, 0)(\{b\},0.6)(\{a,b\},0.91)(\{a\},1.7)$, positions 2,3 are marked.
%Position 2 is marked $\{\}$
%and position 3 is marked $a$. 
%$\mathsf{TSeg}(S, (0,1), 1) {=}\{\}a$. Thus, no word obtained 
%from $\mathsf{TSeg}(S, (0,1), 1)$ is in $L(\neg \regm_{(0,1)}a)$.
%Hence $\rho,1 \nvDash \varphi$.  
 %\end{enumerate}
\noindent \emph{Example 3}.
Consider the formula $\varphi=\regm_{(0,1)}[\regm_{(0,1)}a]^*$. \\
 For $\rho=(\{a,b\},0)$$(\{a,b\},0.7)(\{b\},0.98)(\{a,b\},1.4)$, we have 
 $\rho, 1 {\nvDash}  \regm_{(0,1)}[\regm_{(0,1)}a]^*$, since 
point 2 is not marked $\regm_{(0,1)}a$, even though point 3 is.

\noindent{\bf{Generalizing Counting, Pnueli \& Mod Counting Modalities}}
The following reductions show that $\regmtl$ subsumes most of the extensions of $\mtl$ studied in the literature.\\
%\begin{itemize}
 (1) \textbf{Threshold Counting} constraints \cite{count}, \cite{cltl}, \cite{fossacs16} specify the number of times a property holds within some time region is at least (or at most) $n$. These  can be expressed in $\regmtl$:
  (i) $\cnt^{\ge n}_I \varphi \equiv \reg_I(\re_{th})$,
  (ii) $ \phi_1 \UT_{I,\varphi \ge n } \phi_2 \equiv \phi_1\ureg_{I,\re_{th}} \phi_2$, where 
  %$\re_{th}=true^*(\varphi.true^*.\ldots.\varphi.true^*)^n$.\\
  $\re_{th}={true^*\! \underbrace{\varphi.true^*.\ldots.\varphi.true^*}_\text{$n$ times}}$. \\
  (2) \textbf{Pnueli Modalities}  $\cite{count}$, which  enhance the expressiveness of $\mitl$ in continuous semantics preserving the complexity, can be 
  written in $\regmtl$: 
  $\mathsf{Pn}(\phi_1,\phi_2,\ldots,\phi_k)$ can be written as  $\reg_I(true^*.\phi_1.true^*\phi_2.\ldots.true^*.\phi_k.true^*)$. \\
  (3) \textbf{Modulo Counting} constraints \cite{DBLP:conf/lics/BaziramwaboMT99}, \cite{atva10}
  specify the number of times a property holds modulo $n \in \mathbb{N}$, in some region. 
   We extend these to the timed setting by proposing two modalities $\mcnt^{k\%n}_I$ and 
   $\UM_{I,\varphi=k\%n}$. $\mcnt^{k\%n}_I \varphi$ checks if the number of times $\varphi$ 
  is true in interval $I$ is   $M(n)+k$, where $M(n)$ denotes a non-negative integer multiple of $n$, and 
  $0 \leq k \leq n-1$, while $\varphi_1 \UM_{I,\#{\psi}=k\%n} \varphi_2$ when asserted at a point $i$, checks 
  the existence of $j >i$ such that $\tau_j-\tau_i \in I$,  $\varphi_2$ is true at $j$, $\varphi_1$ holds 
  between $i, j$,   and the number of times $\psi$ is true 
  between $i, j$ is $M(n)+k$, $0 \leq k \leq n-1$.   As an example, %$\varphi=\wB(a \rightarrow \mcnt_{(0,1)}^{0\%2}b)$ says that whenever 
%there is an $a$ at  time point $\tau$, the number of $b$'s in interval $(\tau, \tau+1)$ is even. 
$\psi=true \UM_{(0,1), \#b=1\%2}(a \vee b)$, when asserted at a point $i$, checks the existence 
of a point $j >i$ such that $a$ or $b \in \sigma_j$, $\tau_j-\tau_i \in (0,1)$, 
 and the number of points between $i, j$ where $b$ is true is odd.  
  Both these modalities can be rewritten equivalently in $\regmtl$ as follows:
    $\mcnt^{k\%n}_I \varphi \equiv \reg_I(\re_{mod})$ and 
   $ \phi_1 \UM_{I,\varphi= k \% n } \phi_2 \equiv \phi_1\ureg_{I,\re_{mod}} \phi_2$
    where
  $\re_{mod} = ([\underbrace{(\neg \varphi)^*.\varphi.\ldots.(\neg \varphi)^*.\varphi}_\text{$n$ times}]^*.[\underbrace{(\neg \varphi)^*.\varphi.\ldots.(\neg \varphi)^*.\varphi}_\text{$k$ times}]$. 
 The extension of $\mtl$ ($\mitl$) with only $\UM$ 
is denoted $\mtl+\UM$ ($\mitl+\UM$) while $\mtl+\mcnt$ ($\mitl+\mcnt$) denotes the extension
using  $\mcnt$.

\vspace{-.4cm}
\section{Satisfiability of $\regmtl$ and  Complexity}
  The main results of this section are as follows.  
  \begin{theorem}
 \label{mitl-ureg}
% \begin{enumerate}
(1) Satisfiability of $\regmtl$ is decidable.
(2)  Satisfiability of $\mitl+\UM$ is $\mathsf{EXPSPACE}$-complete.
 (3)  Satisfiability of $\mitl+\ureg$ is in $\mathsf{2EXPSPACE}$.\\
 (4)  Satisfiability of $\mitl+\mcnt$ is $\bf{F}_{\omega^{\omega}}$-hard.
  % \end{enumerate}
 \end{theorem}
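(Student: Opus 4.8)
All four parts rest on a single equisatisfiable reduction $\varphi \mapsto \widehat\varphi$ from $\regmtl$ to $\mtl$, after which decidability and the elementary upper bounds are read off from the size and the interval type of $\widehat\varphi$, while the non-elementary lower bound of part~(4) is proved separately. For part~(1) I would treat each occurrence of a guarded modality $\reg_I\re(\Ss)$ (and symmetrically $\ureg_{I,\re(\Ss)}$) as requiring, at every position $i$, that the $\mathcal P(\Ss)$-word read off the positions of the window $I$ (resp. the segment up to the $\ureg$-witness) lie in $L(\re)$. Fix a DFA $\mathcal D_\re$ for each $\re$. Because the modality must be evaluated at \emph{every} position, the obligations started at different positions overlap, so the reduction must run one copy of $\mathcal D_\re$ per starting position simultaneously; I would encode this as the run-tree of an alternating automaton that \emph{restarts} $\mathcal D_\re$ at each point, and then flatten that tree into fresh witness propositions. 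The crucial device is oversampling: insert new (non-action) points at the integer time-shifts of existing points so that the endpoints of each window $I$ become visible positions, and then write $\mtl$ constraints that (a) propagate each $\mathcal D_\re$-state to the next visible position according to the marking there, and (b) read off acceptance exactly at the far endpoint of $I$. Equisatisfiability, with a temporal projection back to the original alphabet, reduces $\regmtl$-satisfiability to $\mtl$-satisfiability, which is decidable by Theorem~\ref{thm-basic}.

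\textbf{Parts (2) and (3): elementary upper bounds.} The point of restricting to non-punctual intervals is that the same reduction now yields an $\mitl$ formula: with non-punctual $I$ the oversampling markers can be placed so that no punctual constraint is ever needed, and $\mitl$ satisfiability is $\mathsf{EXPSPACE}$-complete. For part~(2) the decisive observation is that the rational expression $\re_{mod}$ underlying $\UM$ has a DFA that is merely a cycle of $O(n)$ states, and that the count in $\UM$ is taken over a \emph{positionally} bounded segment whose only timing constraint is on its far endpoint; hence $\widehat\varphi$ is an $\mitl$ formula of size polynomial in $|\varphi|$ and in the moduli, so satisfiability lands in $\mathsf{EXPSPACE}$. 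The matching lower bound is immediate from $\mitl \subseteq \mitl+\UM$ and the $\mathsf{EXPSPACE}$-hardness of $\mitl$, giving completeness. For part~(3) the only change is that a \emph{general} starred expression has a DFA of size exponential in $|\re|$; the encoding is polynomial in that DFA, so $\widehat\varphi$ is an $\mitl$ formula of size exponential in $|\varphi|$, and running the $\mathsf{EXPSPACE}$ procedure on it gives the stated $\mathsf{2EXPSPACE}$ bound.

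\textbf{Part (4): non-elementary lower bound.} Here I would reduce from a problem already known to be $\mathbf F_{\omega^\omega}$-hard, namely the halting problem for insertion channel machines with emptiness testing (the $\icmet$/$\iecm$ model), which underlies the non-primitive-recursive lower bound for $\mtl$ of Theorem~\ref{thm-basic}. A machine run is encoded as a timed word in which successive configurations are one time unit apart and channel contents are copied from one configuration to the next by a \emph{punctual} ``distance exactly $1$'' matching. The crux, and the main obstacle, is to recover this punctual matching using only non-punctual intervals together with $\mcnt$. The gadget I would use exploits that $\mcnt$ couples modulo counting to a timing window at \emph{both} endpoints: counting a marker modulo $2$ over $(0,1)$ and over $(0,1]$ and asserting that the two residues differ by $1$, i.e.\ the disjunction $\bigvee_{r}\bigl(\mcnt_{(0,1)}^{r\%2}\psi \wedge \mcnt_{(0,1]}^{(r+1)\%2}\psi\bigr)$, forces an odd number of marked events at distance \emph{exactly} $1$ — a punctual assertion built entirely from non-punctual windows. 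By arranging the encoding so that at most one relevant event sits at each integer offset, ``odd'' becomes ``exactly one'', and the punctual copy-and-compare used in the channel-machine encoding is simulated in $\mitl+\mcnt$. Transferring the $\mathbf F_{\omega^\omega}$-hardness of the source problem through this encoding yields part~(4). The contrast with part~(2) is exactly that $\UM$ constrains only the far endpoint of its segment and so cannot form such a two-window difference, which is why modulo counting is cheap for $\UM$ but lethal for $\mcnt$.
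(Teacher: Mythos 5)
Your strategy coincides with the paper's on all four parts: flatten, restart a DFA of $\re$ at every position, oversample, reduce to $\mtl$ (resp.\ $\mitl$), and prove (4) by encoding $\mathsf{ICMET}$ reachability; the complexity accounting for (2) and (3) also matches (for (3) the paper actually produces a \emph{doubly} exponential $\mitl$ formula because it enumerates merge sequences, and then invokes the $\mathsf{PSPACE}$ fragment $\mitl[\until_{0,\infty},\since]$, but either bookkeeping lands in $\mathsf{2EXPSPACE}$). Two points deserve flagging. First, your part (1) sketch under-specifies the two devices that make the encoding work: (i) the unboundedly many runs started at different positions must be \emph{merged} whenever two of them reach the same DFA state --- this is what bounds the number of tracked threads by the number of DFA states and lets them be coded by finitely many propositions; and (ii) ``read off acceptance exactly at the far endpoint of $I$'' is not directly expressible in $\mtl$, since a formula anchored at $\tau_v$ cannot identify the last action point before $\tau_v+u$ while preserving the identity of the thread it has been following; the paper resolves this by additionally oversampling at integer points marked $c_0,\dots,c_{max-1}$ and stitching a forward run (from $\tau_v+l$ to the unique marker $c_{j\oplus u}$) to a backward run (from $\tau_v+u$ back to $c_{j\oplus u}$) at a designated linking thread. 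Second, your part (4) gadget genuinely differs from the paper's, and both are sound: you take the two non-punctual windows $(0,1)$ and $(0,1]$ at a \emph{single} anchor and force their mod-$2$ residues to differ, thereby recovering the punctual assertion ``some marked event at distance exactly $1$,'' after which the known punctual $\mathsf{ICMET}$ encoding applies; the paper instead asserts the \emph{same} window $(0,2k+2)$ at two \emph{consecutive} points $v,w$ and forces equal parities, which yields an odd (hence nonzero) number of copies somewhere in $[\tau_v+2k+2,\tau_w+2k+2]$, building the insertion slack directly into the copy step. Your variant remains correct because forcing exact copies still permits arbitrary insertions between them, which is all the insertion-channel reduction requires.
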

  %The proof of Theorem \ref{mitl-ureg}.1 is given below.
  % follows from Lemmas \ref{elm-reg} and \ref{elm-ureg}, and from Theorem \ref{thm-basic}. 
Details of Theorems \ref{mitl-ureg}.2, \ref{mitl-ureg}.3, \ref{mitl-ureg}.4 can be found in Appendices \ref{app:th-um}, \ref{app:th-ureg} and  \ref{app:th-ack}. 
\begin{theorem}
\label{thm:exp}
  $\mtl+\ureg \subseteq \mtl+\reg$, $\mtl+\UM \subseteq \mtl+\mcnt$. 
  \end{theorem}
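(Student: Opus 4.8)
The plan is to prove both inclusions by structural induction on the formula, rewriting one $\ureg$ (resp.\ $\UM$) occurrence at a time while assuming its arguments and the formulae in $\Ss$ have already been translated into the target fragment. In both cases the essential difficulty is identical: $\ureg$ and $\UM$ segment the word \emph{positionally} --- the constraint is imposed on the points strictly between the current position $i$ and the witness $j$ --- whereas $\reg_I$ and $\mcnt_I$ segment it \emph{temporally}, inspecting every point whose timestamp lies in $I$ relative to $\tau_i$. Converting the positional ``up to the witness $j$'' segmentation into fixed-interval segmentation, without knowing $\tau_j$ in advance, is the heart of the argument.

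For $\mtl+\ureg\subseteq\mtl+\reg$ I would first fold the left argument into the expression: replacing $L(\re)$ by $L(\re)\cap L(\varphi_1^{*})$ lets us assume the modality has the shape $\mathit{true}\,\ureg_{I,\re'}\varphi_2$. To realise this intersection, and to side-step the existential singletonization $[\cdot]^{\singl}$, I move to a \emph{type alphabet}: replace $\Ss\cup\{\varphi_1,\varphi_2\}$ by the maximally consistent conjunctions $\chi_T$, which are mutually exclusive, so that at each point exactly one $\chi_T$ holds and $[\cdot]^{\singl}$ becomes deterministic, with $\re$ rewritten over this alphabet verbatim. Now fix a DFA for $\re'$ and split the segment $(i,j)$ at the left endpoint of $I=\langle a,b\rangle$: for each DFA state $q$ I assert $\reg_{(0,a]}(\re'_{\mathrm{init}\to q})$ (the pre-interval points drive the automaton from its initial state to $q$, all satisfying $\varphi_1$) conjoined with $\reg_I\big(\re'_{q\to F}\cdot\chi_{\varphi_2}\cdot\mathrm{All}^{*}\big)$, where the latter seeks a prefix of the $I$-points carrying $q$ to a final state, immediately followed by a $\varphi_2$-point (the witness $j$, automatically at a time in $I$) and arbitrary continuation. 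Disjoining over $q$ reconstitutes exactly ``$\re'$ holds on $(i,j)$, with $\varphi_2$ at $j$ and $\tau_j-\tau_i\in I$'', and every formula produced lies in $\mtl+\reg$.

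For $\mtl+\UM\subseteq\mtl+\mcnt$ the same splitting would leave an inner ``count of $\psi$ up to the witness'', which is a prefix count, not a whole-window count, and hence not directly a $\mcnt_I$. I avoid this with an additive identity special to counting: since the positions $(i,j)$, the position $j$, and the positions after $j$ partition the suffix of $i$, the number of $\psi$'s strictly between $i$ and $j$ equals $T-c$, where $T$ is the total number of $\psi$'s after $i$ and $c$ the number from $j$ onwards. Both $T$ and $c$ are counts over \emph{unbounded} suffixes, so keeping $\varphi_1$ on the left of an ordinary until and disjoining over the residue $T\bmod n$ gives
\[
\bigvee_{T=0}^{n-1}\Big[\mcnt^{T\%n}_{(0,\infty)}\psi \;\wedge\; \varphi_1\until_I\big(\varphi_2\wedge \mcnt^{(T-k)\%n}_{[0,\infty)}\psi\big)\Big].
\]
Here the until supplies the positional ``$\varphi_1$ until $\varphi_2$ with $\tau_j\in I$'' part, while the two $\mcnt$ modalities pin the modular prefix count without ever referring to the variable time $\tau_j$; the result stays in $\mtl+\mcnt$.

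I expect the main obstacle to be the low-level reconciliation of positional and temporal segmentation at the boundaries under weakly monotonic time. Positions sharing the timestamp $\tau_i$ --- or, for the rational construction, the split time $\tau_i+a$, or the witness time $\tau_j$ --- are counted positionally by $\ureg$/$\UM$ yet fall on the wrong side of the open/closed ends of $\reg_{(0,a]}$, $\reg_I$, $\mcnt_{(0,\infty)}$ and $\mcnt_{[0,\infty)}$. Treating these clusters of equal-time points correctly --- isolating the time-$\tau_i$ block with the $\nex$ modality and choosing the endpoints of every auxiliary interval consistently --- is where the bookkeeping concentrates; over strictly monotonic words both constructions above are already exact.
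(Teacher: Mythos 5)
Your proposal is correct and follows essentially the same route as the paper's proof: the $\ureg$ case uses the same DFA-state decomposition of $\re$ into prefix/suffix pairs split at the left endpoint of $I$ (the paper's Lemma on writing $R=\bigcup_i R^i_1.R^i_2$, yielding $\bigvee_i \reg_{(0,l)}R^i_1 \wedge \reg_{[l,u)}(R^i_2.\phi_2.\Sigma^*)$ after folding $\varphi_1$ into the expression and passing to an exclusive/type alphabet), and the $\UM$ case uses the same differencing of two unbounded suffix counts. The only cosmetic divergence is that you absorb the witness point into a $\mcnt_{[0,\infty)}$ count where the paper instead splits into two cases according to whether $\psi$ holds at the witness (adjusting the residue by $1$); both handle the same double-counting issue.
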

 Theorem \ref{thm:exp} shows that the $\reg$ modality can capture $\ureg$ (and likewise, $\mcnt$ captures 
 $\UM$). Thus, $\regmtl \equiv \mtl+\reg$. Observe that any $\re$  can be decomposed into finitely many factors, i.e. 
 $\re = \sum \limits_{i=1}^n R^i_1.R^i_2$. 
 Given $true \ureg_{[l,u),\re} \phi_2$,  
 we assert  $R^i_1$
 within interval $(0,l]$ and $R^i_2$ in the prefix of the 
 latter part within $[l,u)$, followed by $\phi_2$. 
$true \ureg_{[l,u),\re} \phi_2 \equiv \bigvee \limits_{i\in \{1,2\ldots,n\}}\reg_{(0,l)}R^i_1 \wedge \reg_{[l,u)}R^i_2.\phi_2.\Sigma^*$. 
 The proofs can be seen in Appendix \ref{app:exp}.
% \begin{lemma}
% 		\label{reg-exp}
% 		Given any regular expression $R$, there exist 
% 	finitely many regular expressions $R^1_1, R^1_2, \dots, R^n_1, R^n_2$
% 		such that $R=\bigcup_{i=1}^n R^i_1. R^i_2$. That is, 
% for any string $\sigma \in R$ and for any decomposition of $\sigma$ as  $\sigma_1.\sigma_2$, 
% there exists some $i\le n$ such that $\sigma_1 \in R^i_1$ and  $\sigma_2 \in R^i_2$.
% 		\end{lemma}

 \subsection{Proof of Theorem \ref{mitl-ureg}.1}
 \label{sec:eqsat-dfa}
 
\noindent {\bf Equisatisfiability} We will use the technique of equisatisfiability modulo oversampling \cite{time14}
in the proof of Theorem \ref{mitl-ureg}. 
 Using  this technique, formulae $\varphi$ in one logic (say $\regmtl$) can be   
 transformed into formulae $\psi$ over a simpler logic (say $\mtl$) such that 
 whenever $\rho \models \varphi$ for a timed word $\rho$ over alphabet $\Sigma$, 
 one can construct a timed word $\rho'$ over an extended set of positions and an extended alphabet 
 $\Sigma'$ such that $\rho' \models \psi$ and vice-versa \cite{time14}, \cite{fossacs16}.  
   %The extension is applied both to  the domain of $\rho$ as well as 
 %the alphabet $\Sigma$ and is called \emph{oversampling}. 
 In \emph{oversampling}, (i) $dom(\rho')$ is 
 extended by adding  some extra positions between the first and last point of $\rho$, (ii) the labeling of a position 
 $i \in dom(\rho)$ is over the extended alphabet $\Sigma' \supset \Sigma$ and can be a superset of the previous labeling over $\Sigma$, 
 while  the new positions are labeled using only the new symbols $\Sigma'-\Sigma$.
% This allows us to come up with a formula $\psi$ such that  whenever $\rho \models \varphi$, we have 
 %$\rho' \models \psi$.  
 We can recover $\rho$ from $\rho'$ by erasing the new points and the new symbols.
   A restricted use of oversampling, when one only extends 
 the alphabet and not the set of positions of a timed word $\rho$ is called \emph{simple extension}. 
 In this case, if $\rho'$ is a simple extension of $\rho$, then $dom(\rho)=dom(\rho')$, and by erasing the new symbols from $\rho'$, we obtain $\rho$.
  See Figure \ref{eg-os} for an illustration. 
  The formula $\psi$ over the larger alphabet $\Sigma' \supset \Sigma$ such that $\rho' \models \psi$ iff $\rho \models \varphi$
    is said to be equisatisfiable modulo temporal projections to $\varphi$.  In particular, $\psi$ is equisatisfiable to 
 $\varphi$  modulo simple extensions or modulo oversampling, depending on how the word $\rho'$ is constructed  
 from the word $\rho$. \begin{figure}[h]
\includegraphics[scale=0.3]{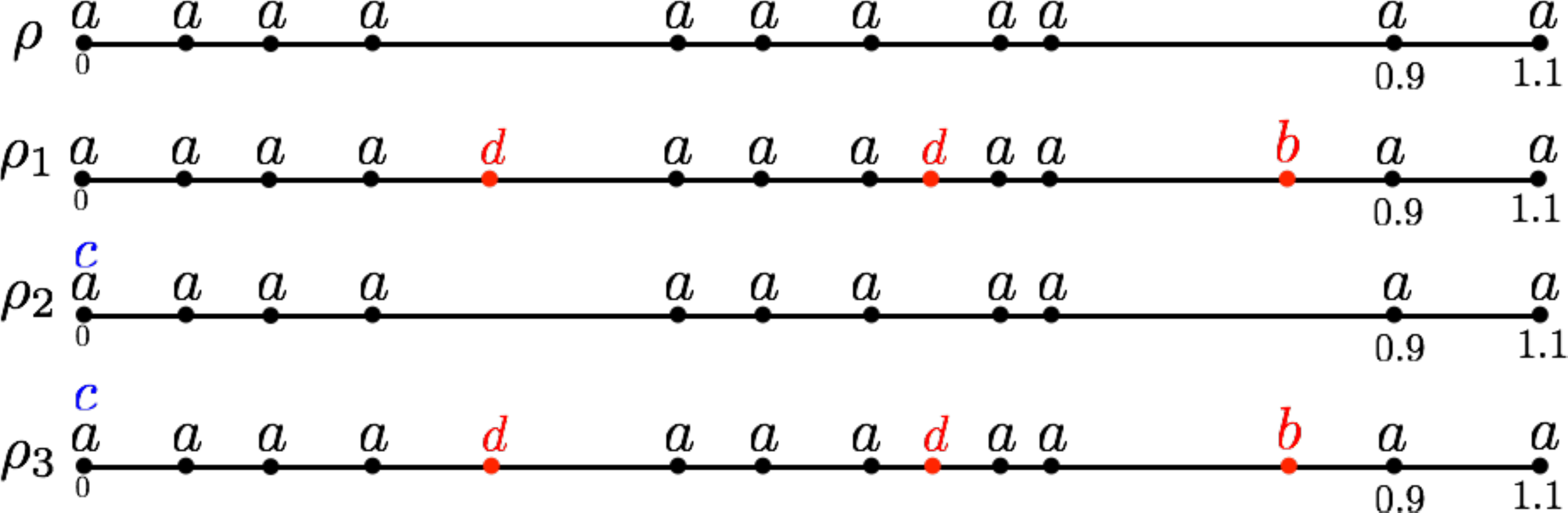}
\caption{$\rho$ is over  $\Sigma=\{a\}$ and satisfies $\varphi=\Box_{(0,1)}a$.
$\rho_1$ is an oversampling of $\rho$ over an extended alphabet $\Sigma_1=\Sigma \cup \{b,d\}$ and 
satisfies  $\psi_1=\Box(b \leftrightarrow \neg a) \wedge (\neg b \until_{(0,1)} b)$. The red points in $\rho_1$ are the oversampling  points.
$\rho_2$ is a simple extension of $\rho$ over an extended alphabet $\Sigma_2=\Sigma \cup \{c\}$ and 
satisfies  $\psi_2=\Box(c \leftrightarrow \Box_{(0,1)}a) \wedge c$. It can be seen that 
$\psi_1$ is equivalent to $\varphi$ modulo oversampling, and $\psi_2$ is equivalent to $\varphi$ modulo simple extensions
using the (respectively oversampling, simple) extensions $\rho_1, \rho_2$ of $\rho$. However, $\rho_3$ above, obtained 
by merging $\rho_1, \rho_2$, eventhough an oversampling of $\rho$, is not a good model for the formula 
$\psi_1 \wedge \psi_2$ over $\Sigma_1 \cup \Sigma_2$. However, we can relativize 
$\psi_1$ and $\psi_2$ with respect to $\Sigma$ as 
$\Box(act_1 {\rightarrow} (b {\leftrightarrow} {\neg a})) {\wedge} [(act_1 {\rightarrow} {\neg b}) {\until_{(0,1)}} (b {\wedge} act_1)]$, and 
$\Box(act_2 \rightarrow (c \leftrightarrow \Box_{[0,1)} (act_2 \rightarrow a))) \wedge (act_2 \wedge c)$
where 
$act_1=\bigvee \Sigma_1, act_2=\bigvee \Sigma_2$.
 The relativized formula $\kappa=Rel(\psi_1, \Sigma) \wedge Rel(\psi_2, \Sigma)$ 
is then equisatisfiable to $\varphi$ modulo oversampling, and $\rho_3$ is indeed an oversampling  of $\rho$ satisfying 
$\kappa$. This shows that while combining formulae $\psi_1, \psi_2$ which are equivalent to formulae $\varphi_1, \varphi_2$
modulo oversampling, we need to relativize $\psi_1, \psi_2$ to obtain a conjunction which 
will be equisatisfiable  to $\varphi_1 \wedge \varphi_2$ modulo oversampling. See \cite{time14} for details.
}
\label{eg-os}
\end{figure}
The oversampling technique is used in the proofs of 
parts \ref{mitl-ureg}.1,
\ref{mitl-ureg}.3 and \ref{mitl-ureg}.4.
\vspace{-.3cm}
 \paragraph*{Equisatisfiable Reduction : $\regmtl$ to $\mtl$}
 Let $\varphi$ be a  $\regmtl$ formula. To obtain equisatisfiable $\mtl$ formula $\psi$, we do the following.
 
  \begin{enumerate}
 % 	\item Convert $\varphi$ to ExNF. 
 	\item  We ``flatten'' the reg modalities to simplify the formulae, eliminating nested 
 	reg modalities. Flattening results in extending the alphabet.
 	Each of the modalities $\reg_I, \ureg$ that appear in the formula $\varphi$ are replaced with fresh witness propositions to obtain a flattened formula.  	  	 For example, 
  	 if $\varphi=\reg_{(0,1)}[a \ureg_{(1,2),\reg_{(0,1)}(a+b)^*}b]$, then flattening 
  	 yields the formula $\varphi_{flat}=w_1 \wedge \wB[w_1 \leftrightarrow \reg_{(0,1)} w_2] \wedge \wB[w_2 \leftrightarrow a \ureg_{(1,2),w_3}b] \wedge \wB[w_3 \leftrightarrow  \reg_{(0,1)}(a+b)^*]$, where $w_1, w_2, w_3$ are fresh witness propositions. 
  	 Let $W$ be the set of fresh witness propositions such that $\Sigma \cap W =\emptyset$. 
  	 	 After flattening, the modalities $\reg_I, \ureg$ appear only in this simplified form as $\wB[w \leftrightarrow \reg_I, \ureg]$. 
  	 	 This simplified appearance of reg modalities are called 
  	 	 \emph{temporal definitions} 
  	 	 %Temporal definitions (referred as $T$ or $T_i$ when there is more than one) are 
  	 	and have the form 
  	$\wB[w \leftrightarrow \reg_I \at]$ or $\wB[w \leftrightarrow x \ureg_{I',\at} y]$, where $\at$ is a rational expression over $\Sigma \cup W_i$, $W_i$ being  the set of fresh witness propositions used in the flattening, 
  	   	 and $I'$ is either a unit length interval or an unbounded interval. 
  	   	
  	   	\item The elimination of reg modalities is achieved by obtaining equisatisfiable $\mtl$ formulae $\psi_i$ 
  	   	over $X_i$, possibly a larger set of propositions
  	   	 than $\Sigma \cup W_i$ corresponding to
  	   	 each temporal definition $T_i$
  	   	 of $\varphi_{flat}$.  Relativizing these $\mtl$ formulae and conjuncting them, we obtain an $\mtl$   
  	   	 formula $\bigwedge_i Rel(\psi_i, \Sigma)$  that is equisatisfiable to $\varphi$ (see Figure \ref{eg-os} for relativization).

%  	   \item 	Consider any temporal definition $T$ of the form  
%  	   	$\wB[w \leftrightarrow \reg_I \at]$ or $\wB[w \leftrightarrow x \ureg_{I',\at} y]$
%  	   	and a timed word $\rho$ over $\Sigma \cup W$. 
%  	   	Each of the rational expression $\at$ has a corresponding minimal DFA recognizing it. To assert $w$ 
%  	at any point of $\rho$, we first check the truth of $\at$ at that point, by running the minimal DFA 
%  	corresponding to $\at$ starting at each point of $\rho$. This requires us to remember certain information 
%  	about the runs of the DFA at each point of $\rho$, giving rise to a \emph{simple extension} $\rho'$ of 
%  	of $\rho$. To mark $w$ correctly, we need to check the truth of $\at$ in a particular time interval $I$ starting at each point 
%  	$i$ of $\rho$. This might require certain time points that are not already present among the time stamps in $dom(\rho)$, leading to \emph{oversampling}. 
%  	We oversample $\rho'$ by introducing some extra points,  and mark a point of $\rho'$ 
%  	   with $w$  iff $\reg_I \at$ or 
%  	   	$x \ureg_{I',\at} y$ holds good at that point. The construction of the simple extension $\rho'$ is right below,  while 
%  	   	 details of the elimination of $\reg_I \at$,  
%  	   	$x \ureg_{I',\at} y$ using oversampling are in the lemmas \ref{elm-reg} and \ref{elm-ureg}.
 	   	  \end{enumerate}
 	   	  The above steps are routine \cite{time14}, \cite{fossacs16}. What remains is to handle the temporal definitions. 
    \vspace{-.2cm}
  \paragraph*{Embedding the Runs of the DFA}
  %Construction of Simple Extension $\rho'$}
  \label{reg-xtnd}
    For any given $\rho$ over $\Sigma \cup W$, where $W$ is the set of witness propositions used in the temporal definitions 
  $T$ of the forms $\wB[w \leftrightarrow \reg_I \at]$ or $\wB[w \leftrightarrow x \ureg_{I',\at} y]$, 
  the rational expression $\at$ has a corresponding minimal DFA recognizing it. We
  define an LTL formula $\mathsf{GOODRUN}(\phi_e)$ which takes a formula $\phi_e$ as a parameter 
  with the following behaviour. $\rho, i \models \mathsf{GOODRUN}(\phi_e)$ iff for all $k >i$, 
  $(\rho, k \models \phi_e) \rightarrow (\rho[i,k] \in L(\at))$. To achieve this,  
%  To assert $w$ at any point of $\rho$, we first check the truth of $\at$ at that point, by running the minimal DFA 
%	corresponding to $\at$ starting at each point of $\rho$.
%  For this, we construct a simple extension $\rho'$ that marks points of $\rho$ with the run information of the minimal DFA 
%  accepting $\at$. 
  we use two new sets of symbols $\mathsf{Threads}$ and $\mathsf{Merge}$ for this information. 
    This results in the extended alphabet   
   $\Sigma \cup W  \cup \mathsf{Threads} \cup \mathsf{Merge}$ for the simple extension $\rho'$ of $\rho$. The behaviour of $\mathsf{Threads}$ and 
   $\mathsf{Merge}$ are explained below. 
  
    Consider $\at=\re(\Ss)$. 
%    The truth of $\at$ at a point depends on sets of formulae 
%	 from $\Ss$, and hence the minimal DFA to check the truth of $\at$ is over $2^{\Ss}$. The rational expression $\at'$
%	 accepted by this DFA is over $2^{\Ss}$ and is equivalent to $\at$ (see Appendix \ref{app:exnf}). 
	Let $\mathcal{A}_{\at} = (Q,2^{\Ss},\delta, q_1, Q_F)$ be the minimal DFA for $\at$ 
	and let $Q = \{q_1,q_2,\ldots,q_m\}$. Let $\In=\{1,2,\dots,m\}$ be the indices of the states.
%	 We mark every point $i$ in $dom(\rho')$ with $w$ or $\neg w$ depending on the truth 
%	 of $\at$ at $i$. To do this, we ``run''  $\mathcal{A}_{\at}$ starting 
%	 from each point $i$ in $dom(\rho')$. At any point $i$ of $dom(\rho')$, we have the 
%	 states reached in $\mathcal{A}_{\at}$ starting from all the $i-1$ prefixes of $\rho'$, and 
%	 we also start a new thread which starts a run at position $i$. 
%Since the number of states of $\mathcal{A}_{\at}$ is a fixed number $m$, we ``merge'' threads $i, j$ if the states reached 
%	 at points $i, j$ are the same, and maintain the information of the merge. 
Conceptually, we consider multiple runs of $\mathcal{A}_\at$ with a new run (new thread) started at each point in $\rho$. $\mathsf{Threads}$ records the state of each previously started run. At each step,  each thread is updated from it previous value according to the transition function $\delta$ of $\mathcal{A}_\at$ and also augmented with a new run in initial state. Potentially, the number of threads would grow unboundedly in size but notice that once two runs are the same state at position $i$ they remain identical in future.  Hence they can be merged into single thread (see 
Figure\ref{fig:runinfo-1}). As a result, $m$ threads suffice. We record whether threads are merged in the current state using variables $\mathsf{Merge}$. An LTL formula records the evolution of $\mathsf{Threads}$ and $\mathsf{Merge}$ over any behaviour $\rho$.  We can define formula $\mathsf{GOODRUN}(\phi_e)$ in LTL  over 
$\mathsf{Threads}$ and $\mathsf{Merge}$.

		\begin{enumerate}
		\item At each position, let  $\Th_i(q_x)$ be a proposition that denotes that the $i$th thread is active and is in state $q_x$, 
		while $\Th_i(\bot)$ be a proposition that denotes that the $i$th thread is not active. 
		The set $\Threads$ consists of propositions $\Th_i(q_x),\Th_i(\bot)$ for $1 \leq i,x \leq m$.
		\item If at a position $e$, we have $\Th_i(q_x)$ and $\Th_j(q_y)$ for $i < j$, and if $\delta(q_x, \sigma_e)=\delta(q_y, \sigma_e)$, 
		then we can merge the threads $i, j$ at position $e+1$. Let $\merge(i,j)$ be a proposition that signifies that threads $i,j$ have been merged. 
		In this case, $\merge(i,j)$ is true at position $e+1$. Let $\Merge$ be the set of all propositions 
		$\merge(i,j)$ for $1 \leq i <j \leq m$. %At most $m$ threads can be running at any point $e$ of the word.
		\end{enumerate}
We now describe the conditions to be checked in $\rho'$. 
    \begin{itemize}
    	\item \textbf{Initial condition}($\varphi_{init}$)- At the first point of the word, we start the first thread and initialize all other threads as $\bot$ : 
    	$\varphi_{init} = ((\Th_1(q_1)) \wedge \bigwedge \limits_{1<i\leq m} \Th_i(\bot))$.
    	\item \textbf{Initiating runs at all points}($\varphi_{start}$)- To check the rational expression within an arbitrary interval, we need to start a new run from every point.
    	     	$\varphi_{start} = \wB(\bigvee\limits_{i \le m} \Th_i(q_1))$
    	\item \textbf{Disallowing Redundancy}($\varphi_{no-red}$)- At any point of the word, if $i < j$ and $\Th_i(q_x)$ and $\Th_j(q_x)$ are both true, $q_x \ne q_y$.
    	 $\varphi_{no-red} = \bigwedge \limits_{x \in \In}\wB[\neg \bigvee \limits_{1\le i < j \le m} (\Th_i(q_x) \wedge \Th_j(q_x)) ]$
  
 \begin{figure}[h]
  \includegraphics[scale=0.3]{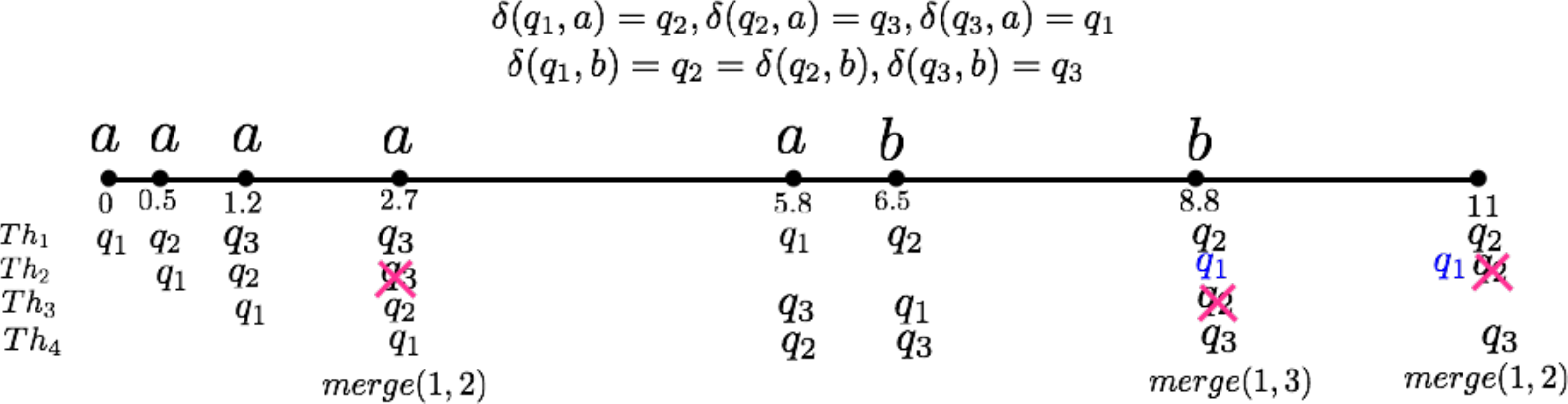} 		
\caption{Depiction of threads and merging. At time point 2.7, thread 2 is merged with 1, since they both 
had the same state information. This thread remains inactive till time  point 8.8,
where it becomes active, by starting a new run in state $q_1$. At time point 8.8, 
thread 3 merges with thread 1, while  at time point 11, thread 2 merges with 1, but is reactivated in state $q_1$.
 }
 \label{fig:runinfo-1}
 \end{figure}

    	\item \textbf{Merging Runs}($\varphi_{\merge}$)- If two different threads $\Th_i,\Th_j (i<j)$  reach  the same state $q_x$ on reading the input at the present point, 
    	then we merge thread $\Th_j$ with $\Th_i$. We remember the merge with the proposition $\merge(i,j)$.
    	 We define a macro $\Next(\Th_i(q_x))$ which is true at a point $e$ if and only if $\Th_i(q_y)$ is true at $e$ and $\delta(q_y, \sigma_e) = q_x$, 
    	 where $\sigma_e \subseteq AP$ is the maximal set of propositions true at  $e$: 
    	% $\Next(\Th_i(q_x))$ is true at $e$ iff thread $\Th_i$  reaches  state $q_x$ after reading the input at $e$.  \\
    	 ${\bigvee \limits_{\{(q_y,prop) \in (Q, 2^{AP}) | \delta (q_y,prop) {=} q_x\}}}[prop {\wedge} {\Th_i(q_y)}]$.

    	    	Let  $\psi(i,j,k,q_x)$ be a formula that says that at the next position, $\Th_i(q_x)$ and $\Th_k(q_x)$ are true for $k>i$, but for all $j<i$, 
    	 $\Th_j(q_x)$ is not. $\psi(i,j,k,q_x)$ is given by \\
    	  	$\Next(\Th_i(q_x)) {\wedge} {\bigwedge \limits_{j<i}} {\neg \Next(\Th_j(q_x))} {\wedge} {\Next(\Th_k(q_x))}$. 
       	 In this case, we merge threads $\Th_i, \Th_k$, and either restart $\Th_k$ in the initial state, or deactivate the $k$th thread at the next position. 
    	 This is given by the formula $\mathsf{NextMerge(i,k)}=\nx[\merge(i,k) \wedge (\Th_k(\bot) \vee \Th_k(q_1)) \wedge \Th_i(q_x)]$.
       	$\varphi_{\merge}=\bigwedge \limits_{x,i,k \in \In \wedge k>i}\wB[\psi(i,j,k,q_x) \rightarrow \mathsf{NextMerge(i,k)}]$.

    	\item \textbf{Propagating runs}($\varphi_{pro},\varphi_{NO-pro}$)- If $\Next(\Th_i(q_x))$ is true at a point, and if for all $j < i$,  $\neg \Next(\Th_j(q_x))$ is true, 
    	 then at the next point, we have $\Th_i(q_x)$.  Let $\mathsf{NextTh}(i,j,q_x)$ denote the formula  $\Next(\Th_i(q_x)) \wedge \neg \Next(\Th_j(q_x))$.
    	     	  The formula 
    	 $\varphi_{pro}$ is given by \\
    	  $ \bigwedge \limits_{i,j \in \In\wedge i<j} \wB[ \mathsf{NextTh}(i,j,q_x){\rightarrow}
    		{\nx [\Th_i(q_x) {\wedge} {\neg \merge(i,j)}]]}$. 
    	    	  If $\Th_i(\bot)$ is true at the current point, then at the next point, either $\Th_i(\bot)$ or $\Th_i(q_1)$. 
    	 The latter condition corresponds to starting a new run on thread $\Th_i$. 
    	     		$\varphi_{NO-pro}{=} \bigwedge \limits_{i\in \In} \wB\{\Th_i(\bot) {\rightarrow} {\nx (\Th_i(\bot) \vee \Th_i(q_1))}\}$
     \end{itemize}
     Let $\mathsf{Run}$ be the formula obtained by conjuncting all formulae explained above.
          Once we construct the simple extension $\rho'$, checking whether the rational expression $\at$ holds in some interval $I$ in the timed word $\rho$, 
          is equivalent to checking that 
%          if a thread $\Th_i$ is at $q_1$ at the first action point in $I$, 
%         then, at the last point in interval $I$, there is some thread $\Th_j$ such that 
%         (i)$\Th_j(q_f)$ is true at this last point, (ii)$\Th_j$ is obtained from $\Th_i$ after some possible merges. 
         if $u$ is the first action point within  $I$,  and if $\Th_i(q_1)$ holds at $u$, then after a series of merges of 
         the form $\merge(i_1,i)$,$\merge(i_2,i_1)$, 
         $\ldots \merge(j,i_n)$, at the last point $v$ in the interval $I$, $\Th_j(q_f)$ is true, 
         for some final state $q_f$. 
         This is encoded as $\mathsf{GOODRUN}(q_f)$. 
         It can be seen that 
         the number of possible sequences of merges are bounded. 
   Figure \ref{fig:runinfo-1}  illustrates the threads and merging.
%       This captures the run information 
%          of $\mathcal{A}_{\at}$, and $\mathsf{Run}$ correctly captures the run information 
%          on $\rho$.   
%         
 We can easily write a 1- $\mathsf{TPTL}$ formula that will check the truth of  $\reg_{[l,u)} \at$ at a point $v$ on the simple extension 
 $\rho'$ (see Appendix \ref{app:1-tptl}). However, to write an $\mtl$ formula that checks the truth of    
 $\reg_{[l,u)} \at$ at a point $v$, we need to oversample $\rho'$ as shown below.  
 \begin{figure}[h]
\includegraphics[scale=0.4]{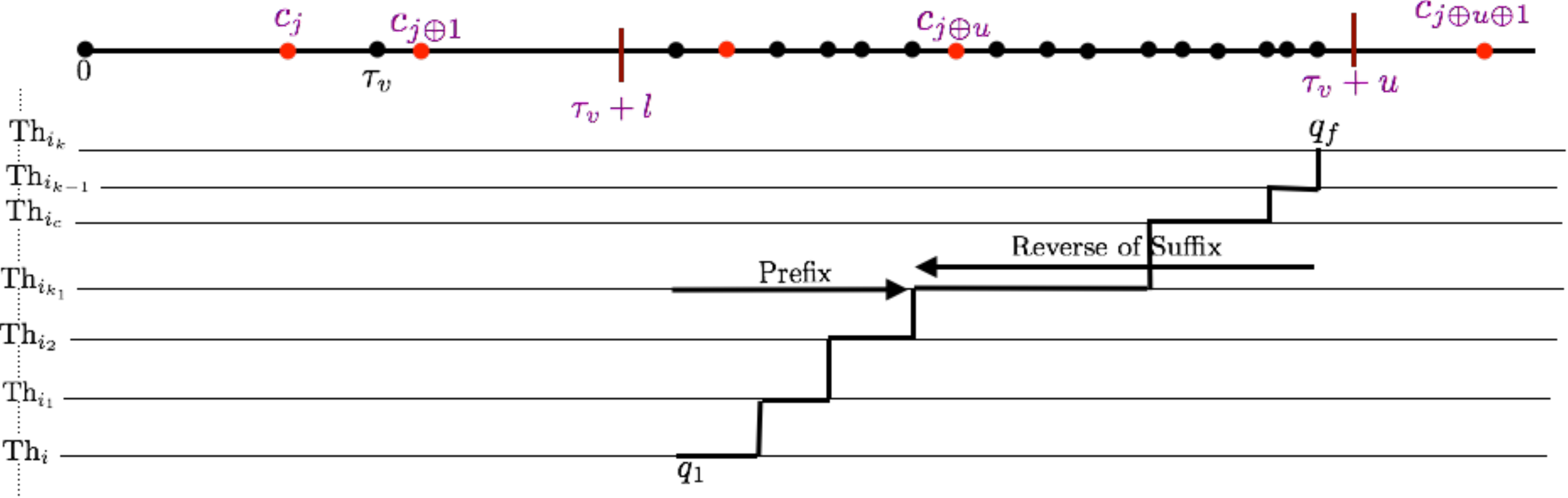}	
\caption{The linking thread at $c_{j \oplus u}$. The points in red are the oversampling integer points, and so are 
$\tau_v+l$ and $\tau_v+u$.}
\end{figure}
\vspace{-.2cm}
	 	\begin{lemma}
 		 		Let $T=\wB[w \leftrightarrow \reg_{I}\at]$  be a temporal definition built from $\Sigma \cup W$. 
 		Then we synthesize a formula $\psi \in \mtl$ 
 		over  $\Sigma \cup W \cup X$ such that 
 		$T$ is equivalent to $\psi$ modulo oversampling.
 \label{elm-reg}	\end{lemma}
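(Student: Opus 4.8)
The plan is to compile the run encoding that we have already set up for $\at$ into a \emph{bounded, metric} check that $\mtl$ can perform, using oversampling to supply the integer ``anchors'' that $\mtl$, lacking the freeze clock of $1\text{-}\mathsf{TPTL}$, cannot otherwise access. Recall that over the simple extension carrying $\Threads$ and $\Merge$, the $\mathsf{LTL}$ formula $\mathsf{Run}$ faithfully maintains, at every point, the states of all runs of the minimal DFA $\mathcal{A}_\at$ started at earlier points, with at most $m$ live threads thanks to merging, and that $\mathsf{GOODRUN}$ reads off acceptance of the run started at a chosen point. Thus checking $\reg_{[l,u)}\at$ at a point $v$ amounts to: start a fresh thread at the first active point of the window $[\tau_v+l,\tau_v+u)$, follow it through the boundedly many merges it undergoes, and require it to sit in a final state at the last active point of the window. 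The only genuinely metric obligations are (i) to locate the boundaries $\tau_v+l$ and $\tau_v+u$, and (ii) to keep every merge event strictly inside the window.

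First I would oversample $\rho'$ to a word $\rho''$ over $\Sigma\cup W\cup X$ by inserting a point at every integer timestamp and, for each $v$, points at the two boundary times $\tau_v+l$ and $\tau_v+u$; the new points carry fresh propositions (say $\mathsf{int},\mathsf{beg},\mathsf{end}$) in $X$, and $\mathsf{act}$ marks the original points. Since the rational expression is read only over original points, I would make the thread update $\Next$ a no-op at non-$\mathsf{act}$ points, so that oversampled points are transparent to $\mathcal{A}_\at$; this forces re-conjoining $\mathsf{Run}$ in the relativized form used for oversampling (Figure~\ref{eg-os}). The boundary markers are exactly what make the window $\mtl$-accessible: because $l,u\in\Nat$, from $v$ the marker $\mathsf{beg}$ is reached by the punctual modality $\fut_{[l,l]}$ and $\mathsf{end}$ by $\fut_{[u,u]}$ (punctual intervals being legal in $\mtl[\until_I]$), so the first active point of the window is the first $\mathsf{act}$-point at or after $\mathsf{beg}$ and the last is the last $\mathsf{act}$-point before $\mathsf{end}$.

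The technical heart, and the reason the integer points are needed, is to verify the run over the window in $\mtl$ even though the boundaries are not integer-aligned. I would split the unit window at the unique integer point $c$ lying inside it, giving a prefix $[\tau_v+l,c)$ and a suffix $[c,\tau_v+u)$. The thread started fresh at the first active window point flows forward and at $c$ is carried by a single \emph{linking} thread in some state $q$ (unique, since $\varphi_{no\text{-}red}$ keeps distinct live threads in distinct states); the run then continues from $q$ over the suffix to the last active point, where it must be final. Because the number of live threads only decreases, the chain of merges linking the starting thread to the linking thread, and the linking thread to the final thread, has length at most $m$; I would enumerate these finitely many merge patterns and, for each, write an $\mtl$ formula asserting the corresponding $\Th(\cdot)$- and $\merge(\cdot,\cdot)$-literals, guarding every merge event to fall within the window. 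Anchoring at $c$ lets the prefix be described from $v$ and the suffix from the integer point $c$ using only integer-endpoint intervals, keeping the whole test in $\mtl$. Conjoining over final states $q_f$ and linking states $q$ yields $\Phi_{\mathrm{check}}$, and I would set $\psi=\mathsf{Run}\wedge\wB[\mathsf{act}\rightarrow(w\leftrightarrow\Phi_{\mathrm{check}})]$, relativized to $\mathsf{act}$.

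Finally I would establish equivalence modulo oversampling in both directions: given $\rho\models T$, build $\rho''$ by adding the integer and boundary points together with the canonical $\Threads,\Merge$ labelling forced by $\mathsf{Run}$, and verify $\rho''\models\psi$; conversely, erasing the $X$-symbols and the oversampled points from any model of $\psi$ recovers a word satisfying $T$. The unbounded case $I=[l,\infty)$ is simpler: there is no right marker, and $\Phi_{\mathrm{check}}$ instead demands that the thread started at the first $\mathsf{act}$-point after $\mathsf{beg}$ reach a final state at the last active point of the word, which $\mathsf{GOODRUN}$ already expresses. I expect the main obstacle to be obligation (iii) above: proving that the finite enumeration of merge chains, combined with the integer-anchored prefix/suffix split, captures exactly acceptance of $\mathcal{A}_\at$ over the window while every metric constraint stays at integer distance; getting the boundary conventions right (the half-open window, and coincidences of active points with $\mathsf{beg}$, $\mathsf{end}$, or $c$) is where the bookkeeping is most delicate.
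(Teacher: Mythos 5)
Your overall architecture is the one the paper uses: oversample with integer-labelled points and with boundary points at $\tau+l$ and $\tau+u$ for every action point $\tau$, reach the window ends by the punctual modalities $\fut_{[l,l]}$ and $\fut_{[u,u]}$, split the run of $\mathcal{A}_{\at}$ at an integer anchor inside the window, and stitch the two halves through a ``linking thread'' while enumerating the boundedly many merge chains. The flaw is in the step you yourself flag as delicate: the claim that ``the suffix \lbrack can be\rbrack described from the integer point $c$ using only integer-endpoint intervals'' does not go through. The right boundary $\tau_v+u$ sits at a \emph{fractional} distance from $c$, so no $\mtl$ constraint anchored at $c$ can locate it; and an untimed forward walk from $c$ ``until the end marker'' also fails, because the oversampling places an end marker at $\tau'+u$ for \emph{every} original point $\tau'$, and arbitrarily many of these foreign markers can lie strictly between $c$ and $\tau_v+u$. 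The same problem infects your phrase ``guarding every merge event to fall within the window'': once you are inside a nested untimed until launched at $\tau_v+l$, you have lost the metric anchor at $v$ and cannot express ``still before $\tau_v+u$.''

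The paper's resolution, which is the one missing ingredient here, is to traverse the suffix \emph{backwards}: from $v$, the formula $\fut_{[u,u]}$ lands exactly on $\tau_v+u$, and from there a nesting of \emph{untimed past} ($\since$) formulae walks back through the reversed merge chain --- final state at the last action point, then the merges in reverse --- until it hits the integer point, which is uniquely identified by its modular label $c_{j\oplus u}$; the forward prefix (from $\tau_v+l$, reached by $\fut_{[l,l]}$) and the backward suffix are then required to agree on the linking thread at that labelled point. The untimed $\since$ is subsequently eliminated by the known reduction to $\mtl[\until_I]$ modulo simple projections, so the result is still future $\mtl$. Two smaller remarks: your ``unique integer point inside the window'' is only correct for unit-length $I$, whereas $\reg_I$ definitions allow arbitrary bounded $I$; uniqueness is recovered not by there being one integer point but by the last integer point before $\tau_v+u$ carrying the unique label $c_{j\oplus u}$ (this is precisely why the cyclic labels $c_0,\dots,c_{max-1}$ are introduced). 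And your unbounded case is right and matches the paper: no suffix walk is needed, hence no integer oversampling there.
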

 	
 	\begin{proof}
 	Lets first consider the case when the interval $I$ is bounded of the form $[l, u)$.  
Consider a point in $\rho'$ with time stamp $\tau_v$.  To assert  $w$ 
at $\tau_v$, 
we look at the 
%run that starts in some thread $\Th_i$ of the 
first action point after time point 
$\tau_v+l$, and check that $\mathsf{GOODRUN}(last(q_f))$ holds, where $last(q_f)$ identifies the 
%a final state $q_f$ is reached  at the 
last action point 
just before $\tau_v+u$.
% in a thread $\Th_j$, such that $\Th_j$ is obtained from $\Th_i$ after a sequence of merges. 
The first difficulty is the possible absence of time points $\tau_v+l$ and $\tau_v+u$. 
To overcome this difficulty, we oversample $\rho'$ by introducing 
points at times $t+l, t+u$, whenever $t$ is a time point in $\rho'$. These new points are labelled with a new proposition 
$\ovs$.  Sadly, $last(q_f)$ cannot be written in $\mtl$. 

%We can write a nested until formula that captures the possible sequence of merges 
%of threads starting at the first action point after $\tau_v+l$; however, to measure 
%the time elapse and check the presence of $q_f$ in a merged thread at the last action point before $\tau_v+u$ is 
%not possible using nested untils. 
To address this, we introduce new time points at every integer 
point of $\rho'$. The starting point 0 is labelled $c_0$. Consecutive integer time points 
are marked $c_i, c_{i \oplus 1}$, where $\oplus$ is addition modulo 
the maximum constant used in the time interval in the $\regmtl$ formula. This helps in measuring the time elapse 
since the first action point after $\tau_v+l$, till the last action point before $\tau_v+u$ as follows:
if $\tau_v+l$ lies between points marked $c_j, c_{j \oplus 1}$, then 
the last integer point before $\tau_v+u$ is {\bf uniquely} marked $c_{j \oplus u}$. 
\begin{itemize}
\item 	
Anchoring at $\tau_v$, we assert the following at distance $l$:
no action points are seen until the first action point where $\Th_i(q_1)$ is true for some 
thread $\Th_i$. 
% $\Th_i$ remains active until it is merged with some thread $\Th_{i_1}$, and so, 
Consider the next point where  
$c_{j \oplus u}$ is seen. Let $\Th_{i_{k_1}}$ be the thread to which $\Th_i$ 
has merged at the last action point just before $c_{j \oplus u}$. 
Let us call $\Th_{i_{k_1}}$ the ``last merged thread'' before $c_{j \oplus u}$. 
The sequence of merges from $\Th_i$ till $\Th_{i_{k_1}}$ 
asserts a prefix of the run 
that we are looking for between $\tau_v+l$ and $\tau_v+u$. 
%We are sure of the time elapse because of $c_{j\oplus u}$. 
To complete the run %between points $\tau_v+l$ and $\tau_v+u$, 
we mention the  sequence of merges from $\Th_{i_{k_1}}$ 
which culminates in some $\Th_{i_k}(q_f)$ at the last action point before
$\tau_v+u$.  
%such that $\Th_{i_{k_1}}$ merges into $\Th_{i_{k}}$. 
\item 
Anchoring at $\tau_v$, we assert the following at distance $u$:
we see no action points since $\Th_{i_k}(q_f)$ at the action point 
before $\tau_v+u$ for some thread $\Th_{i_k}$, and there is a path linking 
thread 	$\Th_{i_{k_1}}$ to $\Th_{i_k}$ 
%and $\Th_{i_k}$ was active since 
%some previous action point when some other thread $\Th_{i_{k-1}}$ merged into  $\Th_{i_k}$, 
%and 
%so on 
since the point $c_{j \oplus u}$.  We assert that the ``last merged thread'', 
$\Th_{i_{k_1}}$ is active at $c_{j\oplus u}$ :  this is the linking thread which is last merged 
into before $c_{j\oplus u}$, and which is the first thread which merges into another thread after $c_{j\oplus u}$.
\end{itemize}
These two formulae thus ``stitch'' the actual run observed between points $\tau_v+l$ and $\tau_v+u$. The formal technical details can be seen 
in Appendix \ref{app:pref-suf}.
If $I$ was an unbounded interval of the form $[l, \infty)$, then
we will go all the way till the end of the word, and assert $\Th_{i_k}(q_f)$ at the last action point of the word. 
Thus, for unbounded intervals, we do not need any oversampling at integer points.  
 \end{proof}
In a similar manner, we can eliminate the $\ureg$ modality, the proof of which can be found in Appendix \ref{app:ureg}. 
 If we choose to work on logic 	$\mitl+\ureg$, we obtain a $\mathsf{2EXPSPACE}$ upper bound for satisfiability checking, 
 since elimination of $\ureg$ results in an equisatisfiable $\mathsf{MITL}$ formula. This is an interesting consequence 
 	of the oversampling technique; without oversampling,  	we can eliminate 
  $\ureg$ obtaining 1-$\mathsf{TPTL}$ (Appendix \ref{app:1-tptl}). However, 1-$\mathsf{TPTL}$ does not enjoy 
  the benefits of non-punctuality, and is  non-primitive recursive (Appendix \ref{app:1-tptl-hard}).

\section{Automaton-Metric Temporal Logic-Freeze Logic Equivalences}
\label{a-il-fl}
The focus of this section is to obtain equivalences between automata,
temporal and freeze logics. 
First of all, we identify a fragment of $\regmtl$ denoted 
$\sfmtl$, where the rational  
expressions in the formulae are all star-free.
We then show the equivalence between $\po$-1-clock ATA, 1${-}\tptl$,
 and $\sfmtl$ ($\po$-1-clock ATA $\subseteq \sfmtl \subseteq 1{-}\tptl \equiv \po$-1-clock ATA).
 The main result of this section gives a tight automaton-logic connection in Theorem \ref{thm:tptl-1-mtl}, and is  
proved using Lemmas \ref{aut-tptl-1}, \ref{lem:sf-1tptl} and 
\ref{lem:poata-sfmtl}. 
\begin{theorem}
\label{thm:tptl-1-mtl}
1${-}\tptl$, $\sfmtl$ and $\po$-1-clock ATA are all equivalent. 
\end{theorem}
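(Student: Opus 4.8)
The plan is to establish Theorem~\ref{thm:tptl-1-mtl} as a cycle of three inclusions, so that all three formalisms collapse to a single expressive class. Concretely, I would prove $\po\text{-}1\text{-clock ATA} \subseteq \sfmtl$, then $\sfmtl \subseteq 1\text{-}\tptl$, and finally $1\text{-}\tptl \subseteq \po\text{-}1\text{-clock ATA}$ (the latter two giving the equivalence $1\text{-}\tptl \equiv \po\text{-}1\text{-clock ATA}$ announced in the section introduction). Composing the three inclusions closes the cycle and yields mutual equivalence. The section text already identifies the three supporting lemmas (Lemmas~\ref{aut-tptl-1}, \ref{lem:sf-1tptl}, and \ref{lem:poata-sfmtl}), so the theorem itself is essentially an assembly step; the real content lives in those lemmas.

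For the inclusion $\sfmtl \subseteq 1\text{-}\tptl$ I expect a direct syntactic translation: a single freeze clock $y$ can be frozen at the anchor point and then tested against the interval $I$ with the guard $y \in I$, while a star-free rational expression over subformulae, viewed as a first-order-definable (aperiodic) language, can be unfolded into nested $\until$/boolean structure inside the scope of that freeze. The key observation is that star-freeness of $\re(\Ss)$ is exactly what guarantees the resulting $\tptl$ formula needs no additional clocks and no modulo-counting gadget, keeping us inside $1\text{-}\tptl$. For $1\text{-}\tptl \subseteq \po\text{-}1\text{-clock ATA}$ I would rely on the standard construction converting a one-clock $\tptl$ formula into a one-clock alternating timed automaton, then verify that the freeze structure induces a \emph{partial order} (very weak / 1-weak) transition structure: each state corresponds to a subformula, and transitions only move to strict subformulae or stay in place, so no nontrivial cycles arise except self-loops. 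This partial-order check is the crux of staying within the $\po$ subclass rather than general 1-clock ATA.

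The hard part will be the remaining inclusion $\po\text{-}1\text{-clock ATA} \subseteq \sfmtl$, i.e. Lemma~\ref{lem:poata-sfmtl}, which the introduction explicitly flags as ``the most non-trivial part.'' Here the obstacle is that a partially ordered 1-clock ATA can record, along a run, the relative order and timing of clock resets across a bounded but structurally rich set of locations, and one must reconstruct this bookkeeping purely with star-free rational-expression-guarded metric modalities. My plan would be to exploit the partial-order structure to stratify the locations into levels, induct on the height of the order, and for each location synthesize an $\sfmtl$ formula describing the set of timed suffixes accepted from that location with a given clock value; the star-free rational guard $\reg_{I}(\re)$ is used to encode the untimed regular (and, by partial-orderedness, aperiodic/star-free) pattern of intermediate events between a reset and its corresponding guard test, while the metric interval $I$ captures the clock constraint. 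The main technical difficulty is showing that the untimed patterns arising from a very weak automaton are indeed star-free, which should follow from the absence of proper cycles in the partial order (counter-free transition structure), and then verifying that the inductive composition of these formulae respects oversampling-free semantics so that the resulting formula is genuinely in $\sfmtl$ rather than full $\regmtl$.

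Finally, with the three lemmas in hand, the proof of the theorem is a short transitivity argument: from $\po\text{-}1\text{-clock ATA} \subseteq \sfmtl \subseteq 1\text{-}\tptl$ and $1\text{-}\tptl \equiv \po\text{-}1\text{-clock ATA}$ we obtain that every containment in the chain is an equality, establishing that all three classes define the same timed languages.
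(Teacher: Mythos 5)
Your proposal matches the paper's own proof: the theorem is established as exactly the cycle $\po$-1-clock ATA $\subseteq \sfmtl \subseteq 1{-}\tptl \equiv \po$-1-clock ATA, assembled from Lemmas~\ref{lem:poata-sfmtl}, \ref{lem:sf-1tptl} and \ref{aut-tptl-1}, and your sketches of the individual lemmas (freezing a clock and unfolding the star-free expression into LTL for $\sfmtl \subseteq 1{-}\tptl$; the standard formula-to-ATA construction with a partial-order check for the converse; and a bottom-up induction on the partial order with region-wise untimed behaviour descriptions for the hard direction) coincide with the paper's constructions. No gaps beyond the level of detail one would expect in a plan.
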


%\paragraph*{Automaton-Freeze Logic ($\po$-1-clock ATA $\equiv 1{-}\tptl$)}
%\label{a-fl}
\noindent We first show that partially ordered 1-clock alternating timed automata ($\po$-1-clock ATA) capture exactly the same class of languages as $1{-}\tptl$. We also show that $1{-}\tptl$ is equivalent to the subclass $\sfmtl$  of $\regmtl$ where the rational expressions 
$\re$ involved in the formulae are such that $L(\re)$ is star-free.  %This also shows for the first time in pointwise timed logics, an equivalence between 
 %freeze point logics and logics with interval constraints. 
 
A 1-clock ATA \cite{Ouaknine05} is a tuple $\mathcal{A}=(\Sigma, S, s_0, F, \delta)$, where 
$\Sigma$ is a finite alphabet, $S$ is a finite set of locations,   
  $s_0 \in S$  is the initial location and $F \subseteq S$ is the set 
  of final locations.   Let $x$ denote the clock variable in the 1-clock ATA, and $x \bowtie c$ denote a clock constraint 
  where $c \in \mathbb{N}$ and $\bowtie \in \{<, \leq, >, \geq \}$. Let $X$ denote a finite set 
  of clock constraints of the form $x \bowtie c$. The transition function is defined as   
   $\delta: S \times \Sigma \rightarrow \Phi(S \cup \Sigma \cup X)$  where
  $\Phi(S \cup \Sigma \cup X)$ is a set of formulae defined by the grammar  $\varphi::=\top|\bot|\varphi_1 \wedge \varphi_2|\varphi_1 \vee \varphi_2|s|x \bowtie c|x.\varphi$ where  
    $s \in S$, and $x. \varphi$ is a binding construct corresponding to resetting the clock $x$ to 0. 
   
      The notation $\Phi(S \cup \Sigma \cup X)$ thus allows boolean combinations 
   as defined above of locations, symbols of $\Sigma$, clock constraints and $\top, \bot$, with or without  
   the binding construct $(x.)$. A configuration of a 1-clock ATA is a set consisting of locations along with their clock valuation. 
   Given a configuration $C$, 
   we denote by $\delta(C,a)$ the configuration $D$ obtained by applying 
   $\delta(s,a)$ to each location $s$ such that $(s, \nu) \in C$. 
      A run of the 1-clock ATA starts from the initial configuration 
   $\{(s_0,0)\}$, and proceeds with alternating time elapse transitions and  
   discrete transitions obtained on  reading a symbol from $\Sigma$. A configuration is accepting iff it is either empty, or 
   is of the form $\{(s, \nu) \mid s \in F\}$.   
    The language accepted  by a 1-clock ATA $\mathcal{A}$, denoted 
    $L(\mathcal{A})$ is the set of all timed words $\rho$ such that starting 
    from $\{(s_0,0)\}$, reading $\rho$ leads to an accepting configuration.  
       A $\po$-1-clock ATA is one in which 
  % \begin{itemize}
   (i) there is a  partial order denoted $\prec$ on the locations, such that whenever 
   $s_j$ appears  in $\Phi(s_i)$, $s_j \prec s_i$, or $s_j=s_i$.    
   Let $\downarrow s_i=\{s_j \mid s_j \prec s_i\}$, (ii) $x.s$ does not appear in $\delta(s,a)$ for all $s \in S, a \in \Sigma$. 
      % \end{itemize}
   
% \noindent{\em{Example}}. 
 \begin{example}
 \label{eg1}
 Consider  the $\po$-1-clock ATA $\mathcal{A}=(\{a,b\}, \{s_0,s_a,s_{\ell}\}, s_0, \{s_0, s_{\ell}\},\delta)$ with transitions 
 $\delta(s_0,b)=s_0,\delta(s_0,a)=(s_0 \wedge x.s_a) \vee s_{\ell},$
 $\delta(s_a,a)=(s_a \wedge x<1) \vee (x>1)=\delta(s_a,b),$ and 
 $\delta(s_{\ell},b)=s_{\ell}, \delta(s_{\ell},a)=\bot$.       
  The automaton accepts all strings where every non-last $a$ 
  has no symbols at distance 1 from it, and has some symbol 
  at distance $>1$ from it. 
 \end{example}

  \begin{lemma}
 $\po$-1-clock ATA and $1{-}\tptl$ are equivalent in expressive power. 
\label{aut-tptl-1}
 \end{lemma}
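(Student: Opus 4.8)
The plan is to prove the two inclusions $1{-}\tptl \subseteq \po\text{-1-clock ATA}$ and $\po\text{-1-clock ATA} \subseteq 1{-}\tptl$ separately. The single conceptual observation driving both directions is that, because every thread of a $1$-clock ATA reads the \emph{same} timed suffix of $\rho$, the conjunctive (alternating) branching of the automaton corresponds exactly to Boolean conjunction of linear-time formulae evaluated at the same position, while the lone clock of the automaton, reset by the binding construct $x.$, corresponds precisely to the single freeze variable $y$ of $1{-}\tptl$: a guard $x \bowtie c$ tested at a position whose last reset time is $t$ is equivalent to $y \in I$ with $y$ frozen at $t$, since $\tau_i - \nu(y) = \tau_i - t$. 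In particular, nested freezes in $1{-}\tptl$ overwrite the single variable exactly as successive resets overwrite the single clock along one thread, which is why one clock suffices on both sides.

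For the direction $1{-}\tptl \subseteq \po\text{-1-clock ATA}$ I would proceed by structural induction on a negation normal form of the formula, creating one location per subformula and taking the subformula relation as the partial order $\prec$. Atomic propositions give a trivial gadget; $\wedge$ and $\vee$ are realized by conjunctive/disjunctive initial transitions into the component automata (all strictly below in $\prec$); a freeze $y.\psi$ contributes the reset $x.$ on the transition entering the strictly lower initial location of $\psi$, so the forbidden pattern $x.s$ in $\delta(s,a)$ never arises and condition (ii) holds; a constraint $y \in I$ maps to $x \in I$. The only recursive construct is $\until$ (and its normal-form dual, release), which I realize by a fresh location carrying a self-loop that does \emph{not} reset the clock and descends to the initial locations of the two arguments; since the self-reference keeps the same order-level and no reset accompanies it, both the partial order and condition (ii) are preserved. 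Negation is handled uniformly by the remark that dualizing a $\po$-1-clock ATA (swapping $\wedge \leftrightarrow \vee$ and complementing the accepting condition) leaves the location set, the order $\prec$, and the ``no $x.s$'' property untouched, so the class is closed under complement; this keeps each connective's translation inside $\po$-1-clock ATA.

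The harder direction, and the main obstacle, is $\po\text{-1-clock ATA} \subseteq 1{-}\tptl$: given $\mathcal{A}$ I must synthesize, by induction on $\prec$, a $1{-}\tptl$ formula $\Theta_s$ over the single clock $y$ for every location $s$ such that $\rho,i,\nu \models \Theta_s$ iff $\mathcal{A}$ accepts the suffix of $\rho$ from position $i$ started in the configuration $\{(s, \tau_i - \nu(y))\}$. Once all strictly smaller locations are handled, I rewrite $\delta(s,a)$ and translate its positive Boolean structure directly: $\wedge \mapsto \wedge$, $\vee \mapsto \vee$, a guard $x \bowtie c \mapsto y \bowtie c$, a reset $x.(\cdot) \mapsto y.(\cdot)$, a reference to $s' \prec s$ to $\Theta_{s'}$ advanced one step by $\nex$. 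The genuinely delicate point is the self-loop, i.e. occurrences of $s$ itself in $\delta(s,a)$: splitting $\delta(s,a)$ into disjuncts that retain $s$ (``loop'') and disjuncts that drop it (``exit''), the acceptance from $s$ is captured by a formula of the shape $\mathrm{Loop} \until \mathrm{Exit}$, where each looping step conjoins the side conditions (built from $\Theta_{s'}$ with $s' \prec s$, guards and resets) spawned by the chosen loop-disjunct and $\mathrm{Exit}$ asserts that some exit-disjunct's continuation holds. Because the only cycles in a $\po$-1-clock ATA are self-loops, and condition (ii) guarantees such self-loops never reset the clock, this matches the non-resetting semantics of $\until$ in $1{-}\tptl$, and the induction on $\prec$ is well-founded.

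Finally I must get the boundary behaviour right, which is where most of the care goes. A thread reaching the end of the word accepts iff its location is final, so $\Theta_s$ needs a disjunct for the case in which $s$ keeps looping until the word ends; I would encode this with the weak-until $\weaku$, allowing $\Box\,\mathrm{Loop}$ conjoined with ``$s \in F$ at the last position'', together with the convention that an empty configuration accepts. One must also fix a single timing convention---whether guards are tested before or after the one-step advance encoded by $\nex$---and verify that placing the freeze $y.$ at the current position exactly reproduces ``reset at time $\tau_i$''. With these pinned down, substituting the inductively constructed $\Theta_{s'}$ into the until-template yields a correct $1{-}\tptl$ formula for each $s$, and $\Theta_{s_0}$ evaluated at the first position with $y$ frozen to $0$ defines $L(\mathcal{A})$, completing the equivalence.
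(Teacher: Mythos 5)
Your proposal is correct and follows essentially the same route as the paper: the easy direction is the standard subformula-to-location construction, and for the converse you build, by induction on the partial order, a formula $\Theta_s$ per location by splitting $\delta(s,a)$ into loop and exit disjuncts and solving the self-loop with $\until$ (weak-until for accepting locations), which is exactly the paper's $\Beh(s)$ computation via the normal form $(\nex s \wedge \varphi_1) \vee \varphi_2$. The paper likewise justifies correctness by induction on the levels of the partial order, treating already-solved lower locations as $\tptl$ look-aheads, so no substantive difference remains.
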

The translation from $1{-}\tptl$ to $\po$-1-clock ATA is easy, as in the 
translation from $\mtl$ to  $\po$-1-clock ATA. For the reverse direction, 
we start from the lowest location (say $s$) in the partial order, and
 replace the transitions of $s$ by a 1-$\tptl$ formula that  models timed words which are accepted, 
when started in $s$. The accepting behaviours 
of each location $s$, denoted $\Beh(s)$ is computed bottom up. The 1-$\tptl$ formula 
that we are looking for is $\Beh(s_0)$ where $s_0$ is the initial location.  
In example \ref{eg1}, $\Beh(s_{\ell})=\wB b$, 
$\Beh(s_a){=}(x <1) \wU  (x>1)$, 
$\Beh(s_0)=[(a \wedge x.\nex \Beh(s_a)) \vee b]  \weaku (a \wedge \nex \Beh(s_{\ell}))$
=$((a \wedge (x.\nex [(x <1) \wU  x>1])) \vee b) \weaku (a \wedge \nex \wB b).$
Step by step details for Lemma \ref{aut-tptl-1} can be seen in Appendix \ref{app:ata-tptl-1}.

We next show that starting from a $\sfmtl$ formula $\varphi$, we can construct 
an equivalent $1{-}\tptl$ formula $\psi$.  
The proof of Lemma \ref{lem:sf-1tptl} can be found in Appendix \ref{app:sf-1tptl}. 
\begin{lemma}
\label{lem:sf-1tptl}
				$\sfmtl \subseteq 1-\tptl$
				\end{lemma}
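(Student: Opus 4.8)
The plan is to proceed by structural induction on the $\sfmtl$ formula $\varphi$, maintaining an equivalent $1{-}\tptl$ formula $\widehat{\varphi}$. Propositional atoms, the constant, and the boolean connectives commute with the translation, and the plain metric until $\varphi_1 \until_I \varphi_2$ is handled exactly as in the standard embedding of $\mtl$ into $1{-}\tptl$, namely $\widehat{\varphi_1 \until_I \varphi_2} = x.(\widehat{\varphi_1} \until (x \in I \wedge \widehat{\varphi_2}))$, resetting the single clock $x$ at the anchor and reading it at the witness. All the work is therefore concentrated in the two genuinely new modalities $\reg_I\,\re(\Ss)$ and $\varphi_1 \ureg_{I,\re(\Ss)} \varphi_2$ with $L(\re)$ star-free.

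The key bridge is the classical equivalence between star-free languages and $\ltl$ over finite words: since $L(\re)$ is star-free I fix an $\ltl$ formula $\alpha_\re$ over the letters $\Ss$ defining it. The $1{-}\tptl$ formula I build then does two things at once. First, it freezes $x$ at the anchor $i$, so that at each later position $k$ the clock value equals $\tau_k - \tau_i$; this carves out the window of positions satisfying the guard $x \in I$, which is exactly $\mathsf{TSeg}(\Ss,I,i)$ (respectively the open segment between $i$ and the witness $j$ for $\ureg$). Second, it evaluates $\alpha_\re$ with its temporal operators relativized to this guarded window: I use a relativization $\mathrm{rel}_{x\in I}(\cdot)$ that is homomorphic on atoms and booleans and, on the until, quantifies only over guarded positions, so that $\mathrm{rel}_{x\in I}(\beta \until \gamma)$ has the shape $(x \in I \rightarrow \mathrm{rel}_{x\in I}(\beta))\until(x \in I \wedge \mathrm{rel}_{x\in I}(\gamma))$, anchored on the first guarded position. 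Each letter $\psi \in \Ss$ appearing in $\alpha_\re$ is replaced by its inductively obtained translation $\widehat{\psi}$; as the $\psi$ are strict subformulae the recursion is well founded. For $\ureg_{I,\re}$ I additionally wrap the search for the witness $j$ (a guarded position carrying $\widehat{\varphi_2}$, with $\widehat{\varphi_1}$ along the way) around the relativized $\alpha_\re$, asked to hold on the open segment. A single clock suffices: $x$ is bound only at the anchor and thereafter merely read by the window guards; although a substituted $\widehat{\psi}$ may re-bind $x$ internally, each guard $x \in I$ is evaluated in the outer valuation before the formula descends into $\widehat{\psi}$, so the measurement is never corrupted. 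For unbounded $I = [l,\infty)$ no upper endpoint is needed and the window runs to the end of the word.

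The step I expect to be the crux is twofold. First, the $\singl$ test is \emph{angelic}: the modality holds iff \emph{some} pointwise choice of a holding $\Ss$-letter yields a word in $L(\re)$, and because the formulae of $\Ss$ need not be mutually exclusive, a naive letterwise substitution into $\alpha_\re$ is unsound. I would therefore have to show that for star-free $L(\re)$ this angelic membership is itself definable by an $\ltl$ formula over the independent markers $\{\psi\}_{\psi\in\Ss}$ --- scanning the window left to right and resolving each position's choice on the fly against the aperiodic syntactic monoid of $\re$. This is precisely where star-freeness is indispensable: inside a frozen window a single clock affords $1{-}\tptl$ no more discriminating power than $\ltl$, so the whole construction collapses if the angelic test were to escape the star-free realm, and verifying that it does not is the delicate heart of the argument. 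Second, the relativization must reproduce $\mathsf{TSeg}$/$\mathsf{Seg}$ \emph{exactly} at the boundaries --- the relativized run must begin at the first guarded position and terminate at the last one (or at the witness $j$ for $\ureg$) --- together with the routine case splits between bounded and unbounded, and punctual and non-punctual, intervals $I$. These boundary bookkeeping steps are mechanical once the relativization and the angelic encoding are correctly in place.
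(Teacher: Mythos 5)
Your translation is essentially the paper's: exploit star-free $=$ LTL, freeze the single clock at the anchor, and relativize the resulting LTL formula to the guarded window $x\in I$ (the paper writes this out case-by-case for atoms, concatenation, union and star, e.g.\ $x.[(x\notin I)\until((x\in I)\wedge\Box[(x\in I)\rightarrow\varphi_{\re_1}])]$ for the starred case), handling nested $\reg$ modalities by substituting the already-built closed $1{-}\tptl$ formula as a pseudo-proposition, and dropping the trailing $\nex\Box(x\notin I)$ for unbounded intervals. The one point where you stop short --- the ``angelic'' $\singl$ semantics when the formulae in $\Ss$ are not mutually exclusive --- is exactly what the paper's Exclusive Normal Form lemma (Appendix~\ref{app:exnf}) is for: it replaces $\Ss$ by the alphabet of complete Boolean types $\bigwedge_{i\in K}\phi_i\wedge\bigwedge_{i\notin K}\neg\phi_i$ and rewrites $\re$ by substituting each $\varphi_i$ with the finite union of types containing it; since exactly one type holds at each position the marking becomes a single deterministic word, and since the rewriting only replaces atomic letters by finite unions of letters it preserves star-freeness syntactically. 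I would steer you toward that normalization rather than your proposed on-the-fly resolution against the syntactic monoid: arguing that the angelic image stays aperiodic directly is delicate (power monoids of aperiodic monoids are not aperiodic in general), whereas ExNF sidesteps the issue because the type of each position is itself a Boolean combination already available to the logic. With that lemma in hand, the rest of your argument (relativization, boundary bookkeeping, re-binding of $x$ inside substituted subformulae) matches the paper's proof.
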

The idea is to iteratively keep replacing the $\reg$ modality level by level, starting with the innermost one, 
until we have eliminated the topmost one.  
Consider the $\sfmtl$ formula $\varphi=\reg_{(0,1)}[\reg_{(1,2)}(a+b)^*]$. 
To eliminate 
 $\reg_{(1,2)}(a+b)^*$ at a point, we freeze a clock $x$, and wait till $x \in (1,2)$, 
 and  check $(a+b)^*$ on this region. The LTL formula for $(a+b)^*$ is $\Box(a \vee b)$.
Treating $x \in (1,2)$ as a proposition, we obtain 
  $\zeta=x.(x \notin(1,2) \until [\psi_1 \wedge \psi_2])$ where 
  $\psi_1=x \in(1,2) \wedge [x \in (1,2) \until \nex(\Box(x \notin (1,2)))]$ and 
  $\psi_2=  \Box[x \in(1,2) \rightarrow (a \vee b)]$.
%  \Box\{(a \wedge x \in(1,2)) \vee (b \wedge x \in(1,2))\}
%(x \in (1,2) \until \nex(\Box(x \notin (1,2))))]]$ 
$\zeta$ asserts $\Box(a \vee b)$
exactly on the region (1,2), eliminating the modality $\reg_{(1,2)}$. To eliminate 
the outer $\reg_{(0,1)}$, we 
assert the existence of a point in (0,1) where $\reg_{(1,2)}(a+b)^*$ is true by saying 
$x.(x \notin(0,1) \until ([x \in(0,1) \wedge \zeta \wedge \nex(\Box(x \notin(0,1)))]))$.
This is   1-$\tptl$ equivalent to $\varphi$.

\begin{lemma}($\po$-1-clock ATA to $\sfmtl$)
\label{lem:poata-sfmtl}
	Given a 	$\po$-1-clock ATA $\mathcal{A}$, we can construct a $\sfmtl$ formula 
	$\varphi$ such that $L(\mathcal{A})=L(\varphi)$.				\end{lemma}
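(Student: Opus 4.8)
The plan is to compute, by induction along the partial order $\prec$, a $\sfmtl$ formula $\Beh(s)$ for every location $s$, mirroring the bottom-up scheme of Lemma~\ref{aut-tptl-1} but targeting $\sfmtl$ rather than $1{-}\tptl$. The intended invariant is that $(\rho,i)\models\Beh(s)$ exactly when $\mathcal{A}$, started in configuration $\{(s,0)\}$ at position $i$, accepts the suffix $\rho_{\ge i}$; then $\varphi=\Beh(s_0)$ satisfies $L(\varphi)=L(\mathcal{A})$. The single clock makes the anchoring of metric windows the central design issue. By condition~(ii) of the $\po$ definition the self-loops of $s$ never reset $x$, so as long as the run stays in $s$ the clock value at a position $p$ equals $\tau_p$ minus the timestamp of the \emph{last reset}; hence every guard $x\bowtie c$ seen inside $s$ is the metric constraint ``$\tau_p-\tau_{\text{anchor}}\bowtie c$'', precisely the kind of window that the $\reg_I$ and $\ureg_I$ modalities measure from the current point. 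Because $\sfmtl$ has no freeze quantifier to carry an anchor around, I will organise the induction over \emph{reset-free strata}: for a location $s$, all locations reachable from $s$ by self-loops and reset-free downward edges share one anchor (the reset that entered $s$) and are treated together by a single formula evaluated at that anchor, whereas an edge carrying a reset $x.s_j$ exits to a lower stratum whose formula is plugged in, freshly anchored, at the reset position (soundly, since the current-point semantics of $\reg_I$ re-origins the windows there).

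For the inductive step fix a stratum with top location $s$ and assume $\Beh(s_j)\in\sfmtl$ for every strictly lower stratum. Reading a letter, $\delta(s,\cdot)$ is a positive Boolean combination over atoms that are either $s$ itself (a guarded self-loop), a lower location (entered with or without a reset), a guard $x\bowtie c$, or $\top/\bot$; I split each disjunct into its self-part and its exit-part. Letting $c_{\max}$ be the largest constant of $\mathcal{A}$, I slice $[0,\infty)$ into the finitely many integer-endpoint clock regions $[0,1),[1,2),\dots,[c_{\max},\infty)$ (refined at the integer points as needed). Inside a region no guard changes truth, so the set of enabled self-loop disjuncts is constant, and the behaviour of the thread that stays in $s$ is described region by region and concatenated across region boundaries. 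For each region $J$ I produce a rational expression over the atoms consisting of the input letters together with the lower-stratum behaviours $\Beh(s_j)$ inserted at the positions where the corresponding exit or reset-free edges are taken, and assert it inside the window $\reg_J$; the acceptance condition --- the $s$-thread leaves via an exit disjunct, or $s\in F$ and the word ends while in $s$ --- closes off the final region.

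The crux, and the reason $\sfmtl$ rather than full $\regmtl$ is the exact target, is that each rational expression so produced is \emph{star-free}. Within a fixed clock region the guards are inert, so $\mathcal{A}$ acts as an untimed alternating automaton whose only cycles are the self-loops of a partially ordered transition structure, i.e. a very weak (1-weak) alternating automaton; such automata recognise exactly the aperiodic, hence star-free, languages. Thus every per-region language is a star-free expression over $\{\text{letters}\}\cup\{\Beh(s_j)\}$, the full $\Beh(s)$ assembled from them by Boolean connectives, $\reg_J$ and $\ureg$ lands in $\sfmtl$, and taking $s=s_0$ completes the construction.

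The main obstacle is the faithful encoding of this region sweep. I must ensure that the per-region star-free expressions concatenate correctly across the integer boundaries, so that one physical run of the $s$-thread straddling several regions is neither split apart nor double counted, and that the anchoring bookkeeping is exact: resets must re-origin the window (handled by the inherently current-point-anchored $\reg_I/\ureg_I$), while reset-free downward moves must reuse the caller's anchor --- which is exactly what forces the stratum-level rather than location-level induction. Carrying out the region stitching precisely, and justifying star-freeness of the per-region languages from the very-weak structure, is the delicate heart of the argument.
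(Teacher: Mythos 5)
Your plan is essentially the paper's own construction: a bottom-up pass over the partial order that distributes each location's behaviour over the $2K+1$ integer-endpoint clock regions, re-anchors at resets by plugging in the already-built $\sfmtl$ formula, keeps reset-free downward calls on the caller's anchor, and asserts a star-free expression per region under $\reg_R$. The ``delicate heart'' you defer --- stitching the per-region descriptions across region boundaries and case-splitting on which region the next point of a called location falls in --- is exactly what the paper's behaviour-description tuples $\BD(s,R_i,R_j)$ and its $\mathsf{combine}$ operation carry out in detail; the only cosmetic difference is that you justify star-freeness of the per-region languages via the very-weak-automata/aperiodicity correspondence, where the paper gets it for free because its per-region entries are already LTL formulae.
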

\begin{proof}(Sketch)
	We give a proof sketch here, a detailed proof can be found in Appendix \ref{app:ata-1tptl}.
%\noindent{\bf The Main Idea}:
Let $\mathcal{A}$ be a $\po$-1-clock ATA with locations $S=\{s_0,s_1,  \dots,s_n\}$. 
Let $K$ be the maximal constant used in the guards $x \sim c$ occurring in the transitions. 
Let $R_{2i}=[i,i], R_{2i+1}=(i, i+1), 0 \leq i < K$ and $R^+_K=(K, \infty)$ be the regions 
$\mathcal{R}$ of $x$. Let $R_h \prec R_k$ denote that region $R_h$
precedes region $R_k$.
For each location $s$, $\Beh(s)$ as computed in Lemma \ref{aut-tptl-1} is a 1-$\tptl$ formula that gives the timed behaviour starting at $s$, 
using constraints $x \sim c$ since the point where $x$ was frozen. 
In example \ref{eg1}, $\Beh(s_a){=}(x <1) \wU  (x>1)$, allows symbols $a,b$ as long as $x<1$ 
keeping the control in $s_a$, has no behaviour at $x=1$, and allows 
 control to leave $s_a$ when $x>1$.
 For any $s$, we ``distribute'' $\Beh(s)$ across regions by untiming it. In example \ref{eg1}, 
  $\Beh(s_a)$ is $\wB(a \vee b)$ for regions $R_0, R_1$, it is $\bot$ for $R_2$ and 
  is $(a \vee b)$ for  $R^+_1$.  Given any $\Beh(s)$, and a pair of regions $R_j \preceq R_k$, such that
   $s$ has a non-empty behaviour in region $R_j$, and control leaves $s$ in $R_k$,
    the untimed behaviour of $s$ between regions $R_j, \dots, R_k$ is written as LTL formulae 
   $\varphi_j, \dots, \varphi_k$. This results in a ``behaviour description'' (or $\BD$ for short)
    denoted $\BD(s, R_j,R_k)$ : this is a $2K+1$ tuple with $\BD[R_l]=\varphi_l$
    for $j \leq l \leq k$, and $\BD[R]=\top$ denoting ``dont care'' for the other regions.
   Let $\BDset(s)$ denote the set of all $\BD$s for a location $s$.  
    For the initial location $s_0$, consider all $\BD(s_0, R_j, R_k) \in \BDset(s_0)$
that have a behaviour starting in $R_j$, and ends in an accepting configuration in $R_k$. 
        Each LTL formula $\BD(s_0, R_j,R_k)[R_i]$ (or $\BD[R_i]$ when $s, R_j, R_k$ are clear)
        is replaced with a star-free rational expression
    denoted $\re(\BD(s_0, R_j,R_k)[R_i])$. Then  $\BD(s_0, R_j,R_k)$ 
    is transformed into a $\sfmtl$ formula $\varphi(s_0,R_j,R_k)= \bigwedge_{j \leq g \leq k}\reg_{R_g}\re(\BD(s_0, R_j, R_k)[R_g])$. 
The language accepted by the $\po$-1-clock ATA $\mathcal{A}$ is then given by  
$\bigvee_{0 \leq j \leq  k \leq 2K} \varphi(s_0,R_j, R_k)$.

%   
%    
%    
%    
%    
%    by writing 
%LTL formulae $\varphi_j, \dots, \varphi_k$. This gives rise to a ``behaviour description'' $\BD$, for each $\Beh(s_i)$ and a pair of regions 
%$R_j, R_k$. A $\BD$ is a $2K+1$ tuple consisting of untimed behaviours in all regions. 
%The $\BD$ of $s$ starting in region $R_j$ 
%and leaving $s$ in region  $R_k$ denoted $\Beh(s,R_j, R_k)$, is a $2K+1$ tuple of LTL formulae
%where (i) the LTL formulae 
%for regions $R_0, \dots, R_{j-1}$ as well as $R_{k+1}, \dots, R_{2K}$ are $\top$, representing ``dont care'', 
%(ii)$\Beh(s)$ allows a transition from $s$ in region $R_i$, and (iii) $\Beh(s)$ allows control to leave $s$ 
%in region $R_k$. The LTL formula in regions $R_j,\dots, R_k$ is obtained as explained below. 
% Each LTL formula in the $\BD$ can be written as a star-free expression. 
%   Let $\Beh(s, R_j, R_k)[R_i]$ (or $BD[R_i]$ when $s,R_j, R_k$ are clear) represent the LTL formula in the region $R_i$ 
%  and let   
%   $\re(\Beh(s, R_j, R_k)[R_i])$ represent the star-free rational expression
%  that is equivalent to this LTL formula. 
% The $\BD$ $\Beh(s,R_j, R_k)$
%is transformed into a $\sfmtl$ formula $\varphi(s,R_j,R_k)= \bigwedge_{j \leq g \leq k}\reg_{R_g}\re(\Beh(s, R_j, R_k)[R_g])$. 
%The language accepted by the $\po$-1-clock ATA $\mathcal{A}$ is then given by  
%$\bigvee_{0 \leq j \leq  k \leq 2K} \varphi(s_0,R_j, R_k)$ where $s_0$ is the initial location, and 
%the word is accepted while in region $R_k$.   This disjunction allows all possible accepting behaviours 
%from the initial location $s_0$. 

\begin{figure}
\includegraphics[scale=0.45]{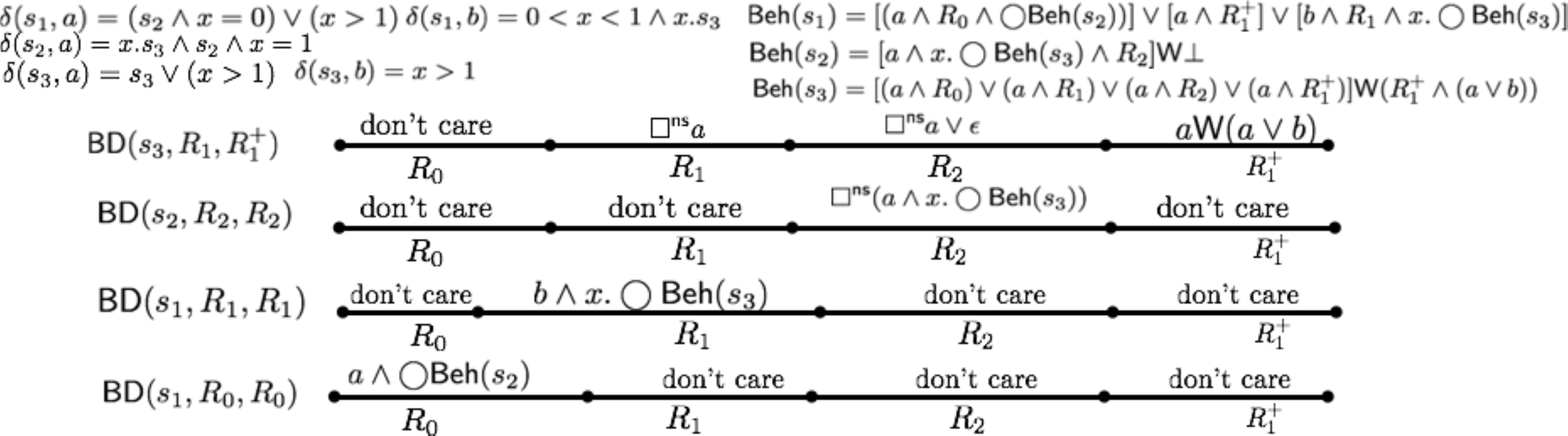}	
\caption{A $\po$-1-clock ATA with initial location $s_1$ and  $s_2,s_3$ are accepting.}
\label{last-eg}
\end{figure}

\noindent{\bf Computing $\BD(s,R_i, R_j)$ for a location $s$ and pair of regions $R_i \preceq R_j$}. 
We first compute  $\BD(s,R_i, R_j)$ for locations $s$ which are lowest in the partial order, followed 
by computing $\BD(s',R_i, R_j)$ for locations $s'$ which are higher in the order.   
For any location $s$, $\Beh(s)$ has the form $\varphi$ or $\varphi_1 \weaku \varphi_2$ 
or $\varphi_1 \wU \varphi_2$, where $\varphi,\varphi_1, \varphi_2$ 
are disjunctions of conjunctions over $\Phi(S \cup \Sigma \cup X)$, where $S$ is the set of locations 
with or without the binding construct $x.$, and $X$ is a set of clock constraints 
of the form $x \sim c$. Each conjunct 
has the form $\psi \wedge x \in R$  where $\psi \in \Phi(\Sigma \cup S)$ and $R \in \mathcal{R}$. 
Let $\varphi_1=\bigvee(P_i \wedge C_i), \varphi_2=\bigvee(Q_j \wedge E_j)$ where $P_i,Q_j \in \Phi(\Sigma \cup S)$ and $C_i,E_j \in \mathcal{R}$.
Let $\mathcal{C}$ and $\mathcal{E}$ be shorthands  for any $C_k, E_l$.  

If $\Beh(s)$ is an expression without $\until, \mathsf{W}$ (the case of $\varphi$ above), 
then $\BD(s, R_i, R_i)$ is defined for a region $R_i$ if $\varphi=\bigvee(Q_j \wedge E_j)$ 
and there is some $E_l$ with $x \in R_i$.  It is a $2K+1$ tuple with $\BD(s, R_i, R_i)[R_i]=Q_l$, and 
the rest of the entries are $\top$ (for dont care). If $\Beh(s)$ has the form 
$\varphi_1 \weaku \varphi_2$ or $\varphi_1 \wU \varphi_2$, then 
 for $R_i \preceq R_j$, and a location $s$, $\BD(s,R_i,R_j)$ 
%is empty if $\Beh(s)$ has no constraint $x \in R_i$ occurring in $\mathcal{C}, \mathcal{E}$. 
%Otherwise, it is non-empty and 
%is computed as follows: (%(1) there is a  constraint $x \in R_i$ in $\mathcal{C}$ or $\mathcal{E}$ (this allows us to start observing the behaviour in region $R_i$) 
%(2) there must be a constraint $x \in R_j$ in some $\mathcal{E}$ (this allows us to exit the control 
% location $s$ while in region $R_j$).  
is a $2K+1$ tuple with 
  (i) formula $\top$ in regions $R_0, \dots, R_{i-1},R_{j+1}, \dots, R_{K}^+$, 
    (ii) If $C_k=E_l=(x \in R_j)$ for some $C_k, E_l$, then the LTL formula in region $R_j$ is $P_k \until Q_l$ if $s$ is not  accepting, 
  and is $P_k \weaku Q_l$ if $s$ is  accepting, (iii) If no $C_k$ is equal to any $E_l$, and  
  if $E_l=(x \in R_j)$ for some $l$, then the formula in region $R_j$ is $Q_l$. If $C_m=(x \in R_i)$ for some $m$, then 
  the formula for region $R_i$ is $\wB P_m$. If there is some $C_h=(x \in R_w)$ for $i < w < j$, then 
  the formula in region $R_w$ is $\wB P_h \vee \epsilon$, where $\epsilon$ signifies that 
  there may be no points in regions $R_w$. If there are no $C_m$'s such that $C_m=(x \in R_w)$  
for $R_i \prec R_w \prec R_j$, then the formula in region $R_w$ is $\epsilon$. 
$\epsilon$ is used as a \emph{special symbol} in LTL whenever there is no behaviour 
in a region.

 \noindent{\bf $\BD(s,R_i, R_j)$ for location $s$ lowest in po}.  Let $s$ be a location that is lowest in the partial order. 
% The locations $s_{\ell},s_a$ in Example \ref{eg1} are lowest in the partial order, and 
%$\Beh(s_{\ell})=b \weaku \bot=\wB b$, $\Beh(s_a){=}[(a \vee b) \wedge (x<1)]\wU [(a \vee b) \wedge (x>1)] $.
  In general, if $s$ is the lowest in the partial order, then  
   $\Beh(s)$ has the form $\varphi_1 \weaku \varphi_2$ 
or $\varphi_1 \wU \varphi_2$ or $\varphi$ where $\varphi, \varphi_1, \varphi_2$ 
are disjunctions of conjunctions over $\Phi(\Sigma \cup  X)$. Each conjunct 
has the form $\psi \wedge x \in R$  where $\psi \in \Phi(\Sigma)$ and $R \in \mathcal{R}$. 
See Figure \ref{last-eg}, with regions  $R_0, R_1, R_2, R^+_1$, and some example $\BD$s. 
% In example \ref{eg1}, the regions are $R_0=[0,0], R_1=(0,1), R_2=[1,1], R^+_1=(1, \infty)$.
%$\Beh(s_{\ell}, R_1, R^+_1)=(\top, \wB b,\wB b \vee \wB \bot,b \weaku \bot)$, and
%  $\Beh(s_a, R_0, R^+_1)=(\wB(a \vee b), \wB(a \vee b)\vee \wB \bot, \bot, (a \vee b))$. 
%  
In Figure \ref{last-eg},
using the $\BD$s of the lowest location $s_3$, we write 
the $\sfmtl$ formula for $\Beh(s_3)$ : 
 $\psi(s_3)=\varphi_{R_0}(s_3) \wedge  \varphi_{R_1}(s_3)
 \wedge \varphi_{R_2}(s_3) \wedge \varphi_{R^+_1}(s_3)$, where 
 each $\varphi_{R}$ 
 describes the behaviour of $s_3$ 
 starting from region $R$. 
For a fixed region $R_i$, $\varphi_{R_i}(s_3)$ is  
$\bigwedge_{R_g \prec R_i} \reg_{R_g} \epsilon \wedge \reg_{R_i}\Sigma^+ \rightarrow
\{\bigvee_{R_i \prec R_j} \varphi(s_3, R_i, R_j)\}$, where 
$\varphi(s_3, R_i, R_j)$ is described above. $\reg_{R_g} \epsilon$ means that 
there is no behaviour in $R_g$.
$\varphi_{R_0}(s_3)$ is given by 
$\reg_{R_0} \Sigma^+ \rightarrow
\{(\reg_{R_0}a^* \wedge \reg_{R_1}[a^*+\epsilon] \wedge \reg_{R_2}[a^*+\epsilon] \wedge \reg_{R^+_1}[a^*+a^*b])\}$. 
%Using the $\BD$s of $s_a$, we can write the $\sfmtl$ formula that describes the behaviour of $s_a$.
%This fomula is given by $\psi(s_a)=\varphi_{R_0}(s_a) \wedge  \varphi_{R_1}(s_a)
% \wedge \varphi_{R_2}(s_a) \wedge \varphi_{R^+_1}(s_a)$, where 
% each $\varphi_{R_i}$ 
% describes the behaviour starting from region $R_i$, while in location $s_a$. 
%For a fixed region $R_i$, $\varphi_{R_i}(s_a)$ is  
%$\bigwedge_{R_g \prec R_i} \reg_{R_g} \emptyset \wedge \reg_{R_i}\Sigma^+ \rightarrow
%\{\bigvee_{R_i \prec R_j} \varphi(s_a, R_i, R_j)\}$, where 
%$\varphi(s_a, R_i, R_j)$ is described above. 
% Recall that $\varphi(s_a, R_i, R_j)$ describes a possible behaviour of $s_a$ that starts at $R_i$ and ends in $R_j$.  
%For instance,
% $\varphi_{R_0}(s_a)$ is
%$\reg_{R_0} \Sigma^+ \rightarrow
%\{(\reg_{R_0}(a+b)^* \wedge \reg_{R_1}[(a+b)^*+\emptyset] \wedge \reg_{R_2}\emptyset \wedge \reg_{R^+_1}(a+b)^*)\}$
%while 
% $\varphi_{R_1}(s_a)$ is
%$\reg_{R_0} \emptyset \wedge \reg_{R_1}\Sigma^+ \rightarrow
%\{(\reg_{R_1}(a+b)^* \wedge \reg_{R_2}\emptyset \wedge \reg_{R^+_1}(a+b)^*)\}$. 
%Similarly,   $\varphi_{R_2}(s_a)$ is empty since $s_a$ has no behaviour in $R_2$.
%Finally,  $\varphi_{R^+_1}(s_a)$ is 
%$\bigwedge_{R_g \prec R^+_1} \reg_{R_g} \emptyset \wedge \reg_{R^+_1}\Sigma^+ \rightarrow
%\reg_{R^+_1}(a+b)^*$. In a similar manner, we can write the $\sfmtl$ formula $\psi_{s_{\ell}}$ that describes the behaviour 
%of $s_{\ell}$ across regions.   

 \noindent{\bf $\BD(s,R_i, R_j)$ for a location $s$ which is higher up}. 
 If $s$ is not the lowest in the partial order, then $\Beh(s)$ 
 can have locations  $s' \in \downarrow s$. $s'$ occurs as $\nex(s')$ or  $x.\nex(s')$ in 
 $\Beh(s)$.  
 For $x.\nex \Beh(s_3)$ in $\BD(s, R_i, R_j)$, since the clock is frozen, we  plug-in the $\sfmtl$ formula $\psi(s_3)$ computed above 
for $x.\nex \Beh(s_3)$ in $\BD(s_1, R_i, R_j)$.
For instance, in figure \ref{last-eg}, $x.\nex \Beh(s_3)$
appears in $\BD(s_2, R_2, R_2)[R_2]$. 
 We simply plug in the $\sfmtl$ formula $\psi(s_3)$ in its place.
Likewise, for locations $s, t$, if $\nex \Beh(t)$ occurs in  
$\BD(s, R_i, R_j)[R_k]$, 
we look up $\BD(t, R_k, R_l) \in \BDset(t)$ for all $R_k \preceq R_l$
and \emph{combine} $\BD(s, R_i, R_j), \BD(t, R_k, R_l)$ 
in a manner described below. This is done to detect 
if the ``next point'' for $t$ has a behaviour in $R_k$ or later.
 \begin{itemize}
\item[(a)]If the next point for $t$ is in $R_k$ itself, then we \emph{combine}
$\BD_1=\BD(s, R_i, R_j)$ with  
  $\BD_2 \in \{\BD(t, R_k, R_l)  \mid R_k \preceq R_l\} \subseteq \BDset(t)$  as follows.
$\mathsf{combine}(\BD_1, \BD_2)$ results in $\BD_3$ such that 
$\BD_3[R]=\BD_1[R]$ for $R  \prec R_k$, 
$\BD_3[R]=\BD_1[R] \wedge \BD_2[R]$ for $R_k \prec R$, where $\wedge$ denotes component wise conjunction. 
$\BD_3[R_k]$ is obtained by replacing $\nex \Beh(s_2)$ in $\BD_1[R_k]$
with $\BD_2[R_k]$. Doing so enables the next point 
 in $R_k$, emulating the behaviour of $t$ in $R_k$.
%For instance, we replace $\nex \Beh(s_2)$ in $\BD(s_1,R_1,R^+_1)[R_1]$ with $\bot$, since 
%$s_2$ has no behaviour in $R_1$, while 
%we replace $\nex \Beh(s_2)$ in $\BD(s_1,R_1,R^+_1)[R_2]$
%with $\nex(\BD(s_2,R_2,R_l)[R_2])$ ($R_2 \preceq R_l$). 
	\item[(b)] Assume the next point for $t$ lies in $R_b$, $R_k \prec R_b$.
The difference with case (a) is that we combine 
$\BD_1=\BD(s, R_i, R_j)$ with $\BD_2 {\in }\{\BD(t, R_b, R_l) \mid R_k \prec R_b \preceq R_l\} \subseteq \BDset(t)$. 
Then $\mathsf{combine}(\BD_1, \BD_2)$ results in a $\BD$, say $\BD_3$ such that 
$\BD_3[R]=\BD_1[R]$ for $R  \prec R_k$, 
$\BD_3[R]=\BD_1[R] \wedge \BD_2[R]$ for all $R_b \preceq R$, and 
$\BD_3[R]=\epsilon$ for $R_k \prec R \prec R_b$. The $\nex \Beh(t)$ 
in $\BD_1[R_k]$ is replaced with $\Box \bot$ to signify that 
the next point is not enabled for $t$.
 	 	 	See Figure \ref{last-2} where $R_b=R_2$. The conjunction 
 	 	with $\Box \bot$ in $R_0$ signifies that the next point for $s_2$ is not in $R_0$; 
 	 	 the $\epsilon$ in $R_1$ signifies that
there are no points in $R_1$ for $s_2$.  Conjuncting $\Box \bot$ 
in a region signifies that the next point 
does not lie in this region. 
%
% 
%
% Assume the next point with respect to $\Beh(s_{\ell})$ lies in $R_2$.
% Based on this, we combine any $\BD$ of $\Beh(s_{\ell})$, say,  $\BD_2=\Beh(s_{\ell}, R_2, R_j)$ ($R_2 \preceq R_j$)
% such that $\BD_2[R_2]=\wB b$. We combine 
% $\BD_2$ from  region $R_2$ or later,   
%with that of  $\BD_1=\Beh(s_0, R_1, R_2)$ from $R_2$ onwards. This combination of $\BD_1, \BD_2$ results in a new $\BD$ say $\BD_3$.
%$\BD_3[R_0]=\BD_1[R_0],\BD_3[R_1]=\BD_1[R_1]$, keeping the entries in $R_0, R_1$ unchanged.
%For regions $R_i$ such that $R_2 \prec R_i$, 
% $\BD_1[R_i]$ is conjuncted with $\BD_2[R_i]$ to obtain $\BD_3[R_i]$.
%$\BD_3[R_2]$ is obtained by replacing $\nex \Beh(s_{\ell})$ in $\BD_1[R_2]$ 
%with the LTL formula  $\BD_2[R_2]= \wB b$.  
% In the case of $s_{\ell}$, $\BD_2[R^+_1]$ 
% is $b \weaku \bot$ (if $R_2 \prec R_j$) or $\top$ (if $R_j=R_2$). 
%We hence obtain the $\BD$s 
%$(\top, \wB[(a \wedge \psi) \vee b], (a \wedge \psi) \vee b \weaku (a \wedge 
%\wB b), \top)$, 
%$(\top, \wB[(a \wedge \psi) \vee b], (a \wedge \psi) \vee b \weaku (a \wedge 
%\wB b), b \weaku \bot)$ by combining  $\BD_1$ with $\BD_2$, and $\psi=\psi_{s_a}$ is the $\regmtl$ formula 
%for $x.\nex F(s_a)$. 
\begin{figure}[h]
\includegraphics[scale=0.45]{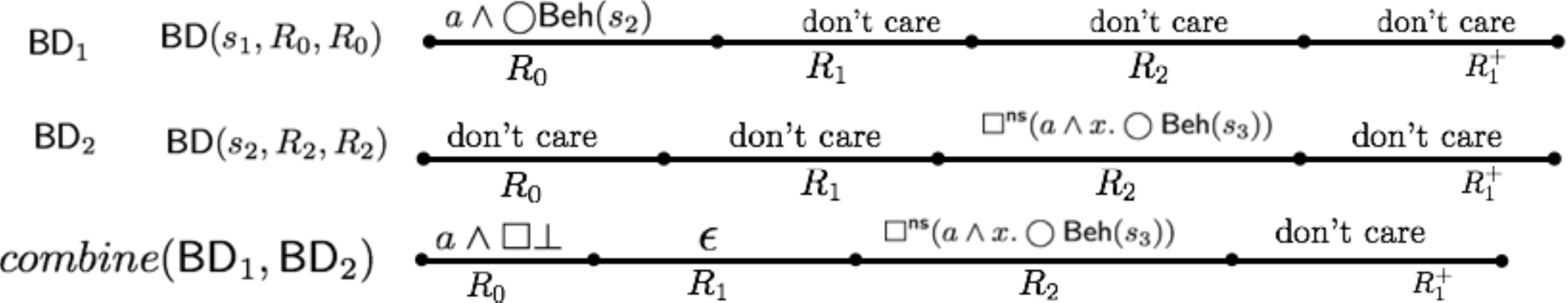}
\caption{Combining $\BD$s}	
\label{last-2}
\end{figure}
 % 	
% Then we combine $\BD_1$ as in (a) above with some $\BD_2=\Beh(s_{\ell}, R_2, R_j)$	($R_2 \preceq R_j$)
% such that $\BD_2[R_2]=\wB \bot$. 
%  $\BD_3[R_0]=\BD_1[R_0],\BD_3[R_1]=\BD_1[R_1]$, as above,
% % $\Beh(s_0, R_1, R_2)[R_0],\Beh(s_0, R_1, R_2)[R_1]$ are unchanged,
% while  $\BD_3[R_i]=\BD_1[R_i] \wedge \wB \bot=\wB \bot$ for $R_2 \preceq R_i \prec R_b$.  
% Also, $\BD_3[R_c]=\BD_1[R_c]\wedge \BD_2[R_c]$ for  $R_b \preceq R_c$. 
%% the entries $\Beh(s_0, R_1, R_2)[R_c]$ for 
%% are conjuncted with  $\Beh(s_{\ell}, R_2, R_j)[R_c]$.
%$\BD_2[R_2]=\wB \bot$, $\BD_2[R^+_1]=b \weaku \bot$.
% We obtain 
%$\BD_3=(\top, \wB[(a \wedge \psi) \vee b], \wB \bot, b \weaku \bot)$ by combining 
%$\BD_1, \BD_2$  on regions $R_2, R^+_1$. 
\end{itemize}
We look at the ``accepting'' $\BD$s in $\BDset(s_0)$, viz., all $\BD(s_0, R_j, R_k)$, 
such that acceptance happens in $R_k$, and $s_0$ has a behaviour starting in $R_j$.
The LTL formulae $\BD(s_0, R_j, R_k)[R]$ are replaced with 
star-free expressions $\re(\BD(s_0, R_j, R_k)[R])$. 
Each accepting $\BD(s_0, R_j, R_k)$ gives an $\sfmtl$ formula 
$\bigwedge_{R_j \preceq R \preceq R_k} \reg_R \re(\BD(s_0, R_j, R_k)[R])$. The disjunction of these across all accepting $\BD$s
is the $\sfmtl$ formula equivalent to $L(\mathcal{A})$. 
\end{proof}

\section{Discussion}
 We propose  $\regmtl$ which significantly increases the expressive power of $\mtl$
   and yet retains decidability over pointwise finite
 words. 
 The $\reg$ operator added to $\mtl$ syntactically subsumes several other modalities in literature including threshold counting, modulo counting 
 and the pnueli modality. Decidability of $\regmtl$ 
 is  proved by giving an  equisatisfiable reduction  to $\mtl$ 
 using oversampled temporal projections. This reduction has elementary complexity and allows us to identify two
 fragments of $\regmtl$ with $\mathsf{2EXPSPACE}$ and $\mathsf{EXPSPACE}$ satisfibility. 
 In previous work \cite{time14}, oversampled temporal projections were used to 
 reduce $\mtl$ with punctual future and non-punctual past to $\mtl$. Our reduction  
 can be combined with the one in \cite{time14} to obtain decidability of $\regmtl$ and elementary decidability 
 of $\mitl+\ureg$ + non-punctual past. These are amongst the most expressive decidable
extensions of $\mtl$ known so far.
 We also show an exact logic-automaton correspondence between the fragment $\sfmtl$ and 
 $\po$-a-clock ATA. Ouaknine and Worrell reduced $\mtl$ to $\po$-1 clock ATA. Our $\sfmtl$ achieves the
 converse too. It is not difficult to see that full $\regmtl$ can be reduced to equivalent 1 clock alternating timed automata. This  provides
 an alternative proof of decidability of $\regmtl$ but the proof will not extend to decidability of  $\regmtl+$ non-punctual past, nor prove elementary 
 decidability of $\mitl+\ureg+$non-punctual past. Hence, we believe that our proof technique has some advantages. 
  An interesting related formalism of timed regular expressions was defined by Asarin, Maler, Caspi, and shown to be expressively equivalent 
 to timed automata. Our $\regmtl$ has orthogonal expressivity, and it is boolean closed.
% The language of example \ref{eg1} cannot be captured by a classical timed automaton \cite{AD94}.
 The exact expressive power of $\regmtl$ which is between 1-clock ATA and $\po$-1-clock ATA is open. 
\bibliographystyle{plain}
\bibliography{papers}

\newpage 
\appendix

\centerline{\bf{\Large Appendix}}

\section{Rational Expressions and Star-Free Expressions}
\label{app:rat-sfexp}
We briefly introduce rational expressions and star-free expressions over an alphabet $\Sigma$. 
A rational expression over $\Sigma$ is constructed inductively 
using the atomic expressions $a \in \Sigma, \epsilon, \emptyset$ 
and combining them using concatenation, Kleene-star and  union.

A star-free expression also has the same atomic expressions, and allows combination using 
union, concatenation and complementation. 
For instance, $\Sigma^*$ is star-free since it can be written as $\neg \emptyset$.

\section{Exclusive Normal Form}
\label{app:exnf}
We  eliminate $\reg_I \at$ and $x \ureg_{I',\at} y$ respectively from temporal definitions 
 $\wB[w \leftrightarrow \reg_I \at]$ or $\wB[w \leftrightarrow x \ureg_{I',\at} y]$.
The idea is to first mark each point of the  timed word $\rho$ over $\Sigma \cup W$ 
with the information whether $\at$ is true or not at that point, obtaining a simple extension $\rho'$ 
of $\rho$, and then to refine 
this information by checking if $\at$ is true within an interval $I$.

Assume $\at=\re(\Ss)$. To say that $\re(\Ss)$ is true starting at  a point in the timed word, we have to look at the truth of subformulae in $\Ss$. The alphabet of the minimal DFA to check $\re(\Ss)$ is hence $2^{\Ss}=\Ss'$. This results in the minimal DFA 
accepting an expression $\re'(\Ss')$, and not $\re(\Ss)$. In the following, we show that 
$\re'(\Ss')$ is equivalent to $\re(\Ss)$.

The first thing we do to avoid dealing with sets of formulae of $\Ss$ being true at each point is to assume that 
the sets $\Ss$ are \emph{exclusive}: that is, at any point, exactly one formula from $\Ss$ can be true. 
If the sets $\Ss$ are all exclusive, then the formula is said to be in 
\emph{Exclusive Normal Form}.
If $\Ss$ is exclusive, then we will be  marking positions in the word 
over $\Ss$ and not $\mathcal{P}(\Ss)$. This way, the untimed words 
$\mathsf{Seg}(\Ss, i, j)$ as well as $\mathsf{TSeg}(\Ss, i, I)$ that were used in the semantics 
of  $\varphi_1 \uregm_{I, \re(\Ss)}\varphi_2$, $\regm_{I, \re(\Ss)}$ respectively 
will be words over $\Ss$. The satisfaction of $\varphi_1 \uregm_{I, \re(\Ss)}\varphi_2$, $\regm_{I, \re(\Ss)}$ at any point $i$ will then amount to simply checking 
if  $\mathsf{Seg}(\Ss,i,j),\mathsf{TSeg}(\Ss, i, I) \in L(\re(\Ss))$.  

We now show that the exclusiveness of $\Ss$ can be achieved by a simple translation. 
 
\begin{lemma}
Given any $\regmtl$ formula $\varphi$ of the form $\regm^{\Ss}_{I, \re(\Ss)}$ or $\varphi_1 \uregm^{\Ss}_{I, \re(\Ss)}\varphi_2$, 
there exists an equivalent formula $\psi \in \regmtl$ in exclusive normal form. 
\end{lemma}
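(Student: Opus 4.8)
The plan is to make $\Ss$ exclusive by replacing it with its set of boolean atoms and then translating the rational expression by a length-preserving regular substitution. Write $\Ss=\set{\psi_1,\dots,\psi_k}$. For each subset $T\subseteq \Ss$ define the formula $\chi_T=\bigwedge_{\psi\in T}\psi \wedge \bigwedge_{\psi\notin T}\neg\psi$, and set $\Ss'=\set{\chi_T \mid T\subseteq \Ss}$. Since the $\chi_T$ correspond exactly to the possible truth assignments to the formulae of $\Ss$, at every position of every word precisely one $\chi_T$ holds; hence $\Ss'$ is exclusive. Each $\chi_T$ is a boolean combination of $\regmtl$ formulae and is therefore itself a $\regmtl$ formula, so $\Ss'$ is a legitimate set of formulae of interest and any formula built over it stays in $\regmtl$.

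Next I would translate the rational expression. Define the substitution $h$ on atoms by $h(\psi)=\sum_{T\ni\psi}\chi_T$ (the union of all atoms whose truth forces $\psi$), and extend it structurally by $h(\re_1.\re_2)=h(\re_1).h(\re_2)$, $h(\re_1+\re_2)=h(\re_1)+h(\re_2)$ and $h([\re]^*)=[h(\re)]^*$. Because every $h(\psi)$ is a union of \emph{single letters} of $\Ss'$, $h$ is a length-preserving regular substitution. I then set $\re'(\Ss')=h(\re(\Ss))$ and take the required formula to be $\psi=\regm^{\Ss'}_{I,\re'(\Ss')}$ in the first case and $\psi=\varphi_1\uregm^{\Ss'}_{I,\re'(\Ss')}\varphi_2$ in the second, leaving $\varphi_1,\varphi_2,I$ untouched.

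The correctness argument rests on a single observation. Fix a position $i$ (and $j$, for the $\ureg$ case) and let $T_k\subseteq\Ss$ denote the set of formulae of $\Ss$ true at position $k$. By exclusiveness of $\Ss'$, the word $\mathsf{Seg}(\Ss',i,j)=\chi_{T_{i+1}}\cdots\chi_{T_{j-1}}$ is a single word, so $[\mathsf{Seg}(\Ss',i,j)]^{\singl}=\set{\mathsf{Seg}(\Ss',i,j)}$ and the marking condition for $\psi$ reduces to plain membership. By the factorization property of regular substitutions, $\chi_{T_{i+1}}\cdots\chi_{T_{j-1}}\in L(h(\re(\Ss)))$ holds iff there is a word $w\in L(\re(\Ss))$ with $w_k\in T_k$ for every $k$; since $[\mathsf{Seg}(\Ss,i,j)]^{\singl}$ is exactly the set of such $w$, this membership is equivalent to $[\mathsf{Seg}(\Ss,i,j)]^{\singl}\cap L(\re(\Ss))\neq\emptyset$, which is the original semantic condition. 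The surrounding conjuncts (existence of $j$ with $\rho,j\models\varphi_2$, $\tau_j-\tau_i\in I$, and $\varphi_1$ on the interior) are carried over verbatim, and the identical computation with $\mathsf{TSeg}$ settles the $\reg$ case. Hence $L(\psi)=L(\varphi)$.

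The step needing the most care, and the main obstacle, is precisely this collapse of the ``$\singl$-then-intersect'' semantics into membership: it works only because $h$ maps each letter to single letters (so matched words have equal length), and because positions where \emph{no} $\psi\in\Ss$ holds are handled correctly—there $T_k=\emptyset$, the original singleton product is empty, while $\chi_\emptyset$ occurs in no $h(\psi)$ and so can match no letter of $\re'$, making both sides false in agreement. I would flag that $|\Ss'|=2^{|\Ss|}$, an exponential blow-up that is harmless for the existence claim stated here. Finally, to put an arbitrary formula (with nested modalities) fully into exclusive normal form, I would apply this transformation by structural induction, innermost modalities first, so that the atoms occurring in each $\chi_T$ are themselves already in exclusive normal form.
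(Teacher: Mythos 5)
Your proposal is correct and follows essentially the same route as the paper: form the set $\Ss'$ of complete boolean atoms over $\Ss$, replace each atomic $\psi$ in the rational expression by the union of the atoms containing it positively, and observe that exclusiveness collapses the $\singl$-then-intersect semantics to plain membership. Your write-up is in fact more careful than the paper's (which just asserts equivalence ``by inducting on the structure of $\re$''), in particular in handling positions where no formula of $\Ss$ holds.
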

 \begin{proof}
 	Let $\Ss=\{\phi_1, \dots, \phi_n\}$. Construct set $\Ss'$ consisting 
of all formulae of the form $\bigwedge_{i \in K} \phi_i \wedge \bigwedge_{
i \notin K} \neg \phi_i$ for all possible subsets $K \subseteq \{1,2,\dots,n\}$. 

		Consider any formula of the form $\reg_{I}(\re(S))$. 
			Let $W_i$ denote the set consisting of all subsets of $\{1,2,\ldots,K\}$ which contains $i$.
		The satisfaction of $\varphi_i$ is then equivalent to that of $\sum \limits_{W\in W_i} \phi_{W}$.
		We can thus replace any $\varphi_i$ occurring in $\re(\Ss)$ with $\sum \limits_{W\in W_i} \phi_{W}$. 
	This results in obtaining a rational expression $\re'$ over $\Ss'$.
	
	It can be shown that $\re(\Ss)$ is equivalent to $\re'(\Ss')$ by inducting on the structure of $\re$.
	 \end{proof}

Thus, the minimal DFA we construct for $\re(\Ss)$ in the temporal definition will end up 
accepting $\re'(\Ss')$, equivalent to $\re(\Ss)$.

\section{1-$\mathsf{TPTL}$ for  $\regm_{[l,u)} \at$}
\label{app:1-tptl}
We encode in 1-$\mathsf{TPTL}$ an accepting run going through a sequence of merges capturing $\regm_{[l,u)} \at$
at a point $e$. 
To encode an accepting run going through a sequence of merges capturing $\regm_{[l,u)} \at$
at a point $e$, we assert $\varphi_{chk1} \vee \varphi_{chk2}$ at $e$, assuming $l \neq 0$. 
If $l=0$, we assert  $\varphi_{chk3}$. Recall that $m$ is the number of states in the minimal DFA accepting $\at$.
 \begin{itemize}
 \item Let $\mathsf{cond1}=0 \le n < m$, and
 \item Let $\mathsf{cond2}=1\le i_1 <i_2 <\ldots <i_n<i\le m$.
\item   		$\varphi_{chk1} = \bigvee \limits_{\mathsf{cond1}}\bigvee \limits_{\mathsf{cond2}}
 		x.\fut (x<l \wedge \nx[(x \geq l) \wedge \mathsf{GoodRun}])$\\
 \item 		$\varphi_{chk2} = \bigvee \limits_{\mathsf{cond1}}\bigvee \limits_{\mathsf{cond2}}
 		x.(\nx[(x \geq l) \wedge \mathsf{GoodRun}])$
 \item 		$\varphi_{chk3} = \bigvee \limits_{\mathsf{cond1}}\bigvee \limits_{\mathsf{cond2}}
 		x.\mathsf{GoodRun}$

 \end{itemize}
  
 		 		 where 
 		$\mathsf{GoodRun}$ is the formula which describes the run starting in $q_1$ in thread $\Th_i$, going through a sequence of merges, and 
 		witnesses $q_f$ in a merged thread $\Th_{i_1}$  		
 		at a point when  $x \in [l, u)$, and is the maximal point in $[l, u)$. 
 		
 		  	$\mathsf{GoodRun}$ is given by 	
 $\Th_i(q_1) \wedge [\{\neg \mathsf{Mrg}(i)\} \until [\merge(i_n,i) \wedge 
 \{\neg \mathsf{Mrg}(i_n) \} \until [\merge(i_{n-1},i_n)  \ldots\\
 \{\neg \mathsf{Mrg}(i_2)\} \until [\merge(i_1,i_2) \wedge \bigvee \limits_{q \in Q_F} \Next(\Th_{i_1}(q)) \wedge x \in [l,u) \wedge \nx (x >u) ]\ldots ]] ]]$ 
 	where $\mathsf{Mrg}(i)$ is the formula $\bigvee \limits_{j<i} \merge(j,i)$.
 	
 	The idea is to freeze the clock at the current point $e$, and start checking a good run from the first point in the interval $[l,u)$. 
 $\varphi_{chk1}$ is the case when the next point after point $e$ is not at distance $[l, u)$
 from $e$, while  $\varphi_{chk2}$ 
 handles the case when the next point after $e$ is at distance $[l,u)$ from $e$. In both cases, $l>0$.  (If $l=0$, we assert  $\varphi_{chk3}$).
  	Let $\Th_i$  be the thread 
 	having the initial state $q_1$ in the start of the interval $I$. 
  Let $i_1$ be the index of the thread to which $\Th_i$ eventually merged (at the last point in the interval $[l,u)$ from $e$). 
  The next expected state of thread $\Th_{i_1}$ is one of the final states if and only if the sub-string within the interval $[l,u)$ from the point $e$ satisfies the regular expression $\at$. Note that when the frozen clock is $\geq l$, 
     we start the run with $\Th_i(q_1)$, go through the merges, and check that 
  $x \in I$ when we encounter a thread $\Th_{i_1}(q_f)$, with $q_f$ being a final state. 
  To ensure that we have covered checking all points in $\tau_e+I$, we ensure that at the next point after 
  $\Th_{i_1}(q_f)$, $x >u$.   The decidability of $1{-}\mathsf{TPTL}$ gives the decidability 
  of $\regmtl$.

 \section{Proof of Lemma  \ref{elm-reg}}
 \label{app:pref-suf}
 \begin{proof}
 		Starting with the simple extension $\rho'$ having the information about the runs of $\mathcal{A}_{\at}$, 
 	we explain the construction of the oversampled extension $\rho''$ as follows:	
 	\begin{itemize}
	\item  We first oversample $\rho'$ at all the integer timestamps and mark them with propositions in $C = \{c_0,\ldots,c_{max-1}\}$ where $max$ is the maximum constant used in timing constraints of the input formulae. 
  An integer timestamp $k$ is marked $c_i$ if and only if $k=M(max)+i$ where $M(max)$ denotes a non-negative integral multiple of $max$ and 
  $0 \leq i \leq max-1$.  This can be done easily by the formula \\
  $c_0{\wedge} {\bigwedge \limits_{i \in \{0,\ldots max-1\}}}\wB(c_i \rightarrow \neg \fut_{(0,1)} (\bigvee C) \wedge \fut_{(0,1]} c_{i \oplus 1})$
	where $x \oplus y $  is addition of $x, y$ modulo $max$.

	\item Next,  a new point marked $\ovs$ is introduced at all time points $\tau$ whenever $\tau-l$ or $\tau-u$ is marked with $\bigvee \Sigma$.  
  This ensures that for any time point $t$ in $\rho''$, the points $t+l, t+u$ are also available 
 in $\rho''$.
 \end{itemize}
After the addition of integer time points, and points marked $\ovs$, we obtain the 
 oversampled extension $(\Sigma \cup W \cup \Threads \cup \Merge, C \cup \{\ovs\})$ $\rho''$
of $\rho'$.\\
 	   	\noindent  To check the truth of $\reg_{[l,u)} \at$ at a point $v$, 
 		we need to assert the following: starting from the time point $\tau_v+l$, 
 		we have to check the existence of an accepting run $R$ in $\mathcal{A}_{\at}$ 
 		such that the run starts from the first action point in the interval 		$[\tau_v+l, \tau_v+u)$, is a valid run which goes through some possible sequence of merging of threads, and 
 		witnesses a final state at the last action point in $[\tau_v+l, \tau_v+u)$.  To capture this, we start 
 		at the first action point in $[\tau_v+l, \tau_v+u)$ with initial state $q_1$ 
 		in some thread $\Th_i$, and proceed for some time with $\Th_i$ active, until we reach a point 
 		where $\Th_i$ is merged with some $\Th_{i_1}$. This is followed by $\Th_{i_1}$ remaining active until 
 		we reach a point where $\Th_{i_1}$ is merged with some other thread $\Th_{i_2}$ and so on, until we reach 
 		the last such merge where some thread say $\Th_n$ witnesses a final state at the last action point 
 		in  $[\tau_v+l, \tau_v+u)$. A nesting of until formulae captures this sequence of merges of the threads, 
 		starting with $\Th_i$ in the initial state $q_1$. Starting at $v$,
 		we have the point marked $\ovs$ at $\tau_v+l$, which helps us to anchor there and
 		 start asserting the existence of the run.  
 		 
 		 The issue is that the nested until can not 
 		 keep track of the time elapse since $\tau_v+l$.
 		 However, note that the greatest integer point in $[\tau_v+l, \tau_v+u)$ is uniquely marked 
 		 with $c_{j \oplus u}$ whenever $c_j \leq \tau_v \leq c_{j \oplus 1}$ are the closest integer points to $\tau_v$.
 		  We make use of this by (i) asserting the run of $\mathcal{A}_{\at}$ 
 		 until we reach $c_{j \oplus u}$ from $\tau_v+l$. Let the part of the run $R$ that has been witnessed until 
 		 $c_{j \oplus u}$ be $R_{pref}$. Let $R=R_{pref}.R_{suf}$ be the accepting run. (ii) From $\tau_v+l$, we jump to $\tau_v+u$, and 
 		 assert the reverse of $R_{suf}$ till we reach  $c_{j \oplus u}$. This ensures that 
 		 $R=R_{pref}.R_{suf}$ is a valid run in the interval $[\tau_v+l, \tau_v+u)$.

 		 Let $\mathsf{Mrg}(j)=[\bigvee \limits_{k<j} \merge(k,j) \vee c_{j \oplus u}]$. 
 		
 		We first write a formula that captures $R_{pref}$. Given a point $v$, the formula 
 		captures a sequence of merges through threads $i>i_1> \dots >i_{k_1}$, 
 		and $m$ is the number of states of $\mathcal{A}_{\at}$.      
 		 
 		 Let $\varphi_{Pref, k_1}=\bigvee_{m \geq  i> i_1 > \dots > i_{k_1}} \mathsf{MergeseqPref}(k_1)$ where $\mathsf{MergeseqPref}(k_1)$ is the formula 
 		$$\fut_{[l,l]}\{\neg (\bigvee \Sigma \vee c_{i \oplus u}) \until [\Th_i(q_1) \wedge (\neg \mathsf{Mrg}(i) \until 
 		[\merge(i_1,i) \wedge$$ 
 		$$(\neg \mathsf{Mrg}(i_1) 
 		\until [\merge(i_2,i_1) \wedge \dots   
 		(\neg \mathsf{Mrg}(i_{k_1}) \until c_{i \oplus u}) ])])]\}$$
 	Note that this asserts the existence of a run till $c_{i \oplus u}$ going 	through a sequence of merges starting at $\tau_v+l$. Also, $\Th_{i_{k_1}}$ is the guessed 
 	last active thread till we reach $c_{i \oplus u}$ which will be merged 
 	in the continuation of the run from $c_{i \oplus u}$.

\begin{figure}
\includegraphics[scale=0.4]{pref-suf.pdf}	
\caption{The linking thread at $c_{j \oplus u}$. The points in red are the oversampling integer points, and so are 
$\tau_v+l$ and $\tau_v+u$. }
\label{fig:runinfo-2}
\end{figure}

 	Now we start at $\tau_v+u$
 	and assert that we witness a final state sometime as part of some thread $\Th_{i_k}$, 
 	and walk backwards such that some thread $i_t$ got merged to $i_k$, and so on,
 	we reach a thread $\Th_{i_c}$ to which thread  $\Th_{i_{k_1}}$ merges with. 
 	Note that $\Th_{i_{k_1}}$
 	was active 
 	when we reached $c_{i \oplus u}$. This thread $\Th_{i_{k_1}}$ is thus the ``linking point'' of the forward and reverse runs.   See Figure \ref{fig:runinfo-2}. 
 		   		
 		   		Let $\varphi_{Suf, k, k_1}=\bigvee_{1\leq i_{k}<\dots<i_{k_1}\leq m} \mathsf{MergeseqSuf}(k,k_1)$
 		   		where $\mathsf{MergeseqSuf}(k,k_1)$ is the formula \\
 		   		
 		$\fut_{[u,u]}\{\neg (\bigvee \Sigma \vee c_{i \oplus u}) \since [(\Th_{i_k}(q_f)) \wedge (\neg \mathsf{Mrg}(i_k) \since $ 
 		$[\merge(i_k,i_{k-1}) \wedge (\neg \mathsf{Mrg}(i_{k-1}) 
 		\since$ \\
 		$ 		[\merge(i_{k-1},i_{k-2}) \wedge \dots  \merge(i_c, i_{k_1}) \wedge 
 		 		(\neg \mathsf{Mrg}(i_{k_1}) \since c_{i \oplus u})])])]\}$.
 		 		
  For a fixed sequence of merges, the  formula $$\varphi_{k,k_1}=\bigvee_{k \geq k_1 \geq 1} [\mathsf{MergeseqPref}(k_1) {\wedge} \mathsf{MergeseqSuf}(k,k_1)]$$ captures an accepting run 
using the merge sequence. Disjuncting over all possible sequences for a starting thread $\Th_i$, and disjuncting over all 
possible starting threads gives the required formula capturing  an accepting run.  
Note that 
this resultant formulae is also relativized with respect to $\Sigma$ and also conjuncted with 
$Rel(\mathsf{Run}, \Sigma)$ (where $\mathsf{Run}$ is the formula capturing the run information in $\rho'$ as seen in section \ref{reg-xtnd})  to obtain the equisatisfiable $\mathsf{MTL}$ formula. The relativization of $\mathsf{Run}$ with respect to $\Sigma$ 
can be done as illustrated in Figure \ref{eg-os}.) 
  Note that $\mathsf{S}$ can be eliminated obtaining an equisatisfiable $\mtl[\until_I]$ formula 
modulo simple projections \cite{deepak08}. 

If $I$ was an unbounded interval of the form $[l, \infty)$, then in formula $\varphi_{k,k_1}$, we do not require  $\mathsf{MergeseqSuf}(k,k_1)$; instead, 
we will go all the way till the end of the word, and assert $\Th_{i_k}(q_f)$ at the last action point of the word. 
Thus, for unbounded intervals, we do not need any oversampling at integer points.  
 		\end{proof}
%%%%%%%%%%%%%%%%%%%%%%%%%%%%%%%%%%%%%%%%%%%%%%% 	
 	\section{Elimination of $\ureg_{I,\re}$}
 	 \label{app:ureg}
 	\begin{lemma}
 		 		Let $T=\wB[a \leftrightarrow x \ureg_{I,\re} y]$ be a  temporal definition  built from $\Sigma \cup W$. 
 		Then we synthesize a formula $\psi \in \mtl$ 
 		over  $\Sigma \cup W \cup X$ such that $T$ is equivalent to $\psi$ modulo oversampling. 
 	%	such that $T\equiv \exists \downarrow X. \psi$. 
 	\label{elm-ureg}\end{lemma}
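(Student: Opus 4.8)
The plan is to mirror the proof of Lemma~\ref{elm-reg}, reusing its run-embedding machinery unchanged and only reworking the piece that is specific to the modality. Concretely, I would keep the simple extension $\rho'$ carrying the $\Threads$ and $\Merge$ propositions together with the formula $\mathsf{Run}$, which launches a fresh copy of the minimal DFA $\mathcal{A}_\at$ at every position and merges threads down to at most $m$ live ones. I would then oversample $\rho'$ exactly as before, adding the integer landmarks marked $c_0,\dots,c_{max-1}$ with modular successor $\oplus$, and the points marked $\ovs$ at every $t+l$ and $t+u$. After synthesizing the $\mtl$ formula that captures $x \ureg_{[l,u),\at} y$ at an anchor point $v$ on this oversampled word, I would relativize it with respect to $\Sigma$ and conjoin $Rel(\mathsf{Run},\Sigma)$, obtaining an $\mtl$ formula equivalent to $T$ modulo oversampling, just as for $\reg_I$.

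Three features separate $\ureg$ from $\reg_I$ and must be encoded in the new formula. First, the accepting run reads the segment $\mathsf{Seg}(\Ss,v,j)$, so it is the thread born at the position immediately after $v$ (reached with $\nx$), not the first thread launched inside the interval; in particular the run starts \emph{before} the interval and ends \emph{inside} it. Second, the run must terminate at an existentially chosen witness $j$ with $\tau_j-\tau_v \in [l,u)$ at which $y$ holds, rather than at the last action point before $\tau_v+u$. Third, $x$ must hold at every action point strictly between $v$ and $j$. I would enforce the third condition by guarding every step of the merge-tracking until with $x$, and pin the timing of the second using $\fut_{[l,u)}$ to reach $j$. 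Since flattening forces $I$ to be a unit-length interval $[l,l+1)$ (or unbounded), a single integer landmark $c_{j\oplus u}$ is singled out inside the relevant window, as in Lemma~\ref{elm-reg}.

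For the bounded case I would keep the prefix/suffix split. The prefix formula, anchored at $v$, starts the run at the next position and walks the chain of merges $\merge(i_1,i_0),\merge(i_2,i_1),\dots$ forward, each step guarded by $x$, until it reaches the landmark $c_{j\oplus u}$, recording the last active ``linking'' thread there. The suffix formula uses $\fut_{[l,u)}$ to reach the witness $j$ (requiring $y$ and $\Next(\Th_{i}(q_f))$ for a final $q_f$) and then walks the merge chain backward with $\since$, again guarded by $x$, to the same linking thread at $c_{j\oplus u}$. Insisting that both halves meet at one active linking thread at $c_{j\oplus u}$ forces them to describe a single coherent run, exactly as $\mathsf{MergeseqPref}$ and $\mathsf{MergeseqSuf}$ do for $\reg_I$. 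For the unbounded case $I=[l,\infty)$ no integer oversampling is needed: after jumping to $\tau_v+l$ through the $\ovs$ point I track one forward $x$-guarded merge chain to a point $j\ge \tau_v+l$ carrying $y$ and a final state.

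The hard part will be the stitching in the bounded case. Unlike $\reg_I$, where both endpoints of the run are fixed by the interval, here the run starts at $v$ (strictly before the interval) and ends at a chosen $j$ that may fall on either side of the landmark $c_{j\oplus u}$ within the unit interval, so getting the forward prefix and the backward suffix to meet at a common linking thread needs a short case analysis on the position of $j$ relative to $c_{j\oplus u}$. I would also have to treat the empty-segment case (no action point strictly between $v$ and $j$, which requires $\epsilon \in L(\at)$) and make sure the $x$-guard is imposed uniformly across the gap $[\tau_v,\tau_v+l)$ and the interval. Once these cases are dispatched, equivalence modulo oversampling follows by the same argument as in Lemma~\ref{elm-reg}.
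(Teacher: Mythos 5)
Your bounded-interval construction is essentially the paper's: the same simple extension with $\Threads$, $\Merge$ and $\mathsf{Run}$, a forward $\mathsf{MergeseqPref}$-style chain from the point just after $v$, a backward $\since$-chain from the existential witness $j$ (where $y$ and a final state are required), and a stitch at a uniquely marked integer landmark. The paper stitches at $c_{i\oplus 1}$, the first integer point after $\tau_v$, which (for a unit interval $[l,l+1)$ with $l\ge 1$) is guaranteed to precede the witness and so avoids the case analysis you flag on which side of the landmark $j$ falls; your choice of landmark forces that split but is workable. Your handling of the $x$-guard (guarding each step of the untils) differs from the paper, which instead pushes $x$ into $\re$ via the decomposition of Appendix~\ref{app:exp}; both are fine. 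One deliberate feature you lose: the paper explicitly does \emph{not} introduce $\ovs$ points or punctual anchors for $\ureg$, precisely so that starting from $\mitl+\ureg$ the output stays in $\mitl$ (this is what gives Theorem~\ref{mitl-ureg}.3); reintroducing $\fut_{[l,l]}$-style jumps to $\ovs$ points still proves the lemma as stated ($\psi\in\mtl$) but forfeits that corollary.

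The genuine gap is your unbounded case. You propose to jump to $\tau_v+l$ and track ``one forward $x$-guarded merge chain to a point $j\ge\tau_v+l$.'' This fails on two counts. First, for $\ureg$ the expression must be checked on $\mathsf{Seg}(\Ss,v,j)$, which begins at the point immediately after $v$, not at $\tau_v+l$; a chain started at $\tau_v+l$ reads the wrong segment, and a thread picked up at $\tau_v+l$ cannot be identified as the continuation of the run born just after $v$ without tracking the prefix of merges across $[\tau_v,\tau_v+l]$. Second, if you instead start the chain just after $v$, a single nested until cannot certify that its terminal point lies at distance $\ge l$ from $v$: the $\ovs$ points at $t+l$ exist for every $t$ and are not uniquely attributable to $v$, so they cannot serve as an in-chain marker. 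This is exactly the difficulty the paper isolates (``the possible non-uniqueness of $c_{i\oplus 1}$ \dots poses a problem''), and it resolves it with a two-case construction around the uniquely marked landmark $c_{i\oplus l\oplus 1}=\lceil\tau_v+l\rceil$: either the witness lies in $[\tau_v+l,\lceil\tau_v+l\rceil)$, detected by walking back from $j$ under $\neg c_{i\oplus l\oplus 1}$ to $c_{i\oplus 1}$, or it lies beyond $\lceil\tau_v+l\rceil$, in which case the prefix is asserted up to $c_{i\oplus l\oplus 1}$ and the suffix is then asserted \emph{forward} from that landmark. Note this contrasts with the unbounded case of Lemma~\ref{elm-reg}, where the run's endpoint is the last action point of the word and hence directly expressible; for $\ureg$ the endpoint is an existentially chosen $j$, so the ``unbounded needs no oversampling'' shortcut does not transfer. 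Your proof needs this piece supplied.
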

 		\begin{proof}
 	We discuss first the case of bounded intervals.   
 The proof technique is very similar to Lemma \ref{elm-reg}. The differences that arise are as below.
 \begin{enumerate}
 \item Checking $\re$ in $\reg_{I}\re$ at point $v$  is done at all points $j$ such that $\tau_j - \tau_v \in I$. 
   To ensure this, we needed the punctual  modalities $\fut_{[u,u]},\fut_{[l,l]}$.  
    On the other hand,  to check  $\ureg_{I, \re}$ from a point $v$, the check on $\re$ is done from the first point after $\tau_v$, and 
 ends at some point within $[\tau_v+l,\tau_v+u)$. Assuming $\tau_v$ lies between integer points $c_i, c_{i \oplus 1}$, 
  we can witness the forward run in $\mathsf{MergeseqPref}$ from the next point after $\tau_v$ till $c_{i \oplus 1}$, and 
  for the reverse run, go to some point in $\tau_v+I$ where the final state is witnessed in a merged thread, and walk back till 
 $c_{i \oplus 1}$, ensuring the continuity of the threads merged across  $c_{i \oplus 1}$.
    The punctual modalities are hence not required 
  and we do not need points marked  $\ovs$. 
   \item The formulae $\mathsf{MergeseqPref}(k_1)$, $\mathsf{MergeseqSuf}(k,k_1)$ 
   of the lemma \ref{elm-reg} are replaced as follows:
   \begin{itemize}
\item $\mathsf{MergeseqPref}(k_1) :     \{\neg (\bigvee \Sigma \vee c_{i \oplus 1}) \until
 [\Th_i(q_1) \wedge (\neg \mathsf{Mrg}(i) \until
 		[\merge(i_1,i) \wedge (\neg \mathsf{Mrg}(i_1) \until$\\
 		$ [\merge(i_2,i_1) \wedge \dots (\neg \mathsf{Mrg}(i_{k_1}) \until c_{i \oplus 1}) ])])]\}$.
  
 \item   $\mathsf{MergeseqSuf}(k,k_1) : \fut_{I}\{[(\Th_{i_k}(q_f)) {\wedge} (\neg \mathsf{Mrg}(i_k) \since$
 		$[\merge(i_k,i_{k-1}) \wedge (\neg \mathsf{Mrg}(i_{k-1}) 
 		\since$\\
 		$ [\merge(i_{k-1},i_{k-2}) \wedge$
 		 $\dots  \merge(i_c, i_{k_1}) \wedge 
 		 		(\neg \mathsf{Mrg}(i_{k_1}) \since c_{i \oplus 1})])])]\}$
 
   \end{itemize}
 		
  \end{enumerate}
  The above takes care of $\re$ in $x \ureg_{I,\re} y$ : we also need to 
  say that  $x$ holds continously from the current point to some point in $I$. This is done by pushing $x$ into $\re$ 
  (see the translation of $\varphi_1 \ureg_{I, \re} \varphi_2$ to
  $\reg_I \re'$ in Appendix \ref{app:exp}). 
  The resultant formulae is  relativized with respect to $\Sigma$ and also conjuncted with 
$Rel(\mathsf{Run}, \Sigma)$ to obtain the equisatisfiable $\mathsf{MTL}$ formula.   
 
 Now we consider unbounded intervals. 
 The major challenge for the unbounded case is that the point where we asserting $\Th_{i_k}(q_f)$ (call this point $w$) 
 may be far away from the point $v$ where we begin : that is, if $\tau_v$ is flanked between integer points 
 marked $c_i$ and $c_{i \oplus 1}$, it is possible to see multiple occurrences of 
 $c_{i \oplus 1}$ between $\tau_v$ and 
 and the point in $\tau_v +I$ which witnesses $\Th_{i_k}(q_f)$. 
 In this case, when walking back reading the reverse of the suffix, it is not easy to stitch 
 it back to the first $c_{i \oplus 1}$  seen after $\tau_v$.
  The possible non-uniqueness of $c_{i \oplus 1}$ thus poses a problem in reusing our technique  in the bounded interval case.  
 Thus we consider two cases:
\\ Case 1: In this case, we assume that our point $w$ lies within 
$[\tau_v+l, \lceil \tau_v +l \rceil)$. Note that $\lceil \tau_v +l \rceil$ is the nearest point from $v$ marked with 
$c_{i \oplus l \oplus 1}$. This can be checked by asserting $\neg c_{i\oplus l \oplus 1}$ all the way till $c_{i\oplus 1}$ while walking backward 
from $w$, where $\Th_{i_k}(q_f)$ is witnessed. 
 The formula $\mathsf{MergeseqPref}(k_1)$ does not change. $\mathsf{MergeseqSuf}(k,k_1)$ is as follows:
\\$$\fut_{[l,l+1)}\{[(\Th_{i_k}(q_f)) \wedge (\neg \mathsf{Mrg}'(i_k) \since 
 		[\merge(i_k,i_{k-1}) \wedge (\neg \mathsf{Mrg}'(i_{k-1}) $$
 		$$\since [\merge(i_{k-1},i_{k-2}) \wedge \dots  \merge(i_c, i_{k_1}) \wedge 
 		 		(\neg \mathsf{Mrg}'(i_{k_1}) \since c_{i \oplus 1})])])]\}$$

where $\mathsf{Mrg}'(i)=[\bigvee \limits_{j<i} \merge(j,i) \vee c_{i \oplus l \oplus 1}]$
\\ Case 2: In this case, we assume the complement. That is the point $w$ occurs after $\lceil \tau_v +l \rceil$. In this case, we assert the prefix till $c_{i \oplus l \oplus 1}$ and then continue asserting the suffix from this point in the forward fashion unlike other cases. The changed $\mathsf{MergeseqPref}$ and $\mathsf{MergeseqSuf}$ are as follows:
\begin{itemize}
\item 
$\mathsf{MergeseqPref}(k_1)$:
    $$\{\neg (\bigvee \Sigma \vee c_{i \oplus l \oplus 1}) \until [\Th_i(q_1) \wedge (\neg \mathsf{Mrg}(i) \until 
 		[\merge(i_1,i) \wedge$$
 		 $$(\neg \mathsf{Mrg}(i_1) 
 		\until [\merge(i_2,i_1) \wedge \dots   
 		(\neg \mathsf{Mrg}(i_{k_1}) \until c_{i \oplus l \oplus 1}) ])])]\}$$
 \item   $\mathsf{MergeseqSuf}(k,k_1)$:
$$\fut_{[l+1,l+2)}\{[c_{i \oplus l \oplus 1} \wedge (\neg \mathsf{Mrg}(i_{k_1}) \until 
 		[\merge(i_c, i_{k_1}) \wedge (\neg \mathsf{Mrg}(i_c) $$
 		$$\until [\merge(i_c,i_{k_1}) \wedge \dots  \merge(i_{k-1}, i_{k-2}) \wedge 
 		 		(\neg \mathsf{Mrg}(i_{k-1}) \until$$
 		 		 $(\Th_{i_k}(q_f))])])]\}$
where $\mathsf{Mrg}(i)=[\bigvee \limits_{j<i} \merge(j,i) ]$
  
\end{itemize}

   	\end{proof}

\subsection{Complexity of $\regmtl$ Fragments}
Given a formula $\varphi$ in ($\mitl$ or $\mtl$ or $\regmtl$), the size of $\varphi$ denoted $|\varphi|$ 
is defined by taking into consideration, the number of temporal modalities $\until_I, \nex_I$, the number of boolean connectives, as well as the maximal constant occurring in the formulae (encoded in binary). The size is  
 defined  as 
$log K \times$(the number of temporal modalities in $\varphi$ + number of boolean connectives in $\varphi$), where 
$K$ is the max constant appearing in the formulae.  
For example, $|a \until_{(0,2)} (\neg b \wedge c \until_{(0,1)}d)|=log2 \times 4$.  
In all our complexity results, we assume a binary encoding 
of all constants involved in the formulae.  

To prove the complexity results we first need the following lemma.

\begin{lemma}
	\label{expmitl}
Given any $\mitl$ formula $\varphi$ with $|\varphi|=\mathcal{O}(2^n)$ (for some $n \in \mathbb{N}$)  with maximum constant $K$ used in timing intervals, the satisfiability checking for $\varphi$ is $\mathsf{EXPSPACE}$ in $n$ and $log(K)$.
\end{lemma}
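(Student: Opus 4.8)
The plan is to reduce the statement to a \emph{parameter‑refined} reading of the classical $\mathsf{EXPSPACE}$ decision procedure for $\mitl$ satisfiability, keeping separate track of its two independent resources. Recall that from an $\mitl$ formula one builds a timed automaton $\mathcal{A}_\varphi$ with $L(\mathcal{A}_\varphi)=L(\varphi)$, so that satisfiability becomes non‑emptiness of $\mathcal{A}_\varphi$. The first step is to record the cost of this construction in terms of the two quantities that actually govern it: the number $s$ of distinct subformulae of $\varphi$ (so $s=O(m)$, where $m$ counts the modalities and boolean connectives), and the largest constant $K$ appearing in the intervals. The discrete state space of $\mathcal{A}_\varphi$ has size $2^{\mathrm{poly}(s,\log K)}$, since each control state stores a bounded amount of obligation information per subformula; and—crucially, because every interval of $\mitl$ is non‑punctual with integer endpoints and hence of length $\ge 1$—only $\mathrm{poly}(s,K)$ clocks are needed to track the simultaneously pending $\until_I$‑obligations, each compared against constants $\le K$.

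The second step is to pass to the region automaton $\mathcal{R}_\varphi$ of $\mathcal{A}_\varphi$. Writing $c=\mathrm{poly}(s,K)$ for the number of clocks, its size is bounded by the discrete state count times the number of clock regions, namely $2^{\mathrm{poly}(s,\log K)}\cdot K^{O(c)}\cdot c!$, which is again $2^{\mathrm{poly}(s,K)}$. Non‑emptiness of $\mathcal{R}_\varphi$ is reachability of an accepting configuration and is therefore decidable in $\mathsf{NSPACE}(\log|\mathcal{R}_\varphi|)=\mathsf{NSPACE}(\mathrm{poly}(s,K))$. This is precisely the classical $\mathsf{EXPSPACE}$ upper bound, but now stated with its dependencies exposed: $\mitl$ satisfiability is decidable in space $p(s)\cdot q(K)$ for fixed polynomials $p,q$, i.e.\ \emph{polynomial} in the number of subformulae and \emph{polynomial} in the maximal constant written in unary.

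The final step is to substitute the hypothesis $|\varphi|=O(2^n)$. Since $|\varphi|=m\log K$ and (assuming $K\ge 2$, the other case being trivial) $m\le|\varphi|$, we get $s=O(m)=O(2^n)$, while $K=2^{\log K}$. Plugging these into the refined bound gives space $p(2^n)\cdot q(2^{\log K})=2^{O(n)}\cdot 2^{O(\log K)}=2^{O(n+\log K)}$, which is exactly exponential space in $n$ and $\log K$; Savitch's theorem then converts the nondeterministic procedure into a deterministic one within the same class, establishing the lemma.

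The main obstacle—and the reason the lemma is phrased through the parameter $n$ rather than through $|\varphi|$—is that one must \emph{not} invoke the coarse statement ``$\mitl$ satisfiability is in $\mathsf{EXPSPACE}$ in $|\varphi|$'', i.e.\ in space $2^{\mathrm{poly}(|\varphi|)}$: with $|\varphi|=O(2^n)$ that would yield only $2^{\mathrm{poly}(2^n)}$, a \emph{double}-exponential in $n$. The entire gain comes from the fact that the decision procedure is polynomial, not exponential, in the number of subformulae $s$, so that an $\mitl$ formula which is exponentially large in $n$ but otherwise ordinary still costs only single‑exponential space in $n$; the genuinely exponential dependence on the numeric constants is confined to the $\log K$ parameter. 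Consequently the one quantitative point that must be checked with care is that non‑punctuality keeps the number of clocks polynomial in $K$ (rather than letting it grow super‑polynomially), since this is what guarantees $|\mathcal{R}_\varphi|=2^{\mathrm{poly}(s,K)}$ and hence the $\mathrm{poly}(s,K)$ space bound on which the whole argument rests.
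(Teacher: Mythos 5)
Your proposal is correct in outline but takes a genuinely different route from the paper. You argue via the classical Alur--Feder--Henzinger automaton construction: build the timed automaton $\mathcal{A}_\varphi$, observe that non-punctuality keeps the number of clocks polynomial in the number of subformulae $s$ and the constant value $K$, pass to the region automaton, and decide reachability in $\mathsf{NSPACE}(\mathrm{poly}(s,K))$; substituting $s=O(2^n)$ and $K=2^{\log K}$ then yields space $2^{O(n+\log K)}$. The paper instead performs a purely syntactic reduction: it splits each bounded $\until_{\langle l,u\rangle}$ into $O(K)$ unit-interval untils, flattens, oversamples at integer points with markers $c_0,\dots,c_{K-1}$, and rewrites each unit-interval until using only $\until_{\langle 0,u\rangle}$, $\until_{\langle l,\infty)}$ and untimed $\since$, landing in the fragment $\mathsf{MITL}[\until_{0,\infty},\since]$ whose satisfiability is $\mathsf{PSPACE}$; the resulting formula has size $2^{O(n+\log K)}$, so the $\mathsf{PSPACE}$ procedure gives the same $\mathsf{EXPSPACE}$ bound. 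Both arguments hinge on the same refined accounting you emphasize --- that the cost is polynomial in the number of modalities and in the \emph{value} of $K$, with the exponential confined to the binary encoding --- but your version delegates this to the clock-counting analysis inside the automaton construction, whereas the paper makes it explicit as a formula-to-formula translation (which it then reuses elsewhere, e.g.\ for the $\mitl+\UM$ and $\mitl+\ureg$ bounds). One small caveat on your side: the bound $2^{\mathrm{poly}(s,\log K)}$ you state for the discrete state space of $\mathcal{A}_\varphi$ is likely optimistic (the locations also track which of the $O(K)$ clocks per subformula are active, giving $2^{\mathrm{poly}(s,K)}$), but since only the logarithm of the region-automaton size enters the space bound, this does not affect the conclusion.
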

\begin{proof}
	 Given any $\mathsf{MITL}$ formula $\varphi$ with $|\varphi|=\mathcal{O}(2^n)$, there are at most $\mathsf{expn}= \mathcal{O}(2^n)$ number of temporal modalities and boolean connectives. Let $K$ be the maximal constant used in $\varphi$.
	We give a satisfiability preserving reduction from $\varphi$ to $\psi \in \mathsf{MITL[\until_{0,\infty},\since]}$.
	$\mathsf{MITL[\until_{0,\infty},\since]}$ is the fragment of $\mathsf{MITL}$ with untimed 
	past and the intervals in future modalities 
	are only of the form $\langle 0,u \rangle$ or $\langle l, \infty)$.
	The satisfiability checking for $\mathsf{MITL}[\until_{0,\infty},\since]$ is in $\mathsf{PSPACE}$ \cite{AFH96}. 
	Hence, the reduction from a  $\mathsf{MITL}$ formula with $\mathsf{expn}= \mathcal{O}(2^n)$ number of modalities
	to an $\mathsf{MITL}[\until_{0,\infty},\since]$ formula with $\mathcal{O}(poly(K.\mathsf{expn}))$ modalities preserving 
	the max constant $K$, gives an $\mathsf{EXPSPACE}$ upper bound in $n, logK$. The $\mathsf{EXPSPACE}$ hardness of $\mitl$ can be found in \cite{AFH96}.
	The reduction from $\mathsf{MITL}$ to $\mathsf{MITL[\until_{0,\infty},\since]}$ is achieved  as follows:
	 	\begin{figure}[h]
	 	\includegraphics[scale=.45]{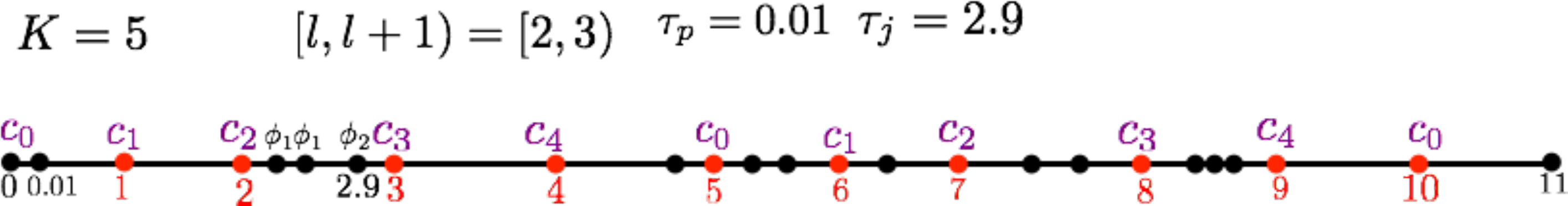}
	 	\caption{The point $p$ has $\tau_p=0.01$, and for $l=2$, $\tau_p+[l, l+1)=[2.01, 3.01)$. 
	 	$\tau_p$ lies between points $c_{i-1}=c_0$ and $c_i=c_1$. In this case, $\tau_j=2.9$ where $\phi_2$ holds 
	 	and $c_3=c_{i \oplus l}$ does not lie between $\tau_p, \tau_j$. }
	 	\label{case1}	
	 		 	\end{figure}
	\begin{figure}[h]
	 	\includegraphics[scale=.45]{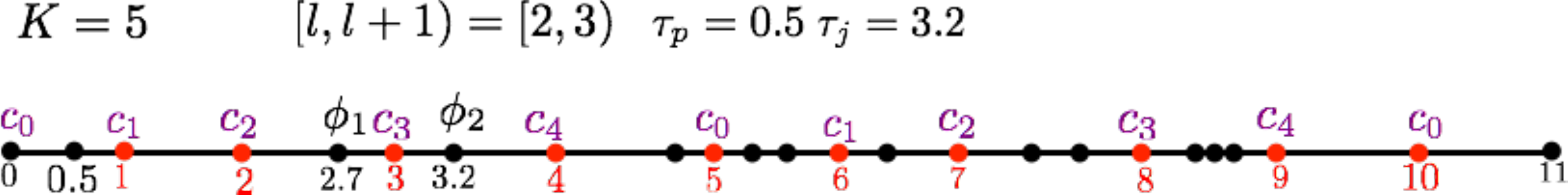}
	 	\caption{The point $p$ has $\tau_p=0.5$, and for $l=2$, $\tau_p+[l, l+1)=[2.5, 3.5)$. 
	 	$\tau_p$ lies between points $c_{i-1}=c_0$ and $c_i=c_1$. In this case, $\tau_j=3.2$ where $\phi_2$ holds 
	 	and $c_3=c_{i \oplus l}$ lies between $\tau_p, \tau_j$. In this case, $c_3$ is in the interval $[0,l+1)=[0,3)$
	 	from $\tau_p$ 	 	and $\tau_j$ lies in the unit interval between $c_{i \oplus l}=c_3$ and $c_{i \oplus l \oplus 1}=c_4$.}
	 	\label{case2}	
	 	
	 		 	\end{figure}

	 \begin{enumerate}
	 	\item[(a)] 
	 	Break each $\until_I$ formulae in $\mathsf{MITL}$ where $I$ is a bounded interval,  into disjunctions of $\until_{I_i}$ modality, where each $I_i$ is a unit length interval and union of all $I_i$ is equal to $I$. That is, $\phi_1 \until_{\langle l,u \rangle} \phi_2 \equiv \phi_1 \until_{\langle l,l+1 )} \phi_2 \vee \phi_1 \until_{[ l+1,l+2 )} \phi_2 \ldots \vee \phi_1 \until_{[u-1,u \rangle} \phi_2$. This  increases the size  from $\mathsf{expn}$ to $\mathsf{expn} \times K$. 
	 	
	 	\item[(b)] Next, we flatten all the modalities containing bounded intervals. This results in replacing subformulae of the form
	 	$\phi_1 \until_{[l,l+1)} \phi_2$ with new witness variables. This results in the conjunction 
	 	of temporal definitions of the form  $\wB[a \leftrightarrow \phi_1 \until_{[l,l+1)} \phi_2]$ to the formula,  
	 	and creates  only a linear blow up in the size.
	 	
	 %	\item[(c)] 
	 	
	 \end{enumerate}
	 
	 	Now consider any temporal definition $\wB[a \leftrightarrow \phi_1 \until_{[l,l+1)} \phi_2]$. 
	 	  We show a reduction to an equisatisfiable $\mathsf{MITL[\until_{0,\infty},\since]}$ formula by eliminating each 
	 $\wB[a \leftrightarrow \phi_1 \until_{[l,l+1)} \phi_2]$ and replacing it   with untimed
	 $\since$ and $\until$ modalities with intervals $<0,u>$ and $<l, \infty)$. 
	 
	 \begin{itemize}
	 	
	 	\item First we oversample the words at integer points $C=\{c_0, c_1, c_2, \dots, c_{K-1}\}$. 
	 	An integer timestamp $k$ is marked $c_i$ if and only if $k=M(K)+i$, where $M(K)$ denotes a non-negative integer multiple of $K$, and 
	 	$0 \leq i \leq K-1$. 
	 	This can be done easily by the formula 
	 	\begin{quote}
	 		$c_0{\wedge} {\bigwedge \limits_{i \in \{0,\ldots K-1\}}}\wB(c_i {\rightarrow} \neg \fut_{(0,1)} (\bigvee C) \wedge \fut_{(0,1]} c_{i \oplus 1})$
	 	\end{quote} 
	 	where $x \oplus y $ is $(x + y) \% K$ (recall that $(x+y) \% K=M(K)+(x+y), 0 \leq x+y \leq K-1$).

	\item Consider any point $p$  within a unit integer interval whose end points are marked $c_{i-1}, c_{i}$. 
	 	Then $\phi_1 \until_{[l,l+1)} \phi_2$ is true at that point $p$ if and only if, $\phi_1$ is true on all the action points till a point $j$ in the future, such that

	 	\begin{itemize}
	 		\item   either $j$ occurs within $[l,\infty)$ from $p$ and there is no $c_{i\oplus l}$ between $p$ and $j$ ($\tau_j \in [\tau_p+l, \lceil \tau_p+l \rceil]$ )  (see figure \ref{case1})
	 		\begin{quote}
	 		 $\phi_{C1,p} = (\phi_1 \wedge \neg c_{i \oplus l} ) \until_{[l,\infty)} \phi_2$ 	
	 		\end{quote}
	 		
	 		\item or,   $j$ occurs within $[0,l+1)$ from $p$, and $j$ is within a unit
	 		interval whose end points are marked $c_{i\oplus l}$ and $c_{i \oplus l \oplus 1}$ ($\tau_j \in [\lceil \tau_p+l \rceil, \tau_p+l+1)$ ) (see figure \ref{case2})
	 		\begin{quote}
	 				$\phi_{C2,p} = \phi_1  \until_{[0,l+1)} (\phi_2 \wedge  (\neg (\bigvee C)) \since c_{i\oplus l})$ 	
	 			\end{quote}
	 	\end{itemize}

	 	The temporal definition $\wB[a \leftrightarrow \phi_1 \until_{[l,l+1)} \phi_2]$ is then captured by \\
	 	$\bigvee \limits_{i = 1}^{K-1} \wB[\{a \wedge (\neg (\bigvee C) \until c_{i})\}  \leftrightarrow \phi_{C1,i} \vee 
	 	\phi_{C2,i}]$
	 	\end{itemize}

	 	To eliminate each bounded interval modality as seen in (a),(b) above, 
	 	we need an $\mathcal{O}(K)$ increase in size. Each temporal definition is replaced with a formula with  
	 	of size $\mathcal{O}(K)$. 	 	
	 	Thus the size of the new formula  is $\mathcal{O}(2^n) \times \mathcal{O}(K) \times \mathcal{O}(K)$, and the total number of propositions needed is $2^{\Sigma}\cup\{c_0,\ldots,c_{K-1}\}$. Assuming binary encoding for $K$, we get a $\mathsf{MITL}[\until_{0,\infty},\since]$ formulae whose size is exponential in $n$ and $log K$. As the satisfiability checking for $\mathsf{MITL}[\until_{0,\infty},\since]$ is in $\mathsf{PSPACE}$ \cite{AFH96},  we get an $\mathsf{EXPSPACE}$ upper bound
	 	in $n, log K$. 	 	The $\mathsf{EXPSPACE}$ hardness of $\mitl$ can be found in \cite{AFH96}.
	 
	 \end{proof}

\subsection{Proof of Theorem \ref{mitl-ureg}.2 : $\mitl+ \UM$ is $\mathsf{EXPSPACE}$-complete}
\label{app:th-um}
Starting from an $\mitl+\UM$ formula, we first show how to obtain an equivalent
$\mitl$ formula modulo simple projections. The constants appearing in a $\mitl+\UM$ formula 
come from those which are part of the time intervals $I$ decorating the temporal modalities, as well as 
those from counting constraints $k \% n$. If we consider some 
$\until$ modality, say $\until_{(l,u),\#b =  k \% n}$, then the number of bits needed to encode 
this modality is $(log l+ log u+ log k + log n)=\mathcal{O}(log(u)+log(n))$.  Let $n_{max}$ and 
$u_{max}$ be the maximal constants appearing in the counting constraints as well as time intervals   
of a $\mitl+\UM$ formula $\phi$. Then 
$|\phi|=(log(n_{max})+log(u_{max}))\times$(the number of temporal modalities in $\phi$+ number of boolean connectives in $\phi$).

\subsubsection*{Elimination of $\UM$}
\label{app:um}
In this section, we show how to eliminate $\UM$  
from  $\mtl+\UM$ 
over strictly monotonic timed words.
This can be extended to weakly monotonic timed words.
Given any $\mtl+\UM$  formula $\varphi$ over $\Sigma$, we first
 ``flatten" the  $\UM$ modalities 
of $\varphi$ and obtain a flattened formula. 

\noindent \emph{Example}. The formula 
$\varphi=[a \until(e \wedge (f \until_{(2,3),\#b =  2 \% 5}y))]$
 can be flattened by replacing the 
$\UM$ with a fresh witness proposition $w$ to obtain \\
$\varphi_{flat}=[a \until (e \wedge w)] {\wedge} \wB\{w \leftrightarrow (f \until_{(2,3),\#b =  2 \% 5}y)\}$.

Starting from $\chi \in \mtl+\UM$,	in the following, we now show how to 
	obtain equisatisfiable $\mtl$ formulae corresponding to each temporal projection containing a $\UM$ modality. 
 \begin{enumerate}
 \item \textbf{\it {Flattening}} :  Flatten  $\chi$ obtaining $\chi_{flat}$ over $\Sigma \cup W$, where $W$ is the set of witness propositions 
used, $\Sigma \cap W=\emptyset$.
  \item \textbf{\it{Eliminate Counting}} :  Consider, one  by one,
each temporal definition $T_i$ of $\chi_{flat}$.
  	 Let $\Sigma_i=\Sigma\cup W \cup X_i$, where $X_i$ is 
  a set of fresh propositions,  $X_i \cap X_j=\emptyset$ for $i \neq j$. 
 \begin{itemize}
 \item For each temporal projection $T_i$  containing a  $\UM$ modality
of the form $x \until_{I,\#b = k\% n}y$, Lemma \ref{um-elim}
gives  $\zeta_i \in \mtl$ over $\Sigma_i$ such that  $\zeta_i$
is equisatisfiable to $T_i$ modulo simple extensions.
%$T_i \equiv \exists X_i. \zeta_i$.
 \end{itemize}
 \item \textbf{\it{Putting it all together}} :  The formula
     $\zeta {=} \bigwedge_{i=1}^k \zeta_i  \in \mtl$ is such that
     it is equisatisfiable to modulo simple extensions, over the extra propositions 
     $X=\bigcup_{i=1}^k X_i$. 
 \end{enumerate}
 For elimination of $\UM$,    
marking witnesses correctly is ensured using an extra set of symbols $B = \{b_0,...,b_{n}\}$ which act as counters incremented in a circular fashion. Each time a witness of the formula which is being counted is encountered, the counter increments, else it remains same. 
The evaluation of the mod counting formulae can be reduced to checking the difference between indices between the first and the last symbol in the time region where the counting constraint is checked.
\subsubsection*{Construction of Simple Extension}
\label{ext-const}
  Consider a temporal definition  $T=\wB[a \leftrightarrow x \UM_{I,\#b = k \% n} y]$,  built from $\Sigma \cup W$. 
  Let $\oplus$ denote  addition modulo $n+1$. 
  \begin{enumerate}
  	\item \emph{Construction of a ($\Sigma \cup W, B)$- simple extension}. We introduce a fresh set of propositions $B = \{b_0,b_1,\ldots,b_{n-1}\}$ and construct a family of simple extensions
  	$\rho'=(\sigma', \tau)$ from $\rho=(\sigma,\tau)$ as follows:
  	\begin{itemize}
  		\item \textbf {$C1$}: $\sigma'_1= \sigma_1 \cup \{b_0\}$. If $b_k \in \sigma'_i$   
  		and if $b \in \sigma_{i+1}$, $\sigma'_{i+1}=\sigma_{i+1} \cup \{b_{k \oplus 1}\}$. 
  		\item \textbf {$C2$}:  If  $b_k \in \sigma'_i$  and $b \notin \sigma_{i+1}$, then $\sigma'_{i+1}=\sigma_{i+1} \cup \{b_k\}$.
  		
  		\item \textbf{$C3$}: $\sigma'_{i}$ has exactly one symbol from $B$ for all $1 \leq i \leq |dom(\rho)|$. 
  	\end{itemize}
  	\item \emph{Formula specifying the above behaviour}. 
  	The variables in $B$ help in counting the number of $b$'s in $\rho$. 
  	$C1,C2$ and $C3$ are written in $\mathsf{MTL}$ as follows: 
  	\begin{itemize}
  		\item  $\delta_1 {=} \bigwedge \limits_{k=0}^n \wB[(\nex b\wedge b_k) \rightarrow \nex b_{k\oplus1}]$ and 
  	\item	$\delta_2 {=} \bigwedge \limits_{k=0}^n \wB[(\nex \neg b \wedge b_k) \rightarrow \nex b_k]$
  	\item $\delta_3 {=} \bigwedge \limits_{k=0}^n \wB[ b_k \rightarrow  \bigwedge \limits_{j \neq k} \neg b_j]$
  	  	\end{itemize}
  \end{enumerate}

 \begin{lemma}
	\label{um-elim}
	Consider a temporal definition 
	$T=\wB[a \leftrightarrow x \until_{I,\#b =  k \% n}y]$,  built from $\Sigma \cup W$. 
	Then we synthesize a formula $\psi \in \mtl$ 
	over  $\Sigma \cup W \cup X$ which is equivalent to $T$ modulo simple extensions.
%	such that $T\equiv \exists X. \psi$. 
\end{lemma}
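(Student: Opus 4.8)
The plan is to reuse the counter annotation $B=\{b_0,\dots,b_{n-1}\}$ and its maintenance formula $\delta = \delta_1 \wedge \delta_2 \wedge \delta_3$ introduced just above, conjoined with the initialization of $b_0$ at the first point. First I would establish the \emph{counter invariant}: on any word $\rho'$ satisfying this formula, conditions $C1$--$C3$ hold, so every position carries exactly one proposition $b_p$, and $p$ is congruent modulo $n$ to the number of occurrences of $b$ strictly after the first position up to and including the current one. This is a routine induction on the position index, using $C1$ (increment when $b$ holds at the next point), $C2$ (stall otherwise), and $C3$ (uniqueness of the active counter).

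The heart of the argument is to convert the modulo-counting test into a \emph{difference-of-counters} test, so that no genuine counting is needed inside $\mtl$. Suppose the anchor point $i$ carries $b_p$ and a candidate witness $j$ carries $b_q$. By the invariant, the number of $b$'s in $\{i+1,\dots,j\}$ equals $q-p$ modulo $n$; subtracting the contribution of $j$ itself, the number of $b$'s strictly between $i$ and $j$ is $q-p-[b\in\sigma_j]$. Hence the semantic requirement ``the number of $b$'s in $(i,j)$ is $\equiv k \pmod n$'' is equivalent to $q \equiv p+k \pmod n$ when $b\notin\sigma_j$, and to $q \equiv p+k+1 \pmod n$ when $b\in\sigma_j$. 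Since $C3$ makes $q$ unique, this residue test is a purely Boolean condition on the endpoint labels.

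I would then define, for each residue $p$, the guard $\Gamma_p = (\neg b \wedge b_{(p+k)\bmod n}) \vee (b \wedge b_{(p+k+1)\bmod n})$, and set $\psi = \delta \wedge \wB\bigl[\,a \leftrightarrow \bigvee_{p=0}^{n-1}\bigl(b_p \wedge (x \until_I (y \wedge \Gamma_p))\bigr)\,\bigr]$. Here the inner $x \until_I (y \wedge \Gamma_p)$ already enforces all the $\until$ requirements of $\UM$ --- namely $\tau_j-\tau_i\in I$, $y$ at the witness $j$, and $x$ throughout the open interval --- while $\Gamma_p$ imposes the residue condition at $j$. Because $\psi$ only attaches new labels from $B$ and leaves the positions of $\rho$ untouched, the passage from $\rho$ to $\rho'$ is a simple extension, as required. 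For correctness in the forward direction, given $\rho \models T$ I annotate $\rho$ with the unique $C1$--$C3$ labelling to obtain $\rho'$; the invariant together with the difference computation and $C3$ (which collapses the big disjunction to the single active $p$) yield $\rho' \models \psi$. For the converse, I project away $B$ from any model of $\psi$ and read the $\UM$-semantics back off the same two facts.

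The step I expect to be the main obstacle is the boundary bookkeeping in the difference computation: getting the correction term $[b\in\sigma_j]$ and the treatment of the first position exactly right, and in particular verifying that $\delta_2$ genuinely \emph{stalls} the counter so that the residue read off at $j$ reflects the count on the \emph{open} interval $(i,j)$ rather than a half-open one. The extension from strictly monotonic to weakly monotonic words is handled by the same stalling argument applied to equal-timestamp blocks, and the unbounded-interval case is a minor adaptation exactly as in Lemma~\ref{elm-reg}.
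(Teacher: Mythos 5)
Your proposal follows essentially the same route as the paper: annotate the word with the circular counter propositions $B=\{b_0,\dots,b_{n-1}\}$ maintained by $\delta_1\wedge\delta_2\wedge\delta_3$ (conditions $C1$--$C3$), and reduce the modulo-counting constraint to a residue-difference test between the counter label at the anchor and at the witness, expressed as a disjunction over residues of formulae of the form $b_p\wedge(x\until_I(y\wedge\cdots))$. The one place you go beyond the paper is the guard $\Gamma_p$, which corrects for whether $b$ holds at the witness point $j$ itself: the paper's $\phi_{mark,a}$ simply tests $b_{(p+k)\bmod n}$ at $j$, which measures the count on the half-open interval $(i,j]$ rather than the open interval $(i,j)$ used in the $\UM$ semantics, so your boundary bookkeeping is a genuine (and welcome) tightening of the argument rather than a deviation from it.
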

\begin{proof}
	\begin{enumerate}
		\item Construct a simple extension $\rho'$ as shown in section \ref{ext-const}.
		\item Now checking whether at point $i$ in $\rho$, $x \until_{I,\#b =  k \% n}y$ is true, is equivalent to checking that at point $i$ in $\rho'$ there exist a point $j$ in the future where $y$ is true and for all the points between $j$ and $i$, $x$ is true and the difference between the index values of the symbols from $B$ at $i$ and $j$ is $k\%n$.
		$\phi_{mark,a} {=} \wB \bigwedge \limits_{i \in \{1,\ldots n-1\}}(a \wedge b_i {\leftrightarrow}   [x \until_I (y \wedge b_j)])$
		where $j = k+i \% n$.
		\item The formula $\delta_1 \wedge \delta_2 \wedge \delta_3 \wedge \phi_{mark,a}$ is equivalent to $T$ modulo simple projections. 
	\end{enumerate}
    \end{proof} 
    
  Notice that in the above reduction, if we start with an $\mitl+\UM$ formula, we will obtain an $\mitl$ formula since we do not introduce any new punctual intervals.   
    \begin{lemma}
    	Satisfiability of $\mitl+\UM$ is $\mathsf{EXPSPACE}$-complete.
    	\end{lemma}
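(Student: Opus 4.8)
The plan is to establish membership in $\mathsf{EXPSPACE}$ and $\mathsf{EXPSPACE}$-hardness separately. For the lower bound I would simply observe that $\mitl$ is a syntactic fragment of $\mitl+\UM$ (namely the formulae that never use the $\UM$ modality), so the $\mathsf{EXPSPACE}$-hardness of $\mitl$ established in \cite{AFH96} transfers verbatim to $\mitl+\UM$. The real work is the upper bound.

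For membership, the strategy is to reduce an arbitrary $\mitl+\UM$ formula $\phi$ to an equisatisfiable $\mitl$ formula and then invoke Lemma \ref{expmitl}. Concretely I would flatten $\phi$ over $\Sigma \cup W$ so that every occurrence of the counting modality sits inside a temporal definition $T_i = \wB[a_i \leftrightarrow x \until_{I_i,\#b_i = k_i \% n_i} y]$, apply Lemma \ref{um-elim} to each $T_i$ to obtain an $\mitl$ formula $\zeta_i$ over disjoint fresh propositions $X_i$ that is equisatisfiable to $T_i$ modulo simple extensions, and then conjoin the $\zeta_i$'s to get $\zeta \in \mitl$ equisatisfiable to $\phi$ over the extra propositions $\bigcup_i X_i$. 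The key point, already noted right after Lemma \ref{um-elim}, is that this elimination introduces no new punctual intervals, so $\zeta$ is genuinely an $\mitl$ formula and not merely an $\mtl$ one.

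The step I expect to require the most care is the size accounting, precisely because all constants are encoded in binary. Writing $N = |\phi|$, the definition of formula size yields $\log n_{max}, \log u_{max} \le N$, hence each modulus satisfies $n_i \le n_{max} \le 2^N$ and each timing constant is bounded by $u_{max} \le 2^N$. Eliminating a single $\UM$ modality with modulus $n$ introduces the counter propositions $B = \{b_0,\dots,b_n\}$ together with $\delta_1, \delta_2, \delta_3$ and $\phi_{mark,a}$, each of which is a conjunction of $\mathcal{O}(n) = \mathcal{O}(2^N)$ formulae; since $\phi$ has at most $N$ modalities, this gives $|\zeta| = \mathcal{O}(N \cdot 2^N) = 2^{\mathcal{O}(N)}$. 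Crucially, the counters $B$ and the formulae $\delta_1,\delta_2,\delta_3$ use only untimed modalities, and the only timed modalities retained in $\phi_{mark,a}$ carry the original intervals $I_i$; hence the maximal timing constant $K$ of $\zeta$ is still bounded by $u_{max} \le 2^N$, so $\log K = \mathcal{O}(N)$.

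Finally I would apply Lemma \ref{expmitl} to $\zeta$: it is an $\mitl$ formula of size $\mathcal{O}(2^n)$ with $n = \mathcal{O}(N)$ and maximal constant $K$ satisfying $\log K = \mathcal{O}(N)$, so its satisfiability is decidable in $\mathsf{EXPSPACE}$ in $n$ and $\log K$, that is, in $\mathsf{EXPSPACE}$ in $N = |\phi|$. Since $\zeta$ is equisatisfiable to $\phi$, this yields the $\mathsf{EXPSPACE}$ upper bound, and together with the hardness above it proves that satisfiability of $\mitl+\UM$ is $\mathsf{EXPSPACE}$-complete.
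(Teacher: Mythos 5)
Your proposal is correct and follows essentially the same route as the paper: flatten, eliminate each $\UM$ temporal definition via the counter propositions $B$ and Lemma \ref{um-elim} (noting no new punctual intervals arise), bound the resulting $\mitl$ formula by an exponential in $|\phi|$ with the same maximal constant $K$, and conclude via Lemma \ref{expmitl}, with hardness inherited from $\mitl$. Your size accounting is in fact slightly more explicit than the paper's about why $n_{max}\le 2^{|\phi|}$ under binary encoding, but the argument is the same.
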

    	\begin{proof}
  Assume that we have a $\mitl+\UM$ formula $\phi$ with $|\phi|=m$, and hence
  $\leq m$ $\UM$ modalities. Let  $\phi$ be over the alphabet $\Sigma$.   The number of propositions 
  used is hence $2^{\Sigma}$,  and let 
  $K$ be the maximal constant appearing in the intervals of $\phi$.    
     Let $k_1\%n_1, \dots, k_m\%n_m$ be the 
  modulo counting entities in $\phi$. Let $n_{max}$ be the maximum of $n_1, \dots, n_m$. 
  Going by the construction above, we obtain $m$ temporal definitions $T_1, \dots, T_{m}$, 
  corresponding to the $m$ $\UM$ modalities.
  
   To eliminate 
  each $T_i$, we introduce $n_{max}$ formulae of the form $\phi_{mark,a}$, 
  evaluated on timed words over $2^{\Sigma} \cup B_1 \cup  \dots \cup  B_{m} $.
  This is enforced by $\delta_1, \delta_2, \delta_3$. The number of propositions 
  in the obtained $\mitl$ formula is hence $|2^{\Sigma}|.|B_1+B_2+ \dots +B_m|$.  
  The size of the new formula is 
  is $\mathcal{O}(m.n_{max})$, while the maximum constant appearing in the intervals 
  is same as $K$. Thus we have an exponential size (the size now is 
  $\mathcal{O}(m.2^{log n_{max}})$) $\mitl$ formulae with max constant as $K$. 
  The $\mathsf{EXPSPACE}$-hardness of $\mitl+\UM$ follows from that of $\mitl$. 
   Lemma \ref{expmitl} now shows that  satisfiability checking for $\mitl+\UM$ is $\mathsf{EXPSPACE}$-complete.
  \end{proof}

\subsection{Proof of Theorem \ref{mitl-ureg}.3 : $\mitl+\ureg$ is in $\mathsf{2EXPSPACE}$}
\label{app:th-ureg}
  	\begin{proof}
   Consider a $\ureg$ modality $a \ureg_{I,\re(\Ss)} b$, where $\re$ is a rational expression 
   over $\Ss$ and $a, b \in \Sigma$.    
    The size of a  $\ureg$ modality is $size(\re)+log(l)+log(u)$, where $l, u$ are the lower and upper bounds of the interval $I$, and 
 $size(\re)$ is the size of the rational expression $\re$. 
  We first flatten $\varphi$ by introducing witness propositions for each $\ureg$ modality obtaining 
 temporal definitions of the form $\wB(w \leftrightarrow \ureg_{I,\re(\Ss)} \psi)$. Flattening only creates a linear blow up
 in formula size. 
Assume that $\varphi$ is flattened.
Let $T_i=\wB(w_i \leftrightarrow a \uregm^{\Ss_i}_{I_i,\re_i(\Ss_i)}b_i)$
be a temporal definition, and let there be $t$ temporal definitions. Let $l_i, u_i$ be the bounds 
of the interval $I_i$.   The size of $\varphi$, $|\varphi|$ is then defined as 
   $\mathcal{O}(\sum_{i=1}^t(n_i+log(l_i)+log(u_i))$, where $n_i$ is the size of $\re_i$. Let $u$ be the maximum 
   constant appearing in the intervals $I_i$.
   
%  	All these counting should be done after flattening. Note that flattening does not affect the size of the formulae (asymptotically) :
%  	Let there be $t$ number of modalities (temporal definitions) in the flattened formulae.
%  	Let $\wB(w_i \leftrightarrow a \ureg^{S_i}_{\re_i, \langle l_i,u_i \rangle} b)$ be the form of the temporal definition. Thus the formulae size is 
%  	$\mathcal{O}(\Sigma  \Sigma_{i = 1}^t (n_i+log(l)+log(u))$, where $n_i$ is the size of $re_i$. Assume that $n$ is the size of the largest $re$. Let $u$ be the maximum timing constants used. 

  	Let us consider a temporal definition $T=\wB(w \leftrightarrow \ureg_{I,\re(\Ss)} \psi)$.
  	
\begin{enumerate}
\item We look at the number of propositions needed in obtaining the equisatisfiable $\mtl$ formula. 	
\begin{enumerate}
\item  The size of the rational expression $\re$ in $T$ is $n$. The DFA accepting $\re$ has $\leq 2^n$ states.
The transitions of this DFA are over formulae from $S$. Since we convert the formulae into ExNF, 
we also convert this DFA into one whose transitions are over $2^{\Ss}$. 
 Hence, the number of transitions 
in the DFA is $\leq 2^n \times 2^{|\Ss|}$. Let $\Ss'=2^{\Ss}$.
\item This DFA is simulated using the symbols $\Threads, \Merge$. There can be at most $2^n$ threads, and 
each thread be in one of the $2^n$ states. Thus, the number of propositions $\Th_i(q)$ 
is at most ${2^n}^{2^n}$. Given that there are $t$ temporal definitions, we need $t \times {2^n}^{2^n}$ extra symbols. 
\item Each integer point in the timed word is marked with a symbol $c_i$, $0 \leq i \leq u-1$ (see 
 the proof of lemma  \ref{elm-reg} in Appendix  \ref{app:pref-suf}). 
\item The number of propositions $\merge(i,j)$is $\leq 2^n \times 2^n$. 
 \item  Thus, the number of symbols needed is $2^{|\Sigma|} \times u \times t \times {2^n}^{2^n} \times (2^n \times 2^n)$.
 \end{enumerate}

%Counting the size of the propositions :  The no. of states of the DFA will be at most $2^n$ and the size of the transition is $2^n \times 2^{|S|}$.
%  	Each thread can be at one of the states. There are $2^n$ threads and each one can take $2^n$ values thus giving us $2^{n\times 2^n}$ possible sequences (or propositions.  There are $t$ different re's and thus the we need $t \times 2^{n\times 2^n}$ extra alphabets. We mark each integer point with one of the $c_0,\ldots c_u$(check whether its $u$ or $u-1$) thus $u+1$ more possible symbols. Thus the size of the propositions due to all the temporal definition is at most  $2^{n\times 2^n} \times u \times t$.
%  	
\item  Next, we count the size of the formulae needed while constructing  the equisatisfiable $\mtl$ formula. 
\begin{enumerate}
\item  For each temporal definition, we define the formulae  $\Next(\Th_i(q_x))$  for each thread $\Th_i$.
 The argument of $\Next$ can take at most $2^n$ possibilities ($2^n$ states of a DFA) on each of the $2^n$ threads.
  Thus, the  total number of $\Next(\Th(q))$ formulae is $2^n \times 2^n = \mathcal{O}(poly(2^n))$. 
  Note that each $\Next$ formulae simulates the transition  function of the DFA.
 $\Next(\Th_i(q'))$ is determined depending on the present state $q$ of the thread $\Th_i$, and the formulae (in $\mathcal{S'}$) that are true at the present point.    Thus, the size of each $\Next$ formulae is $2^n \times 2^{|S|}$. 
 Thus, the total size of all the $\Next$ formulae is $\mathcal{O}(poly(2^n)) \times \mathcal{O}(poly(2^n \times 2^{|S|})) = \mathcal{O}(poly(2^{n+|S|})$.
\item 	Next, we look at formulae $\mathsf{NextMerge(i,k)}$.
Note that both the arguments refer to threads, and hence can  take at most $2^n$ values. Thus,
 the total number of formulae is $2^n \times 2^n = \mathcal{O}(poly(2^n)$. Each $\mathsf{NextMerge}$ formulae 
 checks whether the states at the 2 threads $\Th_i, \Th_k$ are equal or not. 
 Thus, the  size of each formulae is $\mathcal{O}(2^n)$. The total blow up due to $\mathsf{NextMerge}$ formulae is hence, 
 $\mathcal{O}(2^n \times 2^n \times 2^n) = \mathcal{O}poly(2^n)$.
 \item Next, we look at formulae $\mathsf{MergeseqPref}(k_1)$.
 This formulae states all the possible merges from the present point to the integer point within the interval $(l-1,l)$. 
 There are at most $2^n$ merges possible, as the merge always happens from a higher indexed thread to a lower one. 
 The number of merges is equal to the nesting depth of the formula $\mathsf{MergeseqPref}(k_1)$.
  Note that the nesting depth can be at most $2^n$.
 % Thus, for each possible number of merges, $0<k<2^n$, we count the number of possible merges. 
 The number of propositions $\merge(i,j)$ is $2^n\times 2^n$. Let there be $k \leq 2^n$ merges 
 until we see the integer point in $(l-1,l)$. At each of these $k$ merges, we have $2^n \times 2^n$ possibilities, 
 the maximum possible number of propositions $\merge(i,j)$ ($i, j \leq 2^n$).  
    Hence,  the number of possible merge sequences
   we can generate is $(2^n\times 2^n)^ k \le (2^n \times 2^n)^{2^n}$. There are $2^n$ possible values of $k$ and the possible number of disjunctions of the formulae is at most $(2^n \times 2^n)^{2^n} \times 2^n = \mathcal{O}(poly(2^{poly(2^n)}))$. 
  \item  The counting for $\mathsf{MergeseqSuf}(k_1,k)$ is symmetric. 
  
\end{enumerate}
 \end{enumerate}
  Adding all the blow ups due to  various formulae $\Next(\Th(q))$, $\mathsf{NextMerge(i,k)}$, $\mathsf{MergeseqPref}(k_1)$ and 
 $\mathsf{MergeseqSuf}(k_1,k)$, we see the number to be doubly exponential 
 $\mathcal{O}(poly(2^{poly(2^n)}))$. Thus, we obtain an $\mitl$ formula of doubly exponential size, with doubly exponential number
  of new propositions. By applying the reduction as in lemma \ref{expmitl}, 
 we will obtain a formula in $\mitl[\until_{0,\infty},\since]$, which is still 
 doubly exponential, and which preserves the max constant. The $\mathsf{PSPACE}$ procedure 
 of $\mitl[\until_{0,\infty},\since]$ thus ensures that we have a 
   2$\mathsf{EXPSPACE}$ procedure for satisfiability checking for  $\mitl+\ureg$.  Arriving at a tighter complexity for this class is an interesting problem and is open.

	\end{proof}

\subsection{Proof of Theorem \ref{mitl-ureg}.4: $\mitl+\mcnt$ is $\bf{F}_{\omega^{\omega}}$-hard} 
\label{app:th-ack}
In this section, we discuss the complexity of $\mitl+\mcnt$, proving 
Theorem \ref{mitl-ureg}.4.
To prove this, we obtain a reduction from the reachability problem of Insertion Channel Machines with Emptiness  
Testing ($\mathsf{ICMET}$). 
We now show how to encode the reachability problem of $\icmet$ in  
 $\mitl+\mcnt$. 
 
\paragraph*{Recalling $\icmet$} 
A channel machine $\mathcal{A}$ consists of  a tuple  having a finite set of states 
$S$, a finite alphabet $M$ used to write on the channels, 
a finite set $C$ of channels, and a transition relation $\Delta \subseteq S \times Op \times S$
where $Op$ is a finite set of operations on the channels. 
These operations have the forms $c!a$, $c?a$ and $c=\epsilon$ which respectively write a message 
$a$ to the tail of channel $c$, read the first message $a$ from a channel $c$, and test if channel $c$ is empty.

A configuration of the channel machine $\mathcal{A}$ is a pair 
$(s, U)$ where $s$ is a state and $U$ is a tuple of length $|C|$ 
which describes the contents of all the $|C|$ channels. Each entry in this tuple is hence a string 
over the  alphabet $M$. We use $\mathsf{Conf}$ to denote the configurations of the channel machine. 
The configurations are connected to each other depending on the operations performed. In particular,
 \begin{enumerate}
 \item[(a)] From a configuration $(q,U)$, the transition 
 $(q,c!a,q')$  results in a configuration 
 $(q', U')$ where $U'$ is the $|C|$-tuple which does not alter 
 the contents of channels other than $c$, and appends $a$ to channel $c$. 
 \item[(b)] From a configuration $(q,U)$,
 the transition  $(q,c?a,q')$ 
results in a configuration 
 $(q', U')$ where $U'$ is the $|C|$-tuple which does not alter 
 the contents of channels other than $c$, and reads $a$ from the head of channel $c$.
  \item[(c)] From a configuration $(q,U)$, the transition 
  $(q,c=\epsilon,q')$ results in the configuration $(q', U)$ if channel $c$ is empty.
  The contents of all the channels are unaltered.  If channel $c$ is non-empty, then the machine 
  is stuck.
   \end{enumerate}
 If the only transitions allowed are as above, then we call  $\mathcal{A}$ an error-free channel-machine. 
 We now look at channel machines with insertion errors. These allow extra transitions between configurations as follows.
 \begin{enumerate}
 \item[(d)] If a configuration $(q, U)$ can evolve into $(q', V)$ using one transition 
 as above,  then we allow any configuration $(q, U')$, where $U'$ is a $|C|$-tuple of words 
 obtained by deleting any number of letters from any word in $U$, 
 to evolve into $(q', V')$ where $V'$ is obtained by adding any number of letters 
 to any word in $V$. Thus insertion errors are  created by 
 inserting arbitrarily many symbols into some word.
  \end{enumerate}
The channel machines as above are called $\icmet$. A run of an $\icmet$ is a 
sequence of transitions $\gamma_0 \stackrel{op_0}{\rightarrow} \gamma_1 \dots \stackrel{op_{n-1}}{\rightarrow}\gamma_n \dots$ that is consistent with the above operational semantics.

\subsection*{Reduction from $\icmet$ reachability to satisfiability of $\mitl+\mcnt$} 
Consider any $\icmet$ ${\cal C}=(S, M, \Delta, C)$, with 
set of states $S=\{s_0,\ldots,s_n\}$ and channels $C=\{c_1,\ldots,c_k\}$. Let  $M$ be a finite set of messages used for communication 
in the channels.

We encode the set of all possible configurations of $\cal C$, with a timed language over the alphabet 
$\Sigma = M_{a}\cup M_b \cup \Delta \cup S \cup \{H\}$, where $M_{a} = \{m_a | m \in M\}$ $M_b = \{m_b | m \in M\}$, 
and $H$ is a new symbol. 
 
\begin{enumerate}
	\item  The $j$th configuration for  $j \geq 0$  is encoded in the interval $[(2k+2)j, (2k+2)(j+1)-1 )$ where
	$k$ refers to number of channels.
The $j$th configuration begins at the time point $(2k+2)j$. At a distance   
	$[2i-1,2i]$ from this point, $1 \leq i \leq k$, 
the contents of the $i^{th}$ channel are encoded as shown in the point 7. The intervals of the form $(2i,2i+1)$, $0 \leq i \leq k+1$   from the start of any configuration are time intervals within which no action takes place. 
The current state at the $j$th configuration is encoded at $(2k+2)j$, and the transition 
that connects configurations $j, j+1$ is encoded at $(2k+2)j+(2k+1)$. 

%	\item 
%	At time $(2k+2)j+(2k-1)$, the current state $s_w$ of the $\icmet$ at configuration $j$ is encoded by the truth of the proposition $s_w$.
	
\begin{figure}[h]
\includegraphics[scale=0.45]{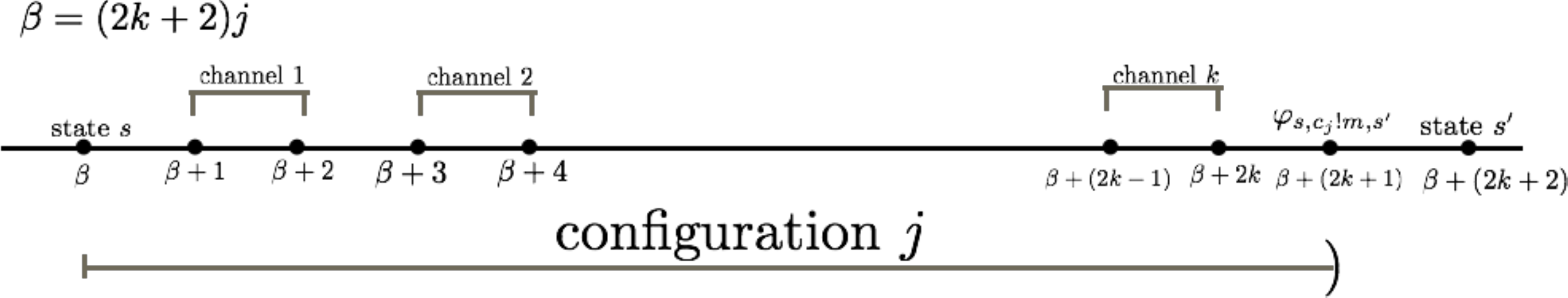}
\caption{Illustrating the $j$th configuration, with the current state encoded 
at $(2k+2)j$, and transition between configurations $j, j+1$ encoded at 
 $(2k+2)j+(2k+1)$, and the contents of channel $i$ encoded in the interval $(2k+2)j+[2i-1,2i]$.}	
\end{figure}

		\item  Lets look at the encoding of the contents of channel $i$ in the $j$th configuration.
Let $m_{h_i}$ be the message at the head of the channel $i$. Each message $m_i$ is encoded using consecutive occurrences of symbols $m_{i,a}$ and $m_{i,b}$. 
In our encoding of channel $i$, 	the first point marked $m_{h_i,a}$ in the interval $(2k+2)j+[2i-1,2i]$ is the head of the channel $i$ and denotes that $m_{h_i}$ is the message at the head of the channel.  The last point marked $m_{t_i,b}$ in the interval is the tail of the channel, 
	and denotes that message $m_{t_i}$ is the message stored at the tail of the channel.

	\begin{figure}[h]
\includegraphics[scale=0.45]{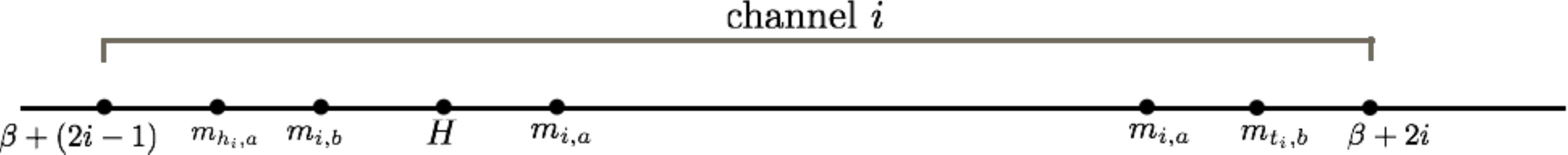}
\caption{Illustrating the channel contents with each message $m_i$ encoded as $m_{i,a}m_{i,b}$. $H$ is a separator for the head of the channel.}	
\end{figure}

	\item  
	Exactly at $2k+1$ time units after the start of the $j^{th}$ configuration, we encode the transition
	from the state at the $j^{th}$ configuration to the  $(j+1)^{st}$ configuration (starts at $(2k+2)(j+1)$).
		Note that the transition has the form $(s,c!m,s')$ or $(s,c?m,s')$ or $(s,c=\epsilon,s')$. 
	\item  We introduce a special symbol $H$, which acts as separator between the head
	of the message and the remaining contents, for each channel. 
	\item A sequence of messages $w_1w_2w_3\ldots w_z$ in any channel is encoded as a sequence 
	\\ $w_{1,a}w_{1,b} H  w_{2,a}w_{2,b}w_{3,a}w_{3,b}\ldots w_{z,a}w_{z,b}$. 
\end{enumerate}

Let $S = \bigvee_{i=0}^n s_i$ denote the states of the $\icmet$, 
$\alpha = \bigvee_{i=0}^m \alpha_i$, denote the transitions $\alpha_i$ of the form $(s,c!m,s')$ or $(s,c?m,s')$ or $(s,c=\epsilon,s')$. 
Let $action = true$ and let $M_a = \bigvee_{m_x \in M} (m_{x,a})$, $M_b = \bigvee_{m_x \in M} (m_{x,b})$, with $M = M_a \vee M_b$. 

\begin{enumerate}
	\item All the states must be at distance $2k+2$ from the previous state (first one being at 0) and all the propositions encoding transitions must be at the distance $2k+1$ from the start of the configuration.
\begin{quote}
$\varphi_{S}  {=}   ~s_0 \wedge \Box [S \Rightarrow \{\fut_{(0,2k+2]} (S) \wedge \Box_{(0,2k+2)}(\neg S)\wedge 
 \fut_{(0,2k+1]} \alpha \wedge \Box_{[0,2k+1)} (\neg \alpha) \wedge \fut_{(2k+1,2k+2)} (\neg \alpha)\}]$
\end{quote}
	\item  All the messages are in the interval $[2i-1,2i]$ from the start of configuration.  No action takes place at $(2i-2,2i-1)$ from the start of any configuration. 
	\begin{quote}
		$
		\varphi_{m} {=} ~\Box\{S {\Rightarrow} \bigwedge_{i=1}^k \Box_{[2i-1,2i]}(M {\vee} H) {\wedge}
		$
		 $\Box_{(2i-2,2i-1)} (\neg action)\}$
	\end{quote}
	
	\item  Consecutive source and target states must be in accordance with a transition $\alpha$. For example,  $s_j$ appears consecutively after $s_i$
	reading $\alpha_i$ iff $\alpha_i$ is of the form $(s_i,y,s_j) \in \Delta$, with $y \in \{c_i!m, c_i?m,c_i=\epsilon\}$.
	\begin{quote}
		$
		\varphi_{\Delta} {=} \bigwedge_{s,s' \in S} \Box\{(s {\wedge} \fut_{(0,2k+2]} s') {\Rightarrow} (\fut_{(0,2k+1]} {\bigvee \Delta_{s,s'}})\}
		$
		where $ \Delta_{s,s'} $ are possible $\alpha_i$ between $s,s'$.
	\end{quote}
	\item We introduce a special symbol $H$ along with other channel contents which acts as a separator between the
	head of the channel and rest of the contents. Thus $H$ has following properties
	\begin{itemize}
		\item There is one and only one time-stamp in the interval $(2i-1,2i)$  from the start of the configuration where $H$ is true. The following formula says that there is an occurrence of a $H$:
		\begin{quote}
			$
			\varphi_{H_1} {=}  \Box[(S {\wedge} \fut_{(2i-1,2i)} M) {\Rightarrow} (\bigwedge_{i=1}^{k} \fut_{(2i-1,2i)} (H))]
			$
		\end{quote}
		The following formula says that there can be only one $H$:
			$
			\varphi_{H_2} {=} \Box(H {\Rightarrow} \neg \fut_{(0,1)} H)
			$
		\item Every message $m_x$ is encoded by truth of proposition $m_{x,a}$ immediately followed by $m_{x,b}$. Thus for any message $m_x$, the configuration encoding the channel contents has a sub-string of the form $(m_{x,a}m_{x,b})^*$ where $m_x$ is some message in $M$.
		\begin{quote}
	    $\varphi_{m} {=} \Box [m_{x,a} {\Rightarrow} \nex_{(0,1]} m_{x,b} ] {\wedge}
	    \Box [m_{x,b} {\Rightarrow} \nex_{(0,1)} M_{a}$
	    $ {\vee} \nex(\bigvee \Delta \vee H) ] {\wedge}	 (\neg M_b \until M_a)$
	    \end{quote}
		\item If the channel is not empty (there is at least one message $m_a m_b$ in the interval $(2i-1,2i)$ corresponding to channel $i$ contents) then there is one and only one $m_b$ before $H$. 
		The following formula says that there can be at most one $m_b$ before $H$.
		\begin{quote} 
			$
			\varphi_{H_3} {=} \Box[\neg \{M_b \wedge \fut_{(0,1)}(M_b \wedge \fut_{(0,1)} H)\}]
			$
\end{quote}
			The following formula says that there is one $M_b$ before $H$ in the channel, if the channel is non-empty.
		\begin{quote}	$
			\varphi_{H_4} {=} \Box[S {\Rightarrow} \{\bigwedge_{j=1}^k(\fut_{[2j-1,2j]}(M_b) {\Rightarrow} $
			$\fut_{[2j-1,2j]}(M_b \wedge \fut_{(0,1)}H))\}]
			$
			\end{quote}
		Let $\varphi_H {=}   \varphi_{H_1} \wedge \varphi_{H_2} \wedge \varphi_{H_3} \wedge \varphi_{H_4}$.
		
	\end{itemize}

	\item Encoding transitions:
	\begin{itemize}
	\item[(a)] We first define a macro for copying the contents of the $i^{th}$ channel to the next configuration with insertion errors. 
	If there were some $m_{x,a},m_{x,b}$ at times $t,t'$, $m_{x,b}$ is copied to 
		$t''+2k+2$ (where $t'' \in [t,t')$), representing the channel contents in the next configuration. This is specified by means of an even count check.
		 From any 3 consecutive points $u,v,w$ such that $m_{x,a}$ and $m_{x,b}$ are true at $v$ and $w$, respectively, 
		we assert that  there are even (or odd) number of $m_{x,b}$ within $(0,2k+2)$ from both $v$ and $w$. 
		 This implies that there must be an odd number of $m_{x,b}$'s within time interval $[\tau_v+2k+2,\tau_w+2k+2]$. Thus, there must be at least one $m_{x,b}$ copied from the point $w$ to some point in the interval $[\tau_v+2k+2,\tau_w+2k+2]$. The rest of the even number of erroneous  $m_{x,b}$ in $[\tau_v+2k+2,\tau_w+2k+2]$, along with the arbitrary insertion errors within $[\tau_u+2k+2,\tau_v+2k+2]$ models the insertion error of the $\icmet$ (see Figure \ref{even-odd}). The formula $\mathsf{copy}_g$ is as follows.
		$
			 \Box_{[2i-1,2i]}[\bigwedge_{m_x \in M}(m_{x,a} {\wedge} \isEven_{(0,2k+2)}(m_{x,b})) 
			 {\Rightarrow} \nex(\isEven_{(0,2k+2)}(m_{x,b}))]\\
			{\wedge} \Box_{[2i-1,2i]}[\bigwedge_{m_x \in M}(m_{x,a} {\wedge} \neg \isEven_{(0,2k+2)}(m_{x,b})) 
			{\Rightarrow}\nex(\neg \isEven_{(0,2k+2)}(m_{x,b}))]
			$
	\begin{figure}
\includegraphics[scale=0.35]{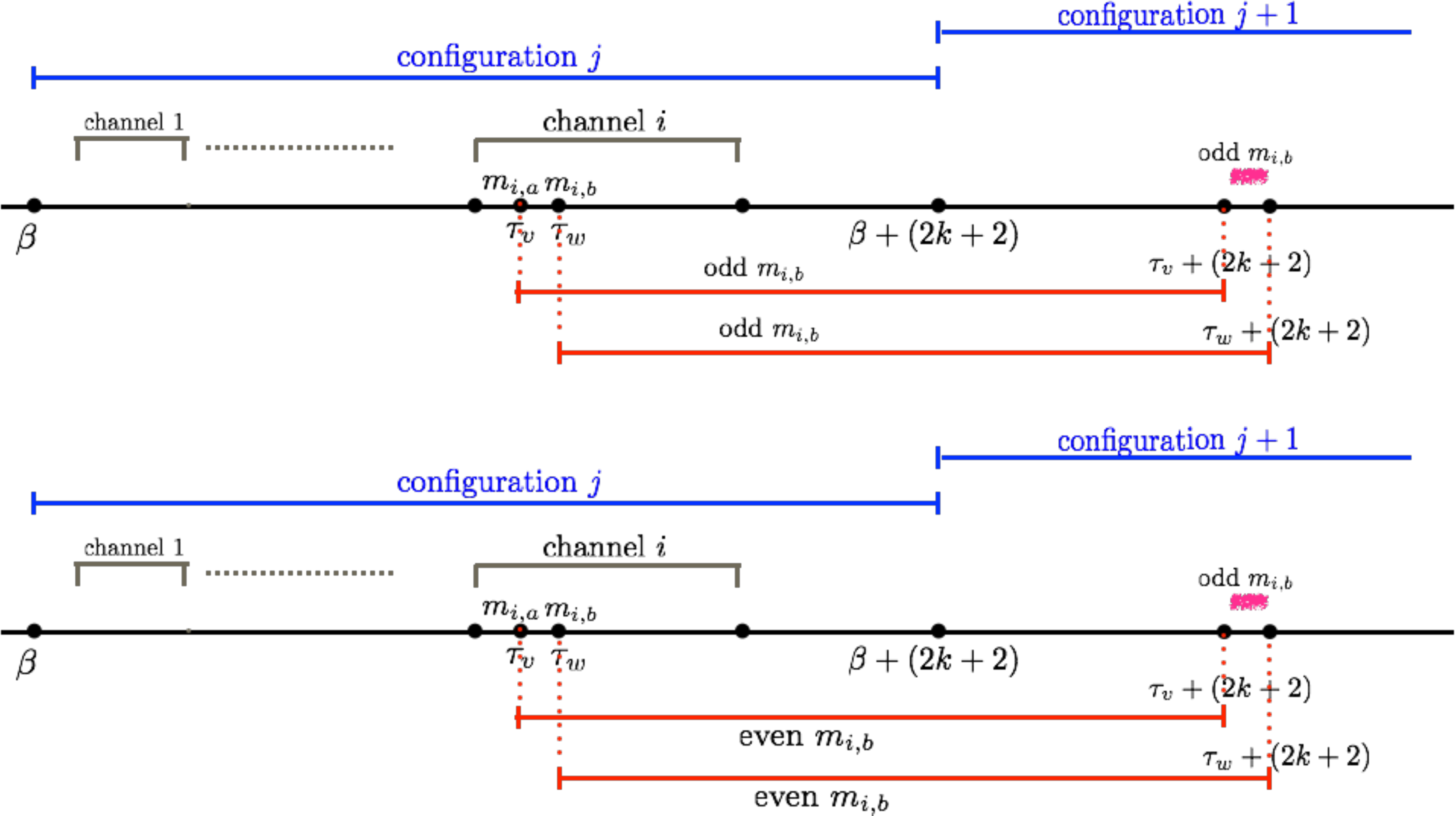}	
\caption{Copying channel contents from configuration $j$ to $j+1$. 
$\tau_v, \tau_w$ are consecutive time points labeled $m_{i,a}, m_{i,b}$. 
The points $\tau_v+(2k+2)$ and 
$\tau_w+(2k+2)$ may not necessarily be there in the word. However, there will be an odd number of $m_{i,b}$'s 
in the interval $[\tau_v+(2k+2),\tau_w+(2k+2)]$, if the number of $m_{i,b}$'s are even in 
both $[\tau_v, \tau_v+(2k+2)]$ and 
$[\tau_w,\tau_w+(2k+2)]$, or odd  in both $[\tau_v, \tau_v+(2k+2)]$ and 
$[\tau_w,\tau_w+(2k+2)]$. 
}
\label{even-odd}
\end{figure}

		\item[(b)] If the transition is of the form $c_i = \epsilon$. The following formulae checks that there are no events in the interval 
		$(2i-1,2i)$ corresponding to channel $i$, 
		while all the other channel contents are copied.    
		
		\begin{quote}
			$
			\varphi_{c_i=\epsilon} {=} S \wedge \Box_{(2i-1,2i)}(\neg action) {\wedge} \bigwedge \limits_{g =1}^{k} \mathsf{copy}_g
			$
		\end{quote}
		\item[(c)] If the transition is of the form $c_i!m_x$ where $m \in M$. An extra message is appended to the tail of channel $i$, and all the $m_a m_b$'s are copied to the next configuration. 
		$M_b \wedge \Box_{(0,1)}(\neg M))$ denotes the last time point of channel $i$; if this occurs at time $t$, 
		we know that this is copied at a timestamp strictly less than $2k+2+t$ (by 5(a)). Thus we assert that 
		truth of $\fut_{(2k+2,2k+3)}m_{x,b}$ at $t$. 
		\begin{quote}
			$
			\varphi_{c_i!m} {=} S {\wedge} \bigwedge \limits_{g =1}^{k} \mathsf{copy}_g
			{\wedge} \fut_{[2i-1,2i)}\{(M {\wedge} 
			\Box_{(0,1)}(\neg M)) {\Rightarrow} (\fut_{(2k+2,2k+3)}(m_{x,b}))\}$
			
		\end{quote}
		
		\item[(d)] If the transition is of the form $c_i?m$ where $m \in M$. The contents of all channels other than $i$ are copied to the intervals encoding corresponding channel contents in the next configuration. We also check the existence of a first message 
		in channel $i$; such a message has a $H$ at distance $(0,1)$ from it. 
				\begin{quote}
			$
			\varphi_{c_i?m_x} {=} S {\wedge} {\bigwedge \limits_{j \ne i,g=1}^k} \mathsf{copy}_g {\wedge} \fut_{(2i-1,2i)}\{m_{x,b} 
			{\wedge} \fut_{(0,1)}(H)\} 
			{\wedge} \\ 
			\Box_{[2i-1,2i]}[{\bigwedge_{m_x \in M}}(m_{x,a} {\wedge}
			 \isEven_{(0,2k+2)}(m_{x,b})
			 {\wedge} \neg \fut_{(0,1)} H) 
			{\Rightarrow} 
			\nex(\isEven_{(0,2k+2)}(m_{x,b}))]{\wedge} \\
			\Box_{[2i-1,2i]}[{\bigwedge_{m_x \in M}}(m_{x,a} {\wedge} \neg \isEven_{(0,2k+2)}(m_{x,b})
			 {\wedge} \neg \fut_{(0,1)} H) {\Rightarrow} \nex(\neg \isEven_{(0,2k+2)}(m_{x,b}))]
			$
		\end{quote}
		
	\end{itemize}
	\item Channel contents must change in accordance to the relevant transition.  Let $L$ be a set of labels (names) for the transitions. 
	Let $l \in L$ and $\alpha_l$ be a transition labeled $l$. 
	\begin{quote}
		$
		\varphi_C = \Box[S \Rightarrow \bigwedge_{l \in L}(\fut_{(0,2k+1]} (\bigvee \alpha_l \Rightarrow \varphi_l))]
		$
		\end{quote}
where $\varphi_l$ are the formulae  as seen in 5 above ($\varphi_{c_i?m_x}, \varphi_{c_i!m}, \varphi_{c_i=\epsilon}$).  
		\item Let $s_t$ be a state of the $\icmet$ whose reachability we are interested in. We check $s_t$ is reachable from $s_0$ :
		$
		\phi_{reach} = \fut(s_t)
		$
	
	Thus the formula encoding $\icmet$ is:
		$ \varphi_{S} \wedge \varphi_{\Delta} \wedge \varphi_{m} \wedge \varphi_{H} \wedge \varphi_{C} \wedge \varphi_{reach}$
	\end{enumerate}

This is a formula in $\mitl+\UM$, and we have reduced the reachability problem 
of $\icmet$ with insertion errors to 
checking satisfiability of this formula.

\section{Non-punctual 1-$\mathsf{TPTL}$ is $\mathsf{NPR}$}
\label{app:1-tptl-hard} 
In this section, we show that non-punctuality does not provide any benefits in terms of complexity 
of satisfiability for $\mathsf{TPTL}$ as in the case of $\mathsf{MITL}$. 
We show that satisfiability checking of non-punctual  $\mathsf{TPTL}$ is itself non-primitive recursive. 
This highlights the importance of our oversampling reductions from $\regmtl$ and $\regmitl$  
to $\mtl$ and $\mathsf{MITL}$ respectively, giving $\mathsf{RegMITL}$ an elementary complexity. 
It is easier to reduce $\regmitl[\ureg]$ to 1-variable, non-punctual, $\mathsf{TPTL}$ without using oversampling, but this  
gives a non-primitive recursive bound on complexity. Our reduction of $\regmitl[\ureg]$ to equisatisfiable $\mitl$ using oversampling, however 
has a $\mathsf{2EXPSPACE}$ upperbound.

\subsection*{Non-punctual $\mathsf{TPTL}$ with 1 Variable ($\optptl$)}
We study a subclass of $1-\mathsf{TPTL}$ called open $1-\mathsf{TPTL}$ and  denoted as $\optptl$. The restrictions 
are mainly on the form of the intervals used in comparing the clock $x$ as follows:  
\begin{itemize}
	\item Whenever the single clock $x$ lies in the scope of even number of negations, $x$ 
	is compared only with open intervals, and 
	\item Whenever the single clock $x$ lies in the scope of an odd number of negations, 
	$x$ is compared to a closed interval. 
\end{itemize}
Note that this is a {\bf stricter restriction} than non-punctuality as it can assert a property only within an open timed region. 
Our complexity result hence applies to $\tptl$ with non-punctual intervals. 
 Our hardness result uses a reduction from counter machines.

\subsection*{Counter Machines}
A deterministic $k$-counter machine is a  $k+1$ tuple ${\cal M} = (P,C_1,\ldots,C_k)$, where
\begin{enumerate}
	\item  $C_1,\ldots,C_k$ are counters taking values in  $\mathbb{N} \cup \{0\}$ (their initial values  are set to zero);
	\item  $P$ is a finite set of instructions with labels $p_1, \dots, p_{n-1},p_n$. 
	There is a unique instruction labelled HALT. For $E \in \{C_1,\ldots,C_k\}$, the instructions $P$ are of the following forms:
	\begin{enumerate} 
		\item  $p_i$: $Inc(E)$, goto $p_j$, 
		\item  $p_i$: If $E =0$, goto $p_j$, else go to $p_k$, 
		\item    $p_i$: $Dec(E)$, goto $p_j$,
		\item  $p_n$: HALT. 
	\end{enumerate}
\end{enumerate}
A configuration $W=(i,c_1,\ldots,c_k)$ of ${\cal M}$ is given by the value of the current program counter $i$ and values $c_1,c_2,\ldots,c_k$ of the counters $C_1,C_2,\ldots,C_k$.  A move of the  counter machine $(l,c_1,c_2,\ldots,c_k) \rightarrow (l',c_1',c_2',\ldots,c_k')$ denotes that configuration $(l',c_1',c_2',\ldots,c_k')$ is obtained from $(l,c_1,c_2,\ldots,c_k)$ by executing the $l^{th} $ instruction $p_l$.
If $p_l$ is an increment or decrement instruction, $c'_l=c_l+1$ or $c_l-1$, while $c'_i=c_i$ for $i \neq l$ and $p'_l$ is the respective 
next instruction, while 
if $p_l$ is a zero check instruction, then $c'_i=c_i$ for all $i$, and $p'_l=p_j$ 
if $c_l=0$ and $p_k$ otherwise.

\subsection*{Incremental Error Counter Machine ($\iecm$)}
An incremental error counter machine ($\iecm$) is a counter machine where a particular configuration can have counter values with arbitrary positive error.
Formally, an  incremental error $k$-counter machine is a $k+1$ tuple ${\cal M} = (P,C_1,\ldots,C_k)$ where $P$ is a set of instructions like above and $C_1$ to $C_k$ are the counters.  The difference between a counter machine with and without incremental counter error is as follows:
\begin{enumerate}
\item Let $(l,c_1,c_2\ldots,c_k) \rightarrow (l',c_1',c_2'\ldots,c_k')$ be a move of a counter machine without error when executing $l^{th}$ instruction.
\item The corresponding move in the increment error counter machine is 
$$(l,c_1,c_2\ldots,c_k) \rightarrow \{(l',c_1'',c_2''\ldots,c_k'') | c_i'' \ge c_i', 1 \leq i \leq k \}$$
 Thus the value of the counters are non deterministic. We use these machines for proving lower bound complexity in section \ref{open-tptl}.
\end{enumerate}
\begin{theorem}
	\label{theo:minsky}
	\cite{minsky} The halting problem for deterministic $k$ counter machines is undecidable for $k \ge 2$.
\end{theorem}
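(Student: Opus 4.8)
The plan is to reduce the halting problem for Turing machines---known to be undecidable---to the halting problem for deterministic $2$-counter machines, routing the reduction through the intermediate model of $n$-counter machines for arbitrary $n$. The reduction factors into two effective (computable) simulations, so that a decision procedure for $2$-counter halting would compose back into one for Turing machine halting, contradicting Turing's theorem.

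First I would show that for any fixed $n$, a deterministic $n$-counter machine can faithfully simulate a single-tape Turing machine $M$. The idea is to encode a configuration of $M$ by a pair of natural numbers, reading the tape content to the left of the head as one number (in base equal to the tape alphabet size) and the content to the right as another. A move of $M$---read the scanned cell, write a symbol, shift the head---then becomes arithmetic on these two numbers: extracting the scanned digit is division-with-remainder, writing is an addition, and shifting the head is multiplication or division by the base. Using three counters (two for the encodings and one for scratch) these operations are realized by standard loops of $Inc$, $Dec$ and zero-test instructions, while the finite control carries the state of $M$ alongside the current instruction label. This establishes that $n$-counter machines for $n \geq 3$ are Turing-complete, hence have undecidable halting.

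Next I would compress $n$ counters down to two by G\"odel encoding. Fix distinct primes $p_1, \dots, p_n$ and represent the tuple $(c_1, \dots, c_n)$ by the single number $N = \prod_{i=1}^{n} p_i^{c_i}$, held in the first counter, with the second counter used purely as scratch. Incrementing $C_i$ becomes multiplying $N$ by $p_i$; decrementing becomes dividing $N$ by $p_i$; and testing $C_i = 0$ becomes testing whether $p_i \mid N$. Each of multiply-by-constant, divide-by-constant and divisibility-test is implemented by shuttling values between the two counters in a loop---decrementing one while incrementing the other by the appropriate amount---and using zero-tests to detect termination and remainders. Since the whole construction is uniform and effective, decidability of $2$-counter halting would transport through both simulations to decide Turing machine halting.

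Finally, the passage from $k=2$ to all $k \geq 2$ is immediate: a $2$-counter machine is literally a $k$-counter machine that never touches its extra counters, so the undecidable instance for two counters embeds into the $k$-counter halting problem. The step I expect to be the main obstacle---and the one demanding the most care---is the arithmetic bookkeeping in the counter-to-counter transfers: with only two counters and no native multiplication, multiplying the G\"odel number by $p_i$ (and the matching division and divisibility test) must be broken into nested loops that move contents back and forth while reconstructing the operand, and one must verify that each such loop halts exactly when the simulated instruction prescribes, so that the two-counter trace faithfully tracks the original computation.
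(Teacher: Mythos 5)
The paper does not prove this statement; it is quoted as a known result with a citation to Minsky, and your argument is precisely the classical proof from that reference: simulate a Turing machine by a few counters via the two-tape-halves encoding, then compress to two counters by G\"odel-numbering the counter vector as $\prod_i p_i^{c_i}$ with transfer loops implementing multiplication, division, and divisibility tests. Your sketch is correct and takes essentially the same (standard) route, so there is nothing to reconcile with the paper beyond noting that it relies on the citation rather than an in-text proof.
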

\begin{theorem}
	\label{theo:lazic}
	\cite{demriL06} The halting problem for incremental error $k$-counter machines is non primitive recursive for $k \ge 5$.
\end{theorem}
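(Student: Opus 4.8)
The plan is to establish this non-primitive recursive lower bound by reduction from a problem already known to lie beyond the primitive recursive hierarchy, exploiting the fact that incremental (``gainy'') errors, unlike arbitrary errors, can be partially controlled. Concretely, I would reduce from the halting problem of \emph{error-free} Minsky machines whose runs are forced to compute a value of a fast-growing function $F_\alpha$ (along the Hardy / fast-growing hierarchy), since deciding halting of such machines is Ackermann-hard and hence not primitive recursive. The heart of the argument is to show that an $\iecm$ with enough counters can \emph{faithfully} simulate an error-free computation for a number of steps bounded by $F_\alpha(n)$; because this bound dominates every primitive recursive function, such a bounded-but-enormous faithful simulation already suffices to encode halting of a computation of arbitrary primitive-recursive length.

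The reduction would proceed in two phases. In a \emph{budget} phase, a designated counter is loaded with a robust lower bound on $F_\alpha(n)$: here incremental errors are harmless, since they only \emph{increase} the budget, and enforcing a lower bound (rather than an exact value) is precisely what a gainy machine can do without being defeated by its own errors. In the \emph{simulation} phase the target Minsky machine is executed, each of its steps consuming one unit of budget, so that the length of the simulated run is capped at $F_\alpha(n)$. Because this cap exceeds every primitive recursive function, a halting run of the Minsky machine of any primitive-recursive length is forced, transferring the non-primitive-recursive lower bound to $\iecm$ halting; the constant $k \ge 5$ reflects the number of auxiliary counters needed to carry the budget together with the dual registers described next.

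The main obstacle — the step I expect to be the most delicate — is preventing spurious increments from letting the simulation \emph{cheat} through a zero-test or an increment and thereby reach \textsf{HALT} along an unfaithful run. The standard remedy is a \emph{complementary encoding}: each simulated counter $C_i$ is paired with a partner counter whose sum with $C_i$ is pinned to the current budget $B$, and every instruction is rewritten so that an illegitimate gain in one register forces a matching deficit elsewhere, which a final consistency check — feasible only if no uncompensated error has occurred within the budget — then rejects. Getting these invariants exactly right across increments, decrements, and zero-tests is the technical crux. For completeness one would also note that the problem is genuinely decidable (so that ``non-primitive recursive'' is a complexity statement and not undecidability): this follows from a well-structured-transition-system analysis exploiting the monotonicity of the gainy dynamics under the componentwise well-quasi-order on configurations, which yields a terminating backward coverability fixpoint, albeit with no primitive-recursive bound.
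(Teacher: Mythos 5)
First, a point of order: the paper does not prove this statement at all. Theorem~\ref{theo:lazic} is imported verbatim from Demri and Lazi\'c as a black box, and is used only as the source of hardness for the reduction to $\mitl+\mcnt$ in Appendix~\ref{app:th-ack}. There is therefore no in-paper proof to compare your attempt against, and any assessment has to be of your sketch on its own terms against the known argument in the cited literature.

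On those terms, your roadmap is the right one in outline --- it is the standard ``yardstick'' strategy (compute a fast-growing/Hardy function value as a budget, simulate a bounded Minsky machine inside that budget, reduce from the Ackermann-hard bounded-halting problem rather than from undecidable unbounded halting, and note decidability via well-structuredness of the gainy dynamics). But the step you yourself flag as the crux is precisely where the proof lives, and your proposed remedy for it does not work as stated. A ``complementary encoding'' pinning $C_i + \bar{C}_i = B$ cannot be \emph{verified} by an incremental-error machine: the final consistency check must itself be carried out by decrements and zero-tests that are subject to the same spurious increments, so an adversarial gain during the check can mask an earlier gain during the simulation. The actual argument resolves this not by an invariant that is checked once at the end, but by a monotonicity sandwich: the budget $F_\alpha(n)$ is computed \emph{forward} (gains can only overshoot it), the simulation consumes it, and then the \emph{inverse} function $F_\alpha^{-1}$ is computed \emph{backward} down to the original input $n$; because every phase is monotone in the presence of gains, ending with exactly $n$ certifies that no uncompensated error occurred anywhere in the run, including during the check itself. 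Without this forward--backward structure (or an equivalent device), your reduction admits unfaithful accepting runs and the lower bound does not follow. So the proposal is a correct high-level plan with the essential technical content missing rather than a proof.
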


\subsection{Satisfiability Checking for $\mathsf{\optptl}$}
\label{open-tptl}

\begin{theorem}
Satisfiability Checking of  $\optptl[\fut,\nx]$ is decidable with  non primitive recursive lower bound over finite timed words. %and  it is undecidable over infinite timed words. 
\end{theorem}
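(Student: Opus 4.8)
The plan is to establish the two halves of the statement separately: decidability comes essentially for free from the automaton connection, while the non-primitive-recursive lower bound is the substantive part and is obtained by a reduction from the halting problem for incremental error counter machines.

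\textbf{Decidability.} First I would observe that $\optptl[\fut,\nx]$ is, by definition, a syntactic fragment of $1{-}\tptl$: the open/closed polarity restriction only constrains which intervals may decorate the clock comparisons, it adds no new constructs. Every $1{-}\tptl$ formula translates to an equivalent 1-clock ATA, exactly as in the translation of $\mtl$ (and of $1{-}\tptl$) to $\po$-1-clock ATA used in Lemma~\ref{aut-tptl-1}. Satisfiability of the formula then reduces to non-emptiness of the resulting 1-clock ATA over finite timed words, which is decidable by \cite{Ouaknine05}. Thus decidability needs no special use of the polarity restriction and follows immediately; the restriction matters only for the complexity, which is where the lower bound enters.

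\textbf{Lower bound.} The heart of the proof is a reduction from the halting problem of an incremental error $k$-counter machine $\cal M$ with $k\ge 5$, which is non-primitive recursive by Theorem~\ref{theo:lazic}. I would build an $\optptl[\fut,\nx]$ formula $\Phi_{\cal M}$ that is satisfiable iff $\cal M$ has a halting run. The encoding lays out a run of $\cal M$ as a timed word consisting of consecutive configuration blocks of fixed temporal length; inside each block I place a marker for the current instruction, a marker for the applied transition, and disjoint sub-regions recording the counters, where counter $C_i$ holding value $v$ is represented by $v$ occurrences of a dedicated symbol in its region. All structural constraints (one instruction marker per block, correct spacing of blocks, correct ordering of the counter regions, and that instruction labels evolve according to the transition relation of $\cal M$) are expressible with $\fut$, $\nx$, the freeze quantifier $x.$, and interval tests on $x$; crucially, wherever I assert the \emph{existence} of a witnessing point I compare $x$ with an \emph{open} interval (even negation depth) and wherever I assert \emph{absence} or a universal property I compare $x$ with a \emph{closed} interval (odd negation depth), so that the formula stays inside $\optptl$.

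\textbf{The crux and main obstacle.} The delicate part is simulating the counter updates between a block and its successor. To copy a counter I freeze $x$ at each counter symbol and demand, via an open-interval eventuality, a matching symbol roughly one block later, together with a dual negated closed-interval constraint forbidding any counter symbol from being dropped. The main obstacle is precisely that the polarity restriction of $\optptl$ \emph{forbids} the punctual (exact-distance) tests one would use to copy counters faithfully, so the open-interval existential matching necessarily carries slack and permits spuriously many counter symbols to appear in the next block. I would argue that this slack corresponds exactly to the incremental (insertion) errors allowed by $\cal M$: counters may only over-count, never under-count, because the closed-interval universal constraints rule out loss while the open-interval existentials permit gain. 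Making this correspondence tight---admitting increment-type errors while excluding decrement-type errors using only non-punctual constraints---is the technical core, and it mirrors the parity-based insertion-error mechanism used for $\mitl+\mcnt$ in Appendix~\ref{app:th-ack}. Finally $\Phi_{\cal M}$ asserts $\fut(\mathsf{HALT})$, so that satisfiability coincides with reachability of the halting instruction, yielding the non-primitive-recursive lower bound and completing the proof.
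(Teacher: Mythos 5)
Your proposal follows essentially the same route as the paper: decidability is inherited from $1{-}\tptl$ (via the 1-clock ATA connection), and the lower bound is exactly the paper's reduction from halting of incremental-error counter machines (Theorem~\ref{theo:lazic}), with configurations laid out in unit-length blocks, counter values as multiplicities of dedicated symbols, and non-punctual copying whose unavoidable slack is argued to realize precisely the insertion errors. The only divergence is cosmetic: the paper's concrete copying gadget asserts that the \emph{last} symbol in $(t,t{+}1)$ after each non-last counter symbol at time $t$ is of the same kind (rather than the parity mechanism of the $\mitl+\mcnt$ reduction you allude to), but the over-count-only argument is the one you describe.
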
 
\begin{proof}
We encode the runs of  $k$ counter incremental error counter machine using $\optptl$ formulae with $\fut, \nx$ modalities. 
We will encode a particular computation of any counter machine using timed words. The main idea is to construct a $\optptl[\fut,\nx]$ formula $\varphi_{\iecm}$ for any given $k$-incremental counter machine $\iecm$ such that $\varphi_{\iecm}$ is satisfied by only those timed words that encode the halting computation of $\iecm$. 
Moreover, for every halting computation $\mathcal{C}$ of the $\iecm$, at least one timed word $\rho_C$ encodes $\mathcal{C}$
and satisfies $\varphi_{\iecm}$. 

We encode each computation of a $k$-incremental counter machine  $(P,C)$ where $P = \{p_1,\ldots,p_n\}$ is the set of instructions and $C = \{c_1,\ldots,c_k\}$ 
is the set of counters   using 
timed words over the alphabet $\Sigma_{\iecm}= \bigcup_{j \in \{1,\ldots,k\}} (S \cup F \cup \{a_j,b_j\})$ where $S = \{s^{p}|p \in {1,\ldots,n}\}$ and $F = \{f^{p}|p \in {1,\ldots,n}\}$ as follows:
The  $i^{th}$ configuration, $(p,c_1,\ldots,c_k)$ is encoded  in the timed region $[i,i+1)$ with the sequence 
\begin{quote}
	$s^{p} (a_1b_1)^{c_1} (a_2b_2)^{c_2} \ldots(a_kb_k)^{c_k} f^{p}$.
\end{quote}

\begin{figure}[h]
\includegraphics[scale=0.5]{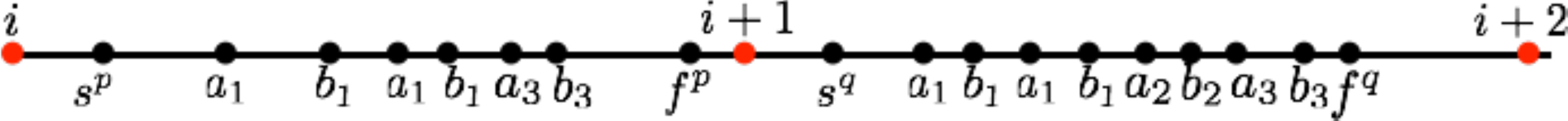}	
\label{conf1}
\caption{Assume there are 3 counters, and that the $i$th configuration is $(p, 2,0,1)$.  Let the instruction $p$ increment counter 2 and go to 
instruction $q$. Then the $i+1$st configuration is $(q,2,1,1)$. Note that the $i$th configuration 
is encoded between integer points $i, i+1$, while configuration $i+1$ is encoded between 
integer points $i+1, i+2$.}
\end{figure}

The concatenation of these time segments of a timed word encodes the whole computation.
Untiming our language yields the language 
	$$ (\mathcal{S} (a_1b_1)^* (a_2b_2)^* \ldots (a_kb_k)^*\mathcal{F})^*$$

where $\mathcal{S} = \bigvee \limits_{p \in \{1,2,\ldots,n\}} s^p$ and $\mathcal{F} = \bigvee \limits_{p \in \{1,2,\ldots,n\}} f^p$.

To construct a formula $\varphi_{\iecm}$, the main challenge is to propagate the behaviour from the time segment 
$[i,i+1)$ to the time segment $[i+1,i+2)$ such that the latter encodes the $i+1^{th}$ configuration of the $\iecm$ in accordance with the counter values of the $i^{th}$ configuration. 
The usual idea is to copy all the $a$'s from one configuration to another using punctuality. This is not possible in a non-punctual logic.
We preserve the number of $a$s and $b$s using the following idea:
\begin{itemize}
	\item Given any non last $(a_j,t)(b_j,t')$ before $\mathcal{F}$(for some counter $c_j$), of a timed word encoding a computation.
	We assert that the last symbol in $(t,t+1)$ is $a_j$ and the last symbol in 
	$(t',t'+1$) is $b_j$. 
	\item We can easily assert that the untimed sequence of the timed word is of the form
		$$ (\mathcal{S} (a_1b_1)^* (a_2b_2)^* \ldots (a_kb_k)^*\mathcal{F})^*$$
	\item The above two conditions imply that there is at least one $a_j$ within time $(t+1,t'+1)$. Thus, all the non last $a_j,b_j$ are copied to the segment encoding next configuration. 
	Now appending one $a_jb_j$,two $a_jb_j$'s or no $a_jb_j$'s  depends on whether the instruction was copy, increment or decrement operation. 
\end{itemize}

	$\varphi_{\iecm}$ is obtained as a conjunction of several formulae. Let $\mathcal{S},\mathcal{F}$ be a shorthand for 
$\bigwedge \limits_{p\in \{1,\ldots,n\}}s^{p}$ and  $\bigwedge \limits_{p\in \{1,\ldots,n\}}f^{p}$, respectively. We also define macros $A_j = \bigvee \limits_{w \ge j} a_w$ and $A_{k+1} = \bot$
		 We now give formula for encoding the machine. Let $\Cc=\{1,\ldots,k\}$ and $\Pp=\{1,\ldots,n\}$ be the indices of the counters and the instructions.  
	\begin{itemize}
		\item \textbf{Expressing untimed sequence}: The words should be of the form
		 $$ (\mathcal{S} (a_1b_1)^* (a_2b_2)^* \ldots (a_kb_k)^*\mathcal{F})^*$$
		This could be expressed in the formula below
		\begin{quote}

			$\varphi_1 = \bigwedge \limits_{j \in \Cc, p \in \Pp} 
			\wB[s^p \rightarrow \nex(A_1 \vee f^p)] \wedge \wB[a_j \rightarrow \nex(b_j)] \wedge \\
			~~~~~~~~~~~~~~~~~~~~~~~~~\wB[b_j \rightarrow \nex(A_{j+1} \vee f^p)] 
			\wedge 
			\wB[f^p \rightarrow \nex (\mathcal{S} \vee \wB(false))]$
		\end{quote}

	\item \textbf{Initial Configuration}: There is no occurrence of $a_jb_j$ within $[0,1]$. The program counter value is $1$.
		\begin{quote}
			$\varphi_2 = x.\{s^{1} \wedge \nex(f^{1} \wedge x \in (0,1))\} $
		\end{quote}
		\item \textbf{Copying $\mathcal{S},\mathcal{F}$}: Any $(\mathcal{S},u)$ (read as any symbol from 
		$\mathcal{S}$ at time stamp $u$) 
		 $(\mathcal{F},v)$ (read as (read as any symbol from 
		$\mathcal{F}$ at time stamp $v$)) has a next occurrence $(\mathcal{S},u')$, $(\mathcal{F},v')$ in the future such that 
		$u' - u \in (k,k+1)$ and $v' - v \in (k-1,k)$. 
		Note that this condition along with $\varphi_1$ and $\varphi_2$ makes sure that $\mathcal{S}$ and $\mathcal{F}$ occur only 
		within the intervals of the form $[i,i+1)$ where $i$ is the configuration number. Recall that $s^n,f^n$
		represents the last instruction (HALT). 
		\begin{quote}
			$\varphi_3 = [\wB x.\{(\mathcal{S} \wedge \neg s^{n}) \rightarrow \neg \fut(x \in [0,1] \wedge \mathcal{S}) \wedge \fut(\mathcal{S} \wedge x \in (1,2))\} \wedge \\
			 \wB x.\{(\mathcal{F} \wedge \neg f^{n}) \rightarrow \fut(\mathcal{F} \wedge x \in (0,1))\}]$
		\end{quote}
	Note that the above formula ensures that subsequent configurations are encoded in 
	smaller and smaller regions within their respective unit intervals, since consecutive 
	symbols from $\mathcal{S}$ grow apart from each other (a distance $>1$), while consecutive
	symbols from $\mathcal{F}$  grow closer to each other (a distance $<1$). See Figure \ref{conf2}.
	 \begin{figure}[h]
\includegraphics[scale=0.5]{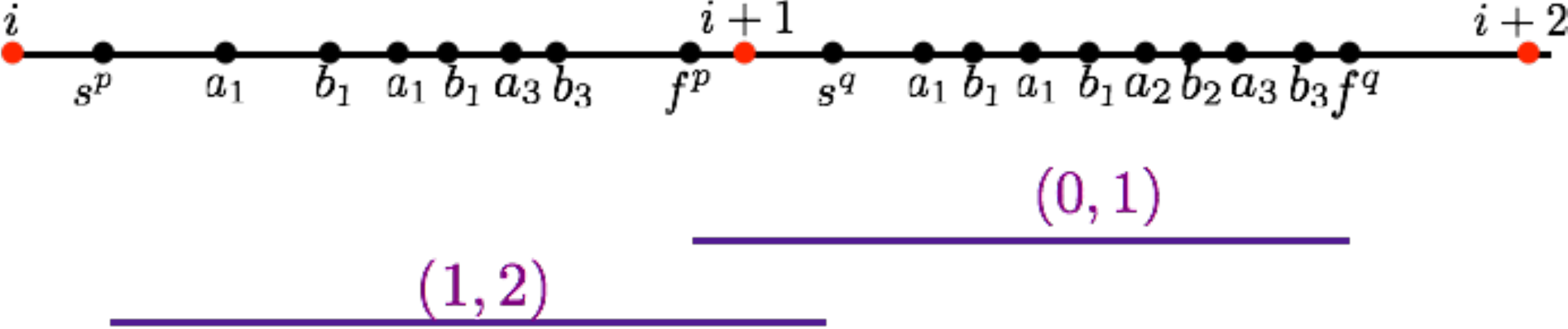}	
\caption{Subsequent configurations in subsequent unit intervals grow closer and closer.}
\label{conf2}
\end{figure}

		\item  Beyond $p_n$=HALT, there are no instructions
		\begin{quote}
		$\varphi_{4}\ =\  \wB[f^{n} \rightarrow \Box(false)]$
	\end{quote}
		
		\item At any point of time, exactly one event takes place. Events have distinct time stamps.
		\begin{quote}
		$\varphi_6\ =\ [\bigwedge \limits_{y \in \Sigma_{\iecm}}\wB[y \rightarrow
		\neg(\bigwedge \limits_{ \Sigma_{\iecm} \setminus \{y\}}(x))] \wedge \wB[\Box(false) \vee \nex(x \in (0,\infty))]
		$
		\end{quote}
		\item Eventually we reach the halting configuration $\langle p_n,c_1,\ldots,c_k \rangle$: $\varphi_6 = {\wF} s^{n}\\$
		
		\item Every non last $(a_j,t)(b_j,t')$ occurring in the interval $(i,i+1)$  should be copied in the interval $(i+1,i+2)$. We specify this 
		condition as follows:
		\begin{itemize}
		\item state that from every non last $a_j$   the last symbol within $(0,1)$ is $a_j$.
		Similarly from every non last $b_j$, the last symbol within $(0,1)$ is $b_j$.
		Thus  $(a_j,t)(b_j,t')$ will have a $(b_j,t'+1-\epsilon)$ where $\epsilon \in(0,t'-t)$. 
		
		 \begin{figure}[h]
\includegraphics[scale=0.5]{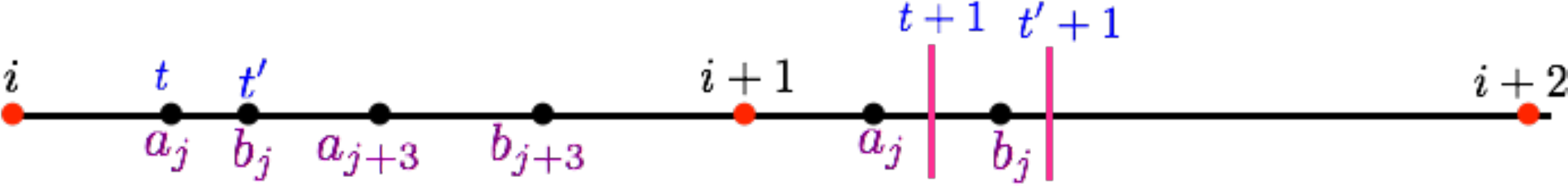}	
\caption{Consider a $a_jb_j$ where $a_j$ is at time $t$ and $b_j$ is at time $t'$. There are further $a,b$ symbols 
in the unit interval, like as shown above $a_{j+3}b_{j+3}$ occur after $a_jb_j$ in the same unit interval. 
Then the $a_j, b_j$ are copied such that the last symbol in the interval $(t, t+1)$ is an $a_j$ 
and the last symbol in the interval $(t', t'+1)$ is a $b_j$. There are no points 
between the  $a_j$ in $(i+1, i+2)$ and the time stamp $t+1$ as shown above. Likewise, 
there are no points between the $b_j$ in $(i+1, i+2)$ and the time stamp $t'+1$ as shown above. 
Note that the time stamp of the copied $b_j$ in $(i+1, i+2)$ lies in the interval $(t+1, t'+1)$.}
\label{conf3}
\end{figure}

		\item Thus all the non last
		$a_jb_j$ will incur a $b_j$ in the next configuration. $\varphi_1$ makes sure that there is an $a_j$ between two $b_j$'s. Thus this 
		condition along with $\varphi_1$ makes sure that the non last $a_jb_j$ sequence is conserved. Note that there can be some $a_jb_j$ which are arbitrarily inserted. These insertions errors model the incremental error of the machine.
	Any such inserted $(a_j,t_{ins})(b_j,t_{ins}')$ in $(i+1, i+2)$ is such that  
	there is a $(a_j,t)(b_j,t')$ in $(i, i+1)$ with $t_{ins}' \in (t+1,t'+1)$. 
	 Just for the sake of simplicity we assume that $a_{k+1}= false$.
		\end{itemize}
		
	Let $nl(a_j) = a_j \wedge \neg last(a_j)$, $nl(b_j) = b_j \wedge \neg last(b_j)$, $\psi_{nh} = \neg \fut (f^n \wedge x\in[0,1])$, \\
		$last(a_j) = a_j \wedge \nex(\nex (\mathcal{F} \vee A_{j+1})))$ and $last(b_j) =  b_{j} \wedge \nex(\mathcal{F} \vee A_{j+1})$. 
		\begin{quote}
		$\varphi_7 = \bigwedge \limits_{j \in \Cc} \wB x.[(nl(a_j) \wedge \psi_{nh}) \rightarrow \fut(a_j \wedge x \in 
			(0,1) \wedge \nex(x \in  (1,2)))]  \wedge \\
			 \wB x.[(nl(b_j) \wedge \psi_{nh}) \rightarrow \fut(b_j \wedge x \in  (0,1) \wedge \nex(x \in  (1,2)))]$
		\end{quote}
			\end{itemize}

We define a short macro 		
$Copy_{\Cc \setminus W}$: Copies the content of all the intervals encoding counter values except counters in $W$. Just for the sake of simplicity we denote 
		\begin{quote}
			        
$Copy_{\Cc  \setminus W} = \bigwedge \limits_{j \in \Cc \setminus W} \wB x.\{last(a_j) \rightarrow (a_j \wedge x \in (0,1) \wedge \nex(b_j \wedge x \in (1,2) \wedge \nex (\mathcal{F})))\}$

\end{quote}	
Using this macro we define the increment,decrement and jump operation.

\begin{enumerate}
	\item Consider the zero check instruction $p_g$: If $C_j=0$ goto $p_h$, else goto $p_d$. $\delta _1$ specifies the next configuration when the check for zero succeeds. $\delta_2$ specifies the else condition.
	\begin{quote}
		$ 
		\varphi^{g,j=0}_{8}\ =\  Copy_{\Cc \setminus \{\emptyset\}}
	\wedge \delta_1 \wedge \delta_2$
	\end{quote}
	\begin{quote}
		$\delta_1=	\wB[\{s^g \wedge ((\neg a_j) \until \mathcal{F})\} \rightarrow (\neg \mathcal{S}) \until s^h)]$
		
		$\delta_2 =   \wB[\{s^g \wedge ((\neg a_j) \until a_j )\} \rightarrow (\neg \mathcal{S}) \until s^d)]$.
		
	\end{quote}
	\item $p_g$: $Inc(C_j)$ goto $p_h$. The increment is modelled by appending exactly one $a_jb_j$ in the next interval just after the last copied $a_jb_j$
	\begin{quote}
		$ 
		\varphi^{g,inc_j}_{8}\ = Copy_{\Cc \setminus \emptyset} \wedge \wB(s^g \rightarrow (\neg \mathcal{S}) \until s^h )\wedge \psi^{inc}_{0} \wedge \psi^{inc}_{1}$
		\end{quote}
		\begin{itemize}
		\item The formula $\psi^{inc}_{0} = \wB [(s^g \wedge (\neg a_j \until f^g)) \rightarrow (\neg \mathcal{S} \until x.(s^h \wedge \fut(x \in (0,1) \wedge a_j))]$ specifies the increment of the counter $j$ when the value of $j$ is zero. 
		\item The formula
		$\psi^{inc}_{1}=\wB [\{s^g \wedge ((\neg \mathcal{F}) \until (a_j))\} \rightarrow  (\neg \mathcal{F}) \until x.\{last(a_j) \wedge \fut(x \in (0,1) \wedge (a_j \wedge \nex\nex (last(a_j) \wedge x \in (1,2))))\} ]$
	specifies the increment of counter $j$ when $j$ value is non zero by appending exactly one pair of $a_jb_j$ after the last copied $a_jb_j$ in the next interval.
		 	
		\end{itemize}

		\item $p_g$: $Dec(C_j)$ goto $p_h$. Let $second-last(a_j)= a_j \wedge \nex(\nex(last(a_j)))$. Decrement is modelled by avoiding copy of last $a_jb_j$ in the next interval. 
			\begin{quote}
			$ 
				\varphi^{g,dec_j}_{8}\ =\   Copy_{\Cc \setminus j} \wedge \wB(s^g \rightarrow (\neg \mathcal{S}) \until s^h )\wedge \psi^{dec}_0
		        \wedge \psi^{dec}_1
			$
			\end {quote}
			\begin{itemize}
			\item 	
			The formula $\psi^{dec}_0 = \wB [\{s^g \wedge (\neg a_j) \until f^g)\} \rightarrow \{(\neg \mathcal{S}) \until \{s^h \wedge ((\neg a_j) \until (\mathcal{F})\}]$ specifies that the counter remains unchanged if decrement is applied to the $j$ when it is zero. 
		\item 	The formula $\psi^{dec}_1 = \wB [\{s^g \wedge ((\neg \mathcal{F}) \until (a_j))\} \rightarrow  (\neg \mathcal{F}) \until x.\{second-last(a_j) \wedge \fut(x \in (0,1) \wedge (a_j \wedge \nex\nex ([A_{j+1} \vee \mathcal{F}] \wedge x \in (1,2))))\} ]$ decrements the counter $j$, if the present value of $j$ is non zero. It does that by disallowing copy of last $a_jb_j$ of the present interval to the next. 
			
			\end{itemize}

\end{enumerate}

The formula $\varphi_{\iecm}= \bigwedge \limits_{i \in \{1,\ldots,7\}} \varphi_i \wedge \bigwedge \limits_{p \in \Pp} \varphi^{p}_8$.

\end{proof}

\section{Details on Expressiveness}
\label{app:exp}
\begin{theorem}
 \begin{enumerate}
 \item $\mtl+\ureg \subseteq \mtl+\reg$
 \item $\mtl+\UM \subseteq \mtl+\mcnt$
 \end{enumerate}
\end{theorem}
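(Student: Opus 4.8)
The plan is to show that a single $\ureg$ (respectively $\UM$) modality can be rewritten inside $\mtl+\reg$ (respectively $\mtl+\mcnt$); the full inclusion then follows by structural induction, eliminating innermost occurrences first, since the surrounding $\until_I$ and Boolean connectives belong to both logics. The two parts diverge in difficulty: a general rational expression carries no additive structure, so for (1) I would factor through the minimal DFA of the expression, whereas for (2) a modulo count \emph{is} additive and admits a direct ``total minus tail'' encoding that sidesteps any factorisation.

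For (1), I would first fold the left argument into the expression, so that $\varphi_1 \ureg_{I,\re(\Ss)} \varphi_2 \equiv true \ureg_{I,\re'(\Ss)} \varphi_2$, where $\re'$ is a rational expression for $L(\re)\cap L(\varphi_1^{\,*})$, i.e.\ the words in $L(\re)$ all of whose letters satisfy $\varphi_1$. This is legitimate because the segment between $i$ and $j$ is matched \emph{in full}, so intersecting with the all-$\varphi_1$ language exactly enforces the ``$\varphi_1$ throughout'' clause; rationality is preserved by closure under intersection (after putting the formula in exclusive normal form). Next I factor $\re'$ through its minimal DFA: for each state $q$ let $R_1^q$ recognise the words driving the initial state to $q$, and $R_2^q$ the words driving $q$ to a final state. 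Then $L(\re')=\bigcup_q L(R_1^q)L(R_2^q)$, and crucially, for every accepted $w$ and every position-split $w=w_1w_2$, the state $q$ reached after $w_1$ witnesses $w_1\in L(R_1^q)$ and $w_2\in L(R_2^q)$, with the converse holding too. Splitting the candidate segment at the \emph{time} boundary $\tau_i+l$ then yields, for a bounded interval $[l,u)$,
\[ true \ureg_{[l,u),\re'} \varphi_2 \;\equiv\; \bigvee_{q}\; \reg_{(0,l)}R_1^q \,\wedge\, \reg_{[l,u)}\big(R_2^q\cdot \varphi_2\cdot \Sigma^*\big), \]
which is exactly the identity announced in Theorem~\ref{thm:exp}: $\reg_{(0,l)}R_1^q$ certifies the prefix $w_1$ of positions at distance ${<}\,l$, while $\reg_{[l,u)}(R_2^q\varphi_2\Sigma^*)$ locates the witness $j$ at distance $[l,u)$, certifies that the positions strictly between the boundary and $j$ spell $w_2$, and discards the remainder with $\Sigma^*$. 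Open/closed endpoints, the degenerate case $l=0$, and the unbounded case $u=\infty$ (where the suffix guard becomes $\reg_{[l,\infty)}(R_2^q\varphi_2\Sigma^*)$) are routine variants.

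For (2) I would exploit additivity of the count and avoid factorisation altogether. Writing $M$ for the number of $\psi$-points strictly after the anchor $i$ and $T_j$ for the number strictly after a candidate witness $j$, one has $\#\psi(i,j)=M-[\psi(j)]-T_j$, where $[\psi(j)]\in\{0,1\}$ records whether $\psi$ holds at $j$. Both $M$ and $T_j$ are counts over the \emph{fixed} unbounded window $(0,\infty)$, hence readable by $\mcnt_{(0,\infty)}$ at $i$ and at $j$ respectively. Fixing the residue $t\equiv M \pmod n$, the requirement $\#\psi(i,j)\equiv k$ turns into the residue constraint $T_j\equiv t-k-[\psi(j)]\pmod n$ on the witness, so
\[ \varphi_1 \UM_{[l,u),\#\psi=k\%n} \varphi_2 \;\equiv\; \bigvee_{t=0}^{n-1}\Big(\mcnt_{(0,\infty)}^{\,t\%n}\psi \;\wedge\; \varphi_1 \until_{[l,u)} \big(\varphi_2\wedge \theta_t\big)\Big), \]
where $\theta_t=(\psi\wedge \mcnt_{(0,\infty)}^{\,(t-k-1)\%n}\psi)\vee(\neg\psi\wedge \mcnt_{(0,\infty)}^{\,(t-k)\%n}\psi)$ cases on $[\psi(j)]$. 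Here the $\until_{[l,u)}$ simultaneously supplies the witness $j$ at distance $[l,u)$ and the ``$\varphi_1$ throughout'' clause, so neither a separate treatment of $\varphi_1$ nor a split at $l$ is required, and the whole formula lives in $\mtl+\mcnt$.

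The main obstacle is the bookkeeping in (1): proving the factorisation identity in both directions and aligning the position-indexed segment of $\ureg$ with the time-indexed windows of $\reg$. The alignment is clean over strictly monotonic words, where time order coincides with position order, so the $(0,l)$-positions are precisely the prefix lying before any distance-${\geq}l$ witness; it then extends to weakly monotonic words by the standard device used elsewhere in the paper. The empty-segment case (forcing $\epsilon\in L(R_1^q)$, i.e.\ $q$ initial) and the endpoint conventions must still be checked, but are mechanical. By contrast (2) becomes essentially free once the additive identity $\#\psi(i,j)=M-[\psi(j)]-T_j$ is observed, which is the one genuinely new idea needed there.
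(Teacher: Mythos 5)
Your proposal is correct and follows essentially the same route as the paper's proof in Appendix~\ref{app:exp}: for (1) you fold $\varphi_1$ into the expression, factor through the minimal DFA into $\bigcup_q R_1^q\cdot R_2^q$ (the paper's Lemma~\ref{reg-exp}), and split at the time boundary $l$ to obtain the identical formula $\bigvee_q \reg_{(0,l)}R_1^q \wedge \reg_{[l,u)}(R_2^q\cdot\varphi_2\cdot\Sigma^*)$; for (2) your ``total minus tail'' identity with the case split on $[\psi(j)]$ is exactly the paper's $cnt_{true}(k_1,\cdot)$ versus $cnt_{\phi_2\wedge\neg\phi_3}(k_2,\cdot)$ / $cnt_{\phi_2\wedge\phi_3}(k_2-1,\cdot)$ decomposition with $k_1-k_2=k$.
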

 \begin{proof} 
 \begin{enumerate}
 \item 
 We first prove $\mtl+\ureg \subseteq \mtl+\reg$. 

%Let us recall the semantics of  $\phi_1  \ureg_{I,\re} \phi_2$.
% Given a timed word $\rho$, and  positions $i < j \in dom(\rho)$, let $\mathsf{Seg}(\Ss, i, j)$ denote 
%the untimed word over $\mathcal{P}(\mathsf{S})$
%obtained by marking  the positions $k \in \{i+1, \dots, j-1\}$ of $\rho$ with 
%$\psi \in \mathsf{S}$ iff $\rho,k \models \psi$.   
% $\rho,i \models \varphi_{1} \ureg_{I,\re(S)} \varphi_{2}$  $\leftrightarrow$  $\exists j {>} i$, 
%$\rho,j {\models}\ \varphi_{2}, \tau_{j} - \tau_{i} {\in} I$, $\rho,k\ {\models}\ \varphi_{1}$ ${\forall} i{<} k {<}j$ and, 
%$\mathsf{Seg}(\Ss, i, j) \in L(\re(S))$, where $L(\re(\Ss))$ is the language  
%of the rational expression $\re$ formed over the set $\Ss$.  This is same as saying that 
%$\rho,j {\models}\ \varphi_{2}, \tau_{j} - \tau_{i} {\in} I$ and, 
%$\mathsf{Seg}(\Ss, i, j) \in L(\re(S))$
%and $\mathsf{Seg}(\{\phi_1\}, i, j) \in \phi_1^+$. That is, 
%$\mathsf{Seg}(\Ss \cup \{\phi_1\}, i, j) \in L(\re(S)) \cap \phi_1^+$. 

Note that $\phi_1  \ureg_{I,\re} \phi_2$ is equivalent to $true \uregm^{\Ss'}_{I,\re'} \phi_2$, where $\re'$ is a regular expression obtained by conjuncting 
 	$\phi_1$ to all formulae $\psi$ occurring in the top level subformulae of $\re$, and $\Ss'=\Ss \cup \{\phi_1\}$.  
 	For example, if we had $a \ureg_{(0,1),(\reg_{(1,2)}[\reg_{(2,3)}(b+c)^*])} d$, then 
 	we obtain $true \ureg_{(0,1),(a \wedge \reg_{(1,2)}[\reg_{(2,3)}(b+c)^*])} d$. 
 	When evaluated at a point $i$, the conjunction  ensures that $\phi_1$ holds good at all 
 	the points between $i$ and $j$, where $\tau_j - \tau_i \in I$. 
 	 To reduce $true \uregm^{\Ss'}_{I, \re'} \phi_2$ to a $\reg_I$ formula,  we need the following lemma.

 	 \begin{lemma}
 		\label{reg-exp}
 		Given any regular expression $R$, there exist 
 	finitely many regular expressions $R^1_1, R^1_2, \dots, R^n_1, R^n_2$
 		such that $R=\bigcup_{i=1}^n R^i_1. R^i_2$. That is, 
 for any string $\sigma \in R$ and for any decomposition of $\sigma$ as  $\sigma_1.\sigma_2$, 
 there exists some $i\le n$ such that $\sigma_1 \in R^i_1$ and  $\sigma_2 \in R^i_2$.
 		\end{lemma}
 			\begin{proof}
Let $\mathcal{A}$ be the  minimal DFA for $R$. Let the number of states in $\mathcal{A}$ be $n$. 
 The set of strings that leads to some state $q_i$ from the initial state $q_1$ is definable by a regular expression $R^i_1$. 
 Likewise, the set of strings that lead from $q_i$ to some final state of $\mathcal{A}$ is also definable by some regular expression $R^i_2$. 
 Given that there are  $n$  states in the DFA $\mathcal{A}$, we have  $L(\mathcal{A}) = \bigcup_{i=1}^n R^i_1. R^i_2$. 
  Consider any string $\sigma \in L(\mathcal{A})$, and any arbitrary decomposition of $\sigma$ as  $\sigma_1.\sigma_2$. 
  If we run the word $\sigma_1$ over $\mathcal{A}$, we might reach at some state $q_i$. Thus $\sigma_1 \in L(R^i_1)$.
    If we read $\sigma_2$ from $q_i$, it should lead us to one of the final states (by assumption that $\sigma \in R$) . Thus $\sigma_2 \in L(R^i_2)$. 
  \end{proof}
 	
 		Lets consider $true \ureg_{I, \re'} \phi_2$ when $I=[l,u)$.\footnote{If $I=[l,l]$, then 
 		  $true \ureg_{I, \re'} \phi_2=\reg_{[0,l]}\re'.\phi_2$}
 		 	  	If $true \ureg_{[l,u),\re'} \phi_2$ evaluates to true at a point $i$, 
 	we know that $\phi_2$ holds good at some point $j$ such that $\tau_j-\tau_i \in [l,u)$, and that $[\mathsf{Seg}(\Ss',i,j)]^{\singl} \cap L(\re') \neq \emptyset$.  
 	By the above lemma,  for any word $\sigma \in L(\re')$, and   any decomposition 	  $\sigma = \sigma_1.\sigma_2$, there exist an $i\in \{1,2,\ldots,n\}$ such that  
 	 	$\sigma_1 \in L(R^i_1)$ and $\sigma_2 \in L(R^i_2)$. 
 Thus we decompose at a point $j'$ with every possible $R^k_1. R^k_2$  pair such that 
\begin{itemize}
\item  $\tau_{j'} \in \tau_i+[l,u)$,   $[\mathsf{TSeg}(\Ss,(0,l),i)]^{\singl} \cap L(R^k_1) \neq \emptyset$, 
 \item $[\mathsf{TSeg}(\Ss,[l,u),i)]^{\singl} \cap L(R^k_2).\phi_2.\Sigma^* \neq \emptyset$, 
where $\phi_2 \in \Ss$.
\end{itemize}
 Note that (i)$\phi_2$ holds good at the point $j$ such that $\tau_j \in [\tau_{i}+l, \tau_{i}+u)$,
  and, (ii)  the expression $R^k_2$ evaluates to true in $[l, \tau_j)$. We simply assert $\Sigma^*$ on the remaining part $(\tau_j,u)$
  of the interval. 
  
   Thus $true \ureg_{[l,u),\re'} \phi_2 \equiv \bigvee \limits_{i\in \{1,2\ldots,n\}}\reg_{(0,l)}R^i_1 \wedge \reg_{[l,u)}(R^i_2.\phi_2.\Sigma^*)$.
\item  	 
We first show that the $\UM$ modality can be captured by $\mcnt$.
	Consider  any formula $\phi_1 \UM_{I,\#\phi_3 = k \%n} \phi_2$. At any point $i$ this formulae is true if and only if there exists a point $j$ in future such that $\tau_j-\tau_i\in I$ and the number of points between $i$ and $j$ where $\phi_3$ is true is  $k\%n$, and $\phi_1$ is true at all points between $i$ and $j$. To count between $i$ and $j$, we can first count the behaviour $\phi_3$ from $i$ to the last point of the word, followed by the counting from $j$ to the last point of the word. 
	Then we check that the difference between these counts to be  $k\%n$.

	Let $cnt_{\phi}(x,\phi_3) = \{\phi \wedge \mcnt^{x \% n}_{(0,\infty)}(\phi_3)\}$.
	Using this macro,  $\phi_1 \UM_{I,\#\phi_3 = k \%n}\phi_2$ is equivalent to 
	$\bigvee_{k_1=0}^{n-1} [\psi_1 \vee \psi_2]$ where
	\begin{itemize}
	\item $\psi_1{=}\{cnt_{true}(k_1,\phi_3) \wedge(\phi_1 \until_I cnt_{\phi_2 \wedge \neg \phi_3}(k_2,\phi_3))\}$,
	 \item $\psi_2{=}{\{cnt_{true}(k_1,\phi_3) \wedge (\phi_1 \until_I cnt_{\phi_2 \wedge \phi_3}(k_2{-}1,\phi_3))\}}$, 
	\item $k_1 {-} k_2 {=} k$
	\end{itemize}
 	\begin{figure}[h]
 	\includegraphics[scale=0.35]{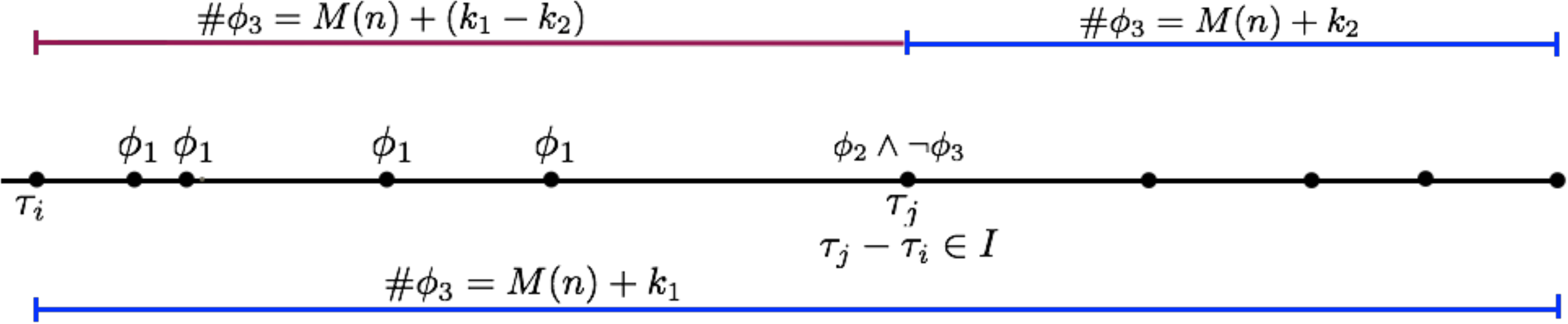}	
 	\caption{The case of $\psi_1$}
 	\end{figure}

 	\begin{figure}[h]
 	\includegraphics[scale=0.35]{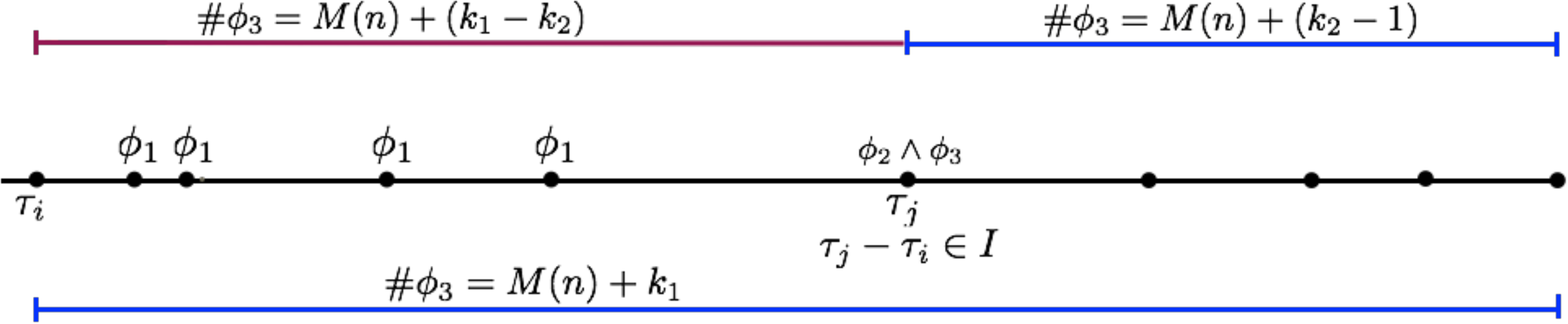}	
 	\caption{The case of $\psi_2$}
 	\end{figure}
 	
 	 The only difference between $\psi_1, \psi_2$ is that in one, $\phi_3$ holds at 
	position $j$, while in the other, it does not.  The $k_2-1$ is	 to avoid the 
	 double counting in the case $\phi_3$ holds at $j$.
	 \end{enumerate}
	 \end{proof}

\section{$\po$-1-clock ATA to  $1{-}\tptl$}
\label{app:ata-tptl-1}
In this section, we explain the algorithm which converts a  
 $\po$-1-clock ATA $\mathcal{A}$ into a  $1{-}\tptl$ formula 
 $\varphi$ such that $L(\mathcal{A})=L(\varphi)$. 
 \begin{enumerate}
\item {\bf{Step 1}}. Rewrite the transitions 
of the automaton. Each $\delta(s,a)$ can be written in an equivalent form 
as (i) $C_1\vee C_2$ or (ii) $C_1$ or (iii) $C_2$ where 
\begin{itemize}
\item $C_1$ has the form $s \wedge \varphi_1$, where $\varphi_1 \in \Phi( \downarrow s \cup \{a\} \cup X)$,
\item $C_2$ has the form $\varphi_2$, where $\varphi_2 \in \Phi(\downarrow s \cup \{a\} \cup X)$
 \end{itemize}
  In particular, if $s$ is the lowest location in the partial order, 
    then $\varphi_1, \varphi_2 \in \Phi(\{a\}  \cup X)$.  Denote this equivalent form 
    by $\delta'(s,a)$. 
    
  For the example above, we obtain 
 $\delta'(s_0,a)=(s_0 \wedge (a \wedge x.s_a)) \vee (a \wedge s_{\ell}), \delta'(s_0,b)=s_0 \wedge b$,
 $\delta'(s_a,a)=(s_a \wedge x<1) \vee (x>1)$
 $\delta'(s_{\ell})=(s_{\ell} \wedge b)$    
\item {\bf{Step 2}}. For each location $s$, construct $\Delta(s)$ which 
combines $\delta'(s,a)$ for all $a \in \Sigma$, by disjuncting them first, and again 
putting them in the  form in step 1.   
Thus, we obtain $\Delta(s)=\bigvee_a \delta'(s,a)$
which can be written as a disjunction $D_1\vee D_2$ or simply $D_1$ or simply $D_2$ where 
 $D_1,D_2$ have the forms $s \wedge  \varphi_1$ and $\varphi_2$ respectively, where  $\varphi_1, \varphi_2  \in \Phi( \downarrow s \cup \Sigma \cup X)$. 
 
 For the example above, we obtain 
 $\Delta(s_0)=(s_0 \wedge [(a \wedge x.s_a)) \vee b]) \vee (a \wedge s_{\ell})$, 
 $\Delta(s_a)=(s_a \wedge x<1) \vee (x>1)$
 $\Delta(s_{\ell})=s_{\ell} \wedge b$.
 
\item {\bf{Step 3}}. We now convert each $\Delta(s)$ into a normal form 
 $\N(s)$. $\N(s)$ is obtained  from $\Delta(s)$ as follows.
 \begin{itemize}
 \item If $s$ occurs in $\Delta(s)$, replace it  with $\nex s$.
 \item Replace each $s'$ occurring in each $\Phi_i(\downarrow s)$ with 
 $\nex s'$. 
 \end{itemize}
 Let $\N(s)=\N_1\vee \N_2$, where 
 $\N_1, \N_2$ are normal forms.  Intuitively, the states appearing on the right side of each transition 
 are those which are taken up in the next step. The normal form 
 explicitely does this, and takes us a step closer to 
 $1{-}\tptl$.
 
Continuing with the example, we obtain 
$\N(s_0)=(\nex s_0 \wedge [(a \wedge x.\nex s_a)) \vee b]) \vee (a \wedge \nex s_{\ell})$
 $\N(s_a)=(\nex s_a \wedge x<1) \vee (x>1)$
 $\N(s_{\ell})=\nex s_{\ell} \wedge b$.
 \item {\bf{Step 4}}.
\begin{itemize}
\item Start with the state $s_n$ which is 
the lowest in the partial order. 

Let 
$\N(s_n)=(\nex s_n \wedge \varphi_1) \vee \varphi_2$, 
  where $\varphi_1, \varphi_2 \in \Phi(\Sigma, X)$. 
Solving $\N(s_n)$, one obtains the solution 
$\Beh(s_n)$ as $\varphi_1 \weaku \varphi_2$   
if $s_n$ is an accepting location, and as 
$\varphi_1 \wU \varphi_2$ if $s_n$ is non-accepting. 
Intuitively, $\Beh(s_n)$ is the behaviour of $s_n$: that is, 
it describes the timed words that are accepted when we start in $s_n$.
In the running example, we obtain 
$\Beh(s_{\ell})=b \weaku \bot=\wB b$ and 
$\Beh(s_a)=(x <1) \wU  x>1$.

\item Consider now some $\N(s_i)=(\nex s_i \wedge \varphi_1) \vee \varphi_2$. 
First replace each $s'$ in $\varphi_i$ with $\Beh(s')$.
$\Beh(s_i)$ is then obtained as 
 $\Beh(\varphi_1) \weaku \Beh(\varphi_2)$   
if $s_i$ is an accepting location, and as 
$\Beh(\varphi_1) \wU \Beh(\varphi_2)$ if $s_i$ is non-accepting. 

Substituting $\Beh(s_a)$ and $\Beh(s_{\ell})$ in $\N(s_0)$, we obtain \\
$(\nex s_0 \wedge [(a \wedge x.\nex \Beh(s_a)) \vee b])	 \vee (a \wedge \nex \Beh(s_{\ell}))$, solving which, we get \\ 
$\Beh(s_0)=[(a \wedge x.\nex \Beh(s_a)) \vee b]  \weaku (a \wedge \nex \Beh(s_{\ell}))$. 

\item Thus, $\Beh(s_0)$ which represents all timed words which are accepted when we start at $s_0$ is given by 
$((a \wedge (x.\nex [(x <1) \wU  x>1])) \vee b) \weaku (a \wedge \nex \wB b)$. 
 The $1{-}\tptl$ formula equivalent to $L(\mathcal{A})$ is 
then given by $\Beh(s_0)$.    
\end{itemize}
\end{enumerate}

\subsection{Correctness  of Construction}
The above algorithm is correct; that is, the 1${-}\tptl$ formula 
$\Beh(s_0)$ indeed  captures the language accepted by the $\po$-1-clock ATA.

 For the proof of correctness, we define a 1-clock ATA with a $\tptl$ look ahead. That is, 
 $\delta: S \times \Sigma \rightarrow \Phi(S \cup X \cup \chi(\Sigma \cup \{x\}))$, where $\chi(\Sigma \cup \{x\})$ is a $\tptl$ formula over alphabet
  $\Sigma$ and clock variable $x$. We allow open $\tptl$ formulae for look ahead; that is, 
  one which is not of the form $x. \varphi$. All the freeze quantifications $x.$ lie within $\varphi$. 
   The extension now allows to take a transition  $(s,\nu) \rightarrow [\kappa \wedge \psi(x)]$, where $\psi(x)$ is a $\tptl$  formula,
    if and only if the suffix of the input word with value of $x$ being $\nu$ satisfies $\psi(x)$. We induct on the level of the partial order on the states.

Base Case: Let the level of the partial order be zero. 
Consider 1-clock ATA having only one location $s_0$. 
 Let the transition function be $\delta (s_0,a)= \mathcal{B}_a (\psi_a(x), X, s_0)$ for every $a \in \Sigma$. 
 By our construction, we reduce $s_0$ into  $\Delta(s_0) = \bigvee \limits_{a \in \Sigma} [\mathcal{B}_a (\psi_a(x), X, \nex(s_0) )]$. 
 Let $\Delta(s_0) = \bigvee (P_i \wedge \psi_i(x) \wedge X_i \wedge \nex s_0) \vee \bigvee (Q_j \wedge \psi_j(x) \wedge X_j)$. 
   $\delta(s_0,a) = s_0 \wedge X_1 \wedge \psi_1(x)$  specifies that the clock constraints $X_1$ are satisfied and the suffix satisfies the formula $\psi_1(x)$ on reading an $a$. Thus for this $\delta(s_0,a)$, we have $\nex s_0 \wedge X_1 \wedge \psi_1 (x) \wedge a$ as a corresponding disjunct in $\Delta$ which specifies the same constraints on the word from the current point onwards. Thus the solution to the above will be satisfied at a point with some $x = \nu$ if and only if there is an accepting run from $s_0$ to the final configuration with $x = \nu$.

If the $s_0$ is a final location, the solution to this is $\varphi = \bigvee (P_i \wedge \psi_i(x) \wedge X_i \wedge \nex s_0) \weaku \bigvee (Q_j \wedge \psi_j(x) \wedge X_j)$. If it is non-final, then it would be $\until$ instead of $\weaku$. Note that this implies that whenever $s_0$ is invoked with value of $x$ being $\nu$, the above formula would be true with $x = \nu$ thus getting an equivalent $1{-}\tptl$ formulae. 

Assume that for automata with $n-1$ levels in the partial order, we can construct an equivalent $1-\tptl$ formula as per our construction.
Consider an automaton with $n$ levels. Consider all the locations at the lowest level (that is, those location that can  only  call itself), $s_0,\ldots,s_k$. 
Apply the same construction. As explained above, the constructed formulae, while eliminating a location will be true at a point if and only if there is an accepting run starting from the corresponding location with the same clock value. Let the formula obtained for any $s_i$ be $\varphi_i$.

The occurrence of an $s_i$ in any $\Delta(s_{i<n})$ can be substituted with $\varphi_i$ as a look ahead. This gives us an $n-1$ level 1-clock ATA with 
$ \mathsf{ TPTL}$  look ahead. By  induction, we obtain that every 1-clock $\po$-ATA can be reduced to $1{-}\tptl$ formulae.

\section{Proof of Lemma \ref{lem:sf-1tptl}}
\label{app:sf-1tptl}
				\begin{proof}
						Let $\rho$ be a timed word  such that $\rho,i \models \reg_{I}\re$.   $\re$ can be either a simple star-free  expression 
						over $\Sigma$, or can be of the form  $\reg_{I'}\re'$ or $\reg_{I_1}\re_1+\reg_{I_2}\re_2$
						or $\reg_{I_1}\re_1.\reg_{I_2}\re_2$ or $(\reg_{I'}\re')^*$. Recursively, each of $\re'$, $\re_1, \re_2$ 
						also can be expanded out as above. The idea of the proof is to eliminate ``all levels'' of $\reg_I$ 
						starting from the inner most one, by replacing them
						 with $1{-}\tptl$ formulae using structural induction.  We first explain the proof in the case of $\reg_{I}\re$, 
						 where $\re$ is a star-free expression over $\Sigma$, and then look at general cases.
						 
We first consider the case when all intervals are bounded.

\begin{enumerate}
\item 	Given $\reg_{I}\re$, and a point $i$ in a word $\rho$, 
$\reg_I$ checks $\re$ at all points $j$ in $\rho$ 
		such that $\tau_j-\tau_i \in I$.
We first eliminate the interval $I$ from $\reg_{I}\re$
by imagining a witness variable 
		$w_I$ that evaluates to true at all points $j$ of $\rho$ such that 
		$\tau_j-\tau_i \in I$. $w_I$ is used to cover all points 
		distant $I$ from $i$.
		\item We eliminate the interval $I$ in $\reg_I$ 
 by rewriting  $\reg_I \re$ as 
  $\reg( (\neg w_I)^*.\re.(\neg w_I)^*)$, which, 
   when asserted at a point $i$,  checks the truth of the expression
   $(\neg w_I)^*.\re.(\neg w_I)^*$ in the suffix from $i$.
   If $w_I$ indeed captures $\tau_j-\tau_i \in I$, then indeed we are checking $\re$ in the interval $I$.  
Let $\varphi_{\re}$ be an LTL formula that is equivalent to $\re$. This is possible since $L(\re)$ is a star-free language. 
\item We next replace $w_I$ by using a freeze clock variable $x$ which checks $x \in I$ 
whenever we assert $w_I$. 
\begin{enumerate}						 						
\item We will look at the simplest case when $\re$ is a regular expression over $\Sigma$. 	
Let $\varphi_{\re}$ be the LTL formula equivalent to $\re$. 					 						
\begin{enumerate}
\item  Let $\re=a$ for $a \in \Sigma$. 
We expand the alphabet by allowing proposition $x \in I$ (and its negation $x \notin I$). 
The formula 
$\psi=((x \notin I) \until \{\varphi_{\re} \wedge [(x\in I) \wedge   (\nex(\Box (x \notin I)))]\})$
is then an LTL formula that says that there is a single point in the region $I$, and $a$ holds at that point. 
Then $\psi_{\re}=x. \psi$ is a 1-$\tptl$ formula that captures $\regm_Ia$.
\item Let $\re=a.b$ for $b \in \Sigma$. Then we inductively assume 
LTL formulae $\varphi_a$ and $\varphi_b$ that capture $a$ and $b$. 
As above, we allow the proposition $x \in I$. 
 Then the formula
$\psi_{\re}=x.\{(x \notin I) \until [(x \in I \wedge \varphi_a \wedge \nex(x \in I \wedge \varphi_b \wedge \nex(\Box (x \notin I))))]\}$
asserts the existence of two points in the interval $I$ respectively satisfying 
in order, $a$ and $b$. This argument can be extended to work for any finite concatenation 
$\re=a_1.a_2.\dots.a_n$. 
\item Let $\re=(\re_1)^*$ be rational expression over $\Sigma$. Let $\varphi_{\re_1}$ 
 be the LTL formula equivalent to $\re_1$. Then the formula 
 $x.[(x \notin I) \until( (x \in I) \wedge \Box[(x \in I) \rightarrow \varphi_{\re_1}])]$  
is a 1-$\tptl$ formula that asserts the formula $\varphi_{\re_1}$ at all points in $I$. 
 \item Let $\re=\re_1+\re_2$ be a rational expression over $\Sigma$. 
Let $\varphi_{\re_1}$ and $\varphi_{\re_2}$ be 
LTL formulae equivalent to $\re_1, \re_2$. Then 
$((x \notin I) \until \{(\varphi_{\re_1} \vee \varphi_{\re_2})\wedge [(x\in I) \wedge   (\nex(\Box (x \notin I)))]\})$
is then an LTL formula that says that there is a single point in the region $I$, and 
one of $\varphi_{\re_1}, \varphi_{\re_2}$ holds
 at that point. 
	\end{enumerate}

%
%
%we obtain the formula $\psi_{\re}=\Box(x \in I \rightarrow \varphi_{\re})$. 
%%we conjunct each symbol in  $\varphi_{\re}$ with $x \in I$. 
%Thus, if $\varphi_{\re}=a \until b$, then we obtain 
%$\psi_{\re}=\Box(x \in I \rightarrow (a \until b))$. 
%%(a \wedge (x \in I)) \until (b \wedge (x\in I))$. 
%Let $\zeta_{\re}$ be the LTL formula $((x \notin I) \until \{\psi_{re} \wedge [(x\in I) \wedge  
%  [(x \in I) \until (\nex(\Box (x \notin I)))]]\})$
%% [\psi_{\re} \wedge ((x \in I) \until (\nex(\Box (x \notin I)))])$. 
%Then $x.\zeta_{\re}$  is a 1-$\tptl$ formula over $\Sigma$ that is equivalent to $\reg_I \re$.

\item Finish the base case of the structural induction, where $\re$ was a rational expression over $\Sigma$, we now  
move on to general cases. 
\begin{enumerate}
\item Let us now consider the case when we have a formula 
$\regm_{I_1}[\regm_{I_2}\re]$. 

Then we first obtain as seen above, 
$x.\zeta_{\re}$ equivalent to $\regm_{I_2}\re$. Let $x.\zeta_{\re}=w$. 
Then, 
$\zeta_{out}=((x \notin I_1) \until \{w \wedge (x \in I_1) \wedge 
\nex (\Box (x \notin I_1))\})$ 
%$\zeta_{out}=((x \notin I_1) \until \{w \wedge [(x \in I_1) \wedge 
%[(x \in I_1) \until (\nex (\Box (x \notin I_1)))]]\})$ 
is 
an LTL formula which asserts the existence of a single point lying in the interval $I_1$ where $w$ is true. 
Then $x.\zeta_{out}$ is a 1-$\tptl$ formula over $\Sigma$ that is equivalent to 
$\regm_{I_1}[\regm_{I_2}\re]$.

\item If we have $\reg_{I}[\regm_{I_1}\re_1. \regm_{I_2}\re_2]$, then let
$w_1=x.\zeta_{\re_1}, w_2=x.\zeta_{\re_2}$. Then the formula
$x.\{(x \notin I) \until [(x \in I \wedge w_1 \wedge \nex(x \in I \wedge w_2 \wedge \nex(\Box (x \notin I))))]\}$
asserts the existence of two points in the interval $I$ respectively satisfying 
in order, $\regm_{I_1}\re_1$ and $\regm_{I_2}\re_2$.

%Let $w$ denote the effect of conjuncting $w_2$ to  the last part 
%of $w_1$: that is, 
%$x.(x \notin I_1) \until (\psi_{\re_1} \wedge [(x \in I_1) \until (\nex (w_2 \wedge (\Box (x \notin I_1))))])$. 
%Then the formula $(x \notin I) \until (w \wedge [(x \in I) \until (\nex  (\Box (x \notin I)))])$
%is a 1-$\tptl$ formula that is equivalent to $\reg_{I}[\regm^{\Ss_1}_{I_1}\re_1. \regm^{\Ss_2}_{I_2}\re_2]$. 

\item If we have $\reg_I[(\regm_{I_1}\re_1)^*]$, then let $w_1=x.\zeta_{\re_1}$.

Let $w=(x \notin I) \until( (x \in I) \wedge \Box[(x \in I) \rightarrow w_1])$.  
Then $x.w$ is a 1-$\tptl$ formula that asserts that at all points in $I$, the formula $\regm_{I_1}\re_1$
 evaluates to true. 

\item If we have $\reg_I(\regm_{I_1}\re_1+\regm_{I_2}\re_2)$, then let 
 $w_1=x.\zeta_{\re_1}$ and $w_2=x.\zeta_{\re_2}$. Then 
 %$x.\{(x \notin I) \until((w_1 \vee w_2) \wedge x \in I \wedge [(x \in I) \until (\nex (\Box(x \notin I)))])\}$  
$x.\{(x \notin I) \until((w_1 \vee w_2) \wedge x \in I \wedge (\nex (\Box(x \notin I))))\}$  
is a 1-$\tptl$ formula that checks that $\regm_{I_1}\re_1$ or $\regm_{I_2}\re_2$ evaluates to true at the single point 
in the interval $I$.

\end{enumerate}
\end{enumerate}
						
Note that the boolean combinations like conjunction, disjunction and unary operations like negation can be handled in a straightforward way, once we are done with the above. While encountering boolean combinations, we simply combine the 1-$\tptl$ formulae obtained 
so far. 						
\end{enumerate}

%		regular expression containing formulae of the form $\reg_{I'}\re'$.  
%As a first step, we introduce an atomic proposition 
%		$w_I$ which evaluates to true at all points $j$ in $\rho$ 
%		such that $\tau_j-\tau_i \in I$. 
%		Then it is easy to see that $\rho, i \models \reg_{I}\re$ iff 
%	$\rho, i \models \reg_I (\re \wedge w_I)$, since 
%$\reg_I$ covers exactly all points which are within the interval $I$ from $i$.   							As the next step, we replace $\re$, with an atomic proposition $w$ obtaining 
%		the formula $\reg_I[w \wedge w_I]$. Assume that $I$ is a bounded interval. $\reg_I[w \wedge w_I]$  
%		is equivalent to $\reg[(w \wedge w_I). (\neg w_I)^*]$, since 
%		$\reg[]$ covers the entire suffix of $\rho$ starting at point $i$. 				Now, replace $w_I$ with the clock constraint $x \in I$, and 
%		rewrite the formula as $x.[(w \wedge (x \in I)).\Box(\neg(x \in I))]$, which 
%		is in $1-\tptl$. Note that this step also preserves equivalence of teh formulae. 
%		Replacing $w$ with $\re$ now  eliminates one level of 
%		the $\reg$ operator in the above formula. Doing the same technique as above 
%		to $\re$ which has the form $\reg_{I'}(\re')$, will eliminate one more level of $\reg$ and so on. 
%		Continuing this process will result in a $1{-}\tptl$ formula which has 
%		$k$ freeze quantifications iff the starting $\sfmtl$ formula had $k$ nestings of the $\reg$ modality. 
		   
	In case $I$ is an unbounded interval, then we need not concatenate 
		$\nex (\Box(x \notin I))$
at the end, since the time stamps of all points in the suffix lie in $I$.  
 The rest of the proof is the same. \qed	\end{proof}

\section{Proof of Lemma \ref{lem:poata-sfmtl}}
\label{app:ata-1tptl}

\noindent{\bf The Main Idea}:
Let $\mathcal{A}$ be a $\po$-1-clock ATA with locations $S=\{s_0,s_1,  \dots,s_n\}$. 
Let $K$ be the maximal constant used in the guards $x \sim c$ occurring in the transitions. 
Let $R_{2i}=[i,i], R_{2i+1}=(i, i+1), 0 \leq i < K$ and $R^+_K=(K, \infty)$ be the regions 
$\mathcal{R}$ of $x$. Let $R_h \prec R_k$ denote that region $R_h$
precedes region $R_k$.
For each location $s$, $\Beh(s)$ as seen above (also Figure \ref{last-eg}) gives the timed behaviour starting at $s$, 
using constraints $x \sim c$ since the point where $x$ was frozen. 
In example \ref{eg1}, $\Beh(s_a){=}(x <1) \wU  (x>1)$, allows symbols $a,b$ as long as $x<1$ 
keeping the control in $s_a$, has no behaviour at $x=1$, and allows 
 control to leave $s_a$ when $x>1$.
 For any $s$, we ``distribute'' $\Beh(s)$ across regions by untiming it. In example \ref{eg1}, 
  $\Beh(s_a)$ is $\wB(a \vee b)$ for regions $R_0, R_1$, it is $\bot$ for $R_2$ and 
  is $(a \vee b)$ for  $R^+_1$.  Given any $\Beh(s)$, and a pair of regions $R_j \preceq R_k$, such that
   $s$ has a non-empty behaviour in region $R_j$, and control leaves $s$ in $R_k$,
    the untimed behaviour of $s$ between regions $R_j, \dots, R_k$ is written as LTL formulae 
   $\varphi_j, \dots, \varphi_k$. This results in a ``behaviour description'' (or $\BD$ for short)
    denoted $\BD(s, R_j,R_k)$ : this is a $2K+1$ tuple with $\BD[R_l]=\varphi_l$
    for $j \leq l \leq k$, and $\BD[R]=\top$ denoting ``dont care'' for the other regions. 
    Each LTL formula $\BD(s, R_j,R_k)[R_i]$ (or $\BD[R_i]$ when $s, R_j, R_k$ are clear)
        is replaced with a star-free rational expression
    denoted $\re(\BD(s, R_j,R_k)[R_i])$. Then   $\BD(s, R_j,R_k)$ 
    is transformed into a $\sfmtl$ formula $\varphi(s,R_j,R_k)= \bigwedge_{j \leq g \leq k}\reg_{R_g}\re(\BD(s, R_j, R_k)[R_g])$. 
The language accepted by the $\po$-1-clock ATA $\mathcal{A}$ is then given by  
$\bigvee_{0 \leq j \leq  k \leq 2K} \varphi(s_0,R_j, R_k)$ where $s_0$ is the initial location, and 
the word is accepted while in region $R_k$.   This disjunction allows all possible accepting behaviours 
from the initial location $s_0$.

Each location $s$ is associated with a set of $\BD$s.  Let $\BDset(s)$ denote the of $\BD$s that are associated with 
$s$. If $s$ is the lowest location in the partial order, then $\BDset(s)=\{\BD(s, R_i, R_j) \mid R_i \preceq R_j\}$.

\noindent{\bf Computing $\BD(s,R_i, R_j)$ for a location $s$ and pair of regions $R_i \preceq R_j$}. 
The proof proceeds by first computing $\BD(s,R_i, R_j)$ for locations $s$ which are lowest in the partial order, followed 
by computing $\BD(s',R_i, R_j)$ for locations $s'$ which are higher in the order.   
For any location $s$, $\Beh(s)$ has the form $\varphi_1 \weaku \varphi_2$ 
or $\varphi_1 \wU \varphi_2$, or $\varphi$, where $\varphi, \varphi_1, \varphi_2$ 
are disjunctions of conjunctions over $\Phi(S \cup \Sigma \cup X)$, where $S$ is the set of locations 
with or without the binding construct $x.$, and $X$ is a set of clock constraints 
of the form $x \sim c$. Each conjunct 
has the form $\psi \wedge x \in R$  where $\psi \in \Phi(\Sigma \cup S)$ and $R \in \mathcal{R}$. 
Let $\varphi_1=\bigvee(P_i \wedge C_i), \varphi_2=\bigvee(Q_j \wedge E_j)$ where $P_i,Q_j \in \Phi(\Sigma \cup S)$ and $C_i,E_j \in \mathcal{R}$.
Let $\mathcal{C}$ and $\mathcal{E}$ be a shorthand notation to represent any $C_k, E_l$.  

For $R_i \preceq R_j$, and a location $s$, $\BD(s,R_i,R_j)$ is empty if $\Beh(s)$ has no constraint $x \in R_i$ occurring in $\mathcal{C}, \mathcal{E}$, and 
if control cannot exit $s$ in $R_j$. 
 If $\Beh(s)$ has no $\until, \mathsf{W}$ modalities, then 
$\BD(s, R_i, R_i)$ is computed when $\Beh(s)=\bigvee(Q_j \wedge E_j)$ and there is some $E_l$ with 
$x \in R_i$. In this case, $\BD(s, R_i, R_i)[R_i]=Q_l$, and the remaining entries are $\top$ representing ``dont care''.
If $\Beh(s)$ has $\until, \mathsf{W}$ modalities, 
then $\BD(s,R_i,R_j)$ is computed when 
(1) there is a  constraint $x \in R_i$ in $\mathcal{C}$ or $\mathcal{E}$ (this allows us to start observing the behaviour in region $R_i$) 
(2) there is a  constraint $x \in R_j$ in some $\mathcal{E}$ (this allows us to exit the control 
 location $s$ while in region $R_j$).  
 If so,  the $\BD$ $\Beh(s,R_i,R_j)$ is a $2K+1$ tuple with 
  (i) formula $\top$ in regions $R_0, \dots, R_{i-1},R_{j+1}, \dots, R_{K}^+$ (denoting dont care),
    (ii)If $C_k=E_l=(x \in R_j)$ for some $C_k, E_l$, then the LTL formula in region $R_j$ is $P_k \until Q_l$ if $s$ is not an accepting location, 
  and is $P_k \weaku Q_l$ if $s$ is an accepting location, (iii)If no $C_k$ is equal to any $E_l$ for any $k,l$, and  
  if $E_l=(x \in R_j)$ for some $l$,   then the formula in region $R_j$ is $Q_l$. If $C_m=(x \in R_i)$ for some $m$, then 
  the formula for region $R_i$ is $\wB P_m$. If there is some $C_h=(x \in R_w)$ for $i < w < j$, then 
  the formula in region $R_w$ is $\wB P_h \vee \epsilon$, where $\epsilon$ signifies the fact that 
  there may be no points in regions $R_w$. If there are no $C_m$'s such that $C_m=(x \in R_w)$  
for $R_i \prec R_w \prec R_j$, then the formulae in region $R_w$ is $\epsilon$. 
We allow $\epsilon$ as a special symbol in LTL to signify that there is no behaviour 
in a region.

 \noindent{\bf $\BD(s,R_i, R_j)$ for location $s$ lowest in po}.  Let $s$ be a location that is lowest in the partial order. 
 The locations $s_{\ell},s_a$ in Example \ref{eg1} are lowest in the partial order, and 
$\Beh(s_{\ell})=b \weaku \bot=\wB b$, $\Beh(s_a){=}[(a \vee b) \wedge (x<1)]\wU [(a \vee b) \wedge (x>1)] $.
  In general, if $s$ is the lowest in the partial order, then  
   $\Beh(s)$ has the form $\varphi_1 \weaku \varphi_2$ 
or $\varphi_1 \wU \varphi_2$, or $\varphi$, where $\varphi, \varphi_1, \varphi_2$ 
are disjunctions of conjunctions over $\Phi(\Sigma \cup  X)$. Each conjunct 
has the form $\psi \wedge x \in R$  where $\psi \in \Phi(\Sigma)$ and $R \in \mathcal{R}$. 
 In example \ref{eg1}, the regions are $R_0=[0,0], R_1=(0,1), R_2=[1,1], R^+_1=(1, \infty)$.
$\Beh(s_{\ell}, R_1, R^+_1)=(\top, \wB b,\wB b \vee \epsilon,b \weaku \bot)$, and
  $\Beh(s_a, R_0, R^+_1)=(\wB(a \vee b), \wB(a \vee b)\vee \epsilon, \epsilon, (a \vee b))$. 
  If $\epsilon$ in the sole entry in a region, it represents that there is no behaviour in that region.
   If $\epsilon$ is a disjunct $\psi \vee \epsilon$, then it represents a possibility of no behaviour, or a behaviour $\psi$.

Using the $\BD$s of $s_a$, we can write the $\sfmtl$ formula that describes the behaviour of $s_a$.
This fomula is given by $\psi(s_a)=\varphi_{R_0}(s_a) \wedge  \varphi_{R_1}(s_a)
 \wedge \varphi_{R_2}(s_a) \wedge \varphi_{R^+_1}(s_a)$, where 
 each $\varphi_{R_i}$ 
 describes the behaviour starting from region $R_i$, while in location $s_a$. 
For a fixed region $R_i$, $\varphi_{R_i}(s_a)$ is  
$\bigwedge_{R_g \prec R_i} \reg_{R_g} \epsilon \wedge \reg_{R_i}\Sigma^+ \rightarrow
%\{\bigvee_{R_i \prec R_j, \Beh(s_a, R_i, R_j)}[\bigwedge_{R_i \preceq R_w \preceq R_j} \re(\Beh(s_a, R_i, R_j)[R_w] \}$. 
\{\bigvee_{R_i \prec R_j} \varphi(s_a, R_i, R_j)\}$, where 
$\varphi(s_a, R_i, R_j)$ is described above. $\reg_{R_g} \epsilon$ represents that there is no behaviour 
in $R_g$.  Recall that $\varphi(s_a, R_i, R_j)$ describes a possible behaviour of $s_a$ that starts at $R_i$ and ends in $R_j$.  
For instance,
 $\varphi_{R_0}(s_a)$ is
$\reg_{R_0} \Sigma^+ \rightarrow
\{(\reg_{R_0}(a+b)^* \wedge \reg_{R_1}[(a+b)^*+\epsilon] \wedge \reg_{R_2}\epsilon \wedge \reg_{R^+_1}(a+b)^*)\}$
while 
 $\varphi_{R_1}(s_a)$ is
$\reg_{R_0} \epsilon \wedge \reg_{R_1}\Sigma^+ \rightarrow
\{(\reg_{R_1}(a+b)^* \wedge \reg_{R_2}\epsilon \wedge \reg_{R^+_1}(a+b)^*)\}$. 
Similarly,   $\varphi_{R_2}(s_a)$ is empty since $s_a$ has no behaviour in $R_2$.
Finally,  $\varphi_{R^+_1}(s_a)$ is 
$\bigwedge_{R_g \prec R^+_1} \reg_{R_g} \epsilon \wedge \reg_{R^+_1}\Sigma^+ \rightarrow
\reg_{R^+_1}(a+b)^*$. In a similar manner, we can write the $\sfmtl$ formula $\psi_{s_{\ell}}$ that describes the behaviour 
of $s_{\ell}$ across regions.

%  \begin{enumerate}
%  \item 	
 \noindent{\bf $\BD(s,R_i, R_j)$ for a location $s$ which is higher up}. 
 If $s$ is not the lowest in the partial order, then $\Beh(s)$ 
 has locations  $s' \in \downarrow s$. $s'$ occurs as $\nex(s')$ or  $x.\nex(s')$ in 
 $\Beh(s)$.  We now elaborate the operations needed to combine $\BD$s. 

\noindent{\bf Boolean Combinations of $\BD$s.}
Let $s_1, s_2$ be two locations of the $\po$-1-clock ATA $\mathcal{A}$. 
Assume $\Beh(s_1)=\varphi_1 \wU \varphi_2$ or  
$\varphi_1 \weaku \varphi_2$ and 
$\Beh(s_2)=\psi_1 \wU \psi_2$ or  
$\psi_1 \weaku \psi_2$. 
We have already seen how to handle $x. \nex \Beh(s_1)$
or $x. \nex \Beh(s_2)$. So let us assume $s_1, s_2$ appear in $\Beh(s)$ as 
$\nex \Beh(s_1)$ and $\nex \Beh(s_2)$.

Consider $\BDset(s_1)$ and $\BDset(s_2)$, and consider any pair 
of $\BD$s, say  $\BD(s_1, R_i, R_j)$ and $\BD(s_2, R_i, R_j)$
from these respectively. The boolean operations are defined 
for each pair taken from $\BDset(s_1)$ and $\BDset(s_2)$. 
 
Take $\BD(s_1, R_i, R_j)$ and $\BD(s_2, R_i, R_k)$ respectively from 
 $\BDset(s_1)$ and $\BDset(s_2)$.
   We now define boolean operations $\wedge$ and $\vee$ on these $\BD$s.\\
%   such that $\Beh(\varphi_1,R) \wedge \Beh(\varphi_2,R) =
%  \Beh(\varphi_1 \wedge \varphi_2,R)$. 
The  $\BDset$ for $s_1 \wedge s_2$: 
   Consider $\BD_1=\BD(s_1,R_i,R_j)$ and $\BD_2=\BD(s_2,R_i,R_k)$, both which describe behaviours of $s_1, s_2$ starting in region $R_i$.    
   Assume $R_j \prec R_k$ (the case of $R_k \prec R_j$ is similar).   To obtain a 
    $\BD$ conjuncting these two,  starting  in region $R_i$, we do the following.
   Construct $\BD'$ by conjuncting 
   the entries of $\BD_1, \BD_2$ component wise. 
    This will ensure that we take the possible behaviour of $\Beh(s_1)$ at region $R_i$ and conjunct it with the possible behaviour of $\Beh(s_2)$ in the same region. $BD' \in \BDset(s_1 \wedge s_2)$.   
     In a similar way, we can also compute the $\BDset(s_1 \vee s_2)$.
  
\paragraph*{Elimination of $\nex \Beh(s')$ from $\BD(s, R_i, R_j)$} Given any $\Beh(s)$ of the form $[\bigvee_i(P_i \wedge C_i)] \wU [\bigvee_j(Q_j \wedge E_j)]$ or $[\bigvee_i(P_i \wedge C_i)] \weaku [\bigvee_j(Q_j \wedge E_j)]$
  with $P_i, Q_j \in \Phi(\Sigma \cup S)$, and $C_i, E_j$ are clock constraints of the form $x \in R$. 
   Assume that we have calculated  $\BD(s',R, R')$  for all $s' \in \downarrow s$ and all regions $R,R'$. 
       There might be some propositions of the form $\nx \Beh(s')$ as a conjunct in some entries of 
       $\BD(s,R_i, R_j)$.      This occurrence of $\nx \Beh(s')$ is eliminated by ``stitching'' 
       the behaviour of $s'$ with  $\BD(s,R_i, R_j)$ as follows:
  \begin{itemize}
    \item We consider three cases here, depending on how $\nex \Beh(s')$ occurs in $\BD_1=\BD(s, R_i, R_j)$. 
    As a first case, let $\BD_1=(X_0,\ldots, X_{g-1}, Q_j \wedge \nx(\Beh(s')), X_{g+1},\ldots, X_{2K})$.
  \begin{enumerate}
    \item  To eliminate 
    $\nex \Beh(s')$ from $\BD_1$, we first recall that $s' \in \downarrow s$ and that $\BD(s', R_k, R_l)$ has been computed 
    for all regions $R_k, R_l$. $\Beh(s)$ will not occur in any of these $\BD$s corresponding to $s'$. 
  \item The first thing to check is which region ($R_g$ or later) where the next point can be enabled, based on the behaviour
    of $s'$.  There are $2K-g+1$ possibilities, depending on which region $\ge g$ the next point lies with respect to  $Q_j \wedge \nx(\Beh(s'))$.
 \begin{itemize}
    \item Suppose the next point can be taken in $R_g$ itself. This means that from the next point, all the possible behaviours described by 
    any of the $\BD$'s $\BD(s',R_g, R_h)$ will  
  apply along with $\BD_1$. 
%  Thus, we first take a cross product  $\BD \times \Beh(F(T_j), R_g)$ which will give us pairs of sequences
%     of the form $[X_0,\ldots, X_{g-1}, Q_j \wedge \nx(T_j), X_{g+1},\ldots, X_{2K}],[Y_0,\ldots, Y_{2K}]$. 
    We define a binary operation $\combine$ which combines two $\BD$s,   $\BD_1=\BD(s, R_i, R_j)$ and 
    $\BD_2=\BD(s',R_g, R_h)$, producing a new $\BD$, $\BD_3=\combine(\BD_1, \BD_2)$. 
          To combine the behaviours from the point where $\Beh(s')$ is encountered, we substitute 
    $\nex \Beh(s')$ with the $\ltl$ formula asserted at region $R_g$ in $\BD_2$. If $\BD_2[R_g]$ represents 
    the $g$th component, then we  replace $\nex \Beh(s')$ in $\BD_1[R_g]$ with $\BD_2[R_g]$. 
    Thus, $\BD_3[R_g]=Q_j \wedge \BD_2[R_g]$. 
    For all $R_w \prec R_g$,  $\BD_3[R_w]=\BD_1[R_w]$. 
  For all $R_w$ such that $R_g \prec R_w$,  $\BD_3[R_w]=\BD_1[R_w] \wedge \BD_2[R_w]$. 
     
         \item Now consider the case when the next point is taken a region $>R_{g}$.
       The next point can occur in region $R_{g+1}$ or higher.  Let
    $b \in \{g+1,\ldots,2K\}$, and assume that the next point where $\Beh(s')$ has a behaviour is in $R_b$. 
    Then given $\BD_1=\BD(s, R_i, R_j)$ and $\BD_2=\BD(s', R_g, R_h)$ such that 
    $\BD_2[g+1], \dots, \BD_2[b-1]=\epsilon$, we obtain $\BD_3$ as follows.
     For all $R_w \prec R_g$, $\BD_3[R_w]= \BD_1[R_w]$. For $w=g$, $\BD_3[R_g]= Q_j \wedge \Box \bot$. The conjunction 
     with $\Box \bot$ signifies that the next point in $R_g$ is not available for $s'$, since $s'$ has no behaviour 
     in $R_g$.      For all $b>w>g$, $\BD_3[R_w] = \BD_1[R_w] \wedge \epsilon=\epsilon$. 
    This implies the next point from where the assertion $Q_j \wedge \nx(\Beh(s'))$ was made is in a region $\geq R_b$.
     For all $w\ge b$, $\BD_3[R_w] = \BD_1[R_w] \wedge \BD_2[R_w]$. 
     This combines the assertions of both the behaviours from the next point onwards.
\end{itemize}
  \end{enumerate}
  
    \item As a second case, consider $\BD_1=[X_0,\ldots, X_{g-1}, \wB (P_j \wedge \nx(\Beh(s'))), X_{g+1},\ldots, X_{2K}]$. Elimination 
    of $\nex \Beh(s')$ in this case is similar to case 1. 
   \item  As the third case, let $\BD_1 {=} [X_0,\ldots, X_{g-1}, P_i {\wedge} \nx(\Beh(s_1)) {\wU} Q_j {\wedge} \nx(\Beh(s_2)), 
    X_{g+1},\ldots, X_{2K}] $, and we have to eliminate both $\nx \Beh(s_1)$ and $\nx \Beh(s_2)$. 
   Either $Q_j \wedge \nx(\Beh(s_2))$ is true at the present point or, 
  $P_i \wedge \nx(\Beh(s_1))$ is true until some point in the future within the region $R_g$, at which point, $Q_j \wedge \nx(\Beh(s_2))$ 
  becomes true.    Thus, $\BD_1$ can be replaced with two $\BD$s 
  \begin{itemize}
  \item     $\BD'_1 {=} [X_0,\ldots, X_{g-1}, Q_j {\wedge} \nx(\Beh(s_2)), X_{g+1},\ldots, X_{2K}]$, and 
   \item  $\BD''_1 {=} [X_0,\ldots, X_{g-1}$, $P_i {\wedge} \nx(\Beh(s_1)) {\until} Q_j {\wedge} \nx(\Beh(s_2)), X_{g+1}, \ldots, X_{2K}]$. 
  \end{itemize}
Elimination of $\nex \Beh(s_2)$ is done from $\BD'_1$ as seen in case 1. 
   Consider $\BD''_1$ which guarantees that the next point from which the assertion
   $P_i \wedge \nx(\Beh(s_1)) \until Q_j \wedge \nx(\Beh(s_2))$ is made is within region $R_g$, 
   and that $\nx(\Beh(s_1))$ is called for the last time within $R_g$. 
    $\BD''_1$  has to be combined with any $\BD(s_1,R_g, R_h)$, 
    which has a starting behaviour of $s_1$ from region $R_g$.  
    $s_2$ can have an enabled transition  from any point either within region $R_g$ or a succeeding region. 
  \begin{itemize}
  \item Consider the case where $s_2$ has an enabled transition from within the region $R_g$. 
  In this case, we have to combine  $\BD''_1$ with  
   some  $\BD_3=\BD(s_1, R_g, R_h)$ and  
  with some  $\BD_4=\BD(s_2, R_g, R_j)$. 
  Let $\BD_3=(Y_{0},\ldots, Y_{2K})$ and let $\BD_4=(Z_0, \dots, Z_{2K})$.
 We now show to combine $\BD''_1, \BD_3$ and $\BD_4$ 
 obtaining a $\BD$ $(A_0,\ldots, A_{2K})$.
\begin{itemize}
\item   For every $w <g$, $A_w = X_w$. 
For $w = g$, $A_g$ is obtaining  by replacing $\nex \Beh(s_1)$ with $Y_g$  and $\nex \Beh(s_2)$ with $Z_g$ 
For all $w>g$, $A_w = X_w \wedge Y_w \wedge Z_w$. 
 \end{itemize}
 \item Now  consider the case where $s_2$ has an enabled transition from a region $R_b$ such that $R_g \prec R_b$. 
 In this case, $A_w=X_w$  for $w<g$. The main difference with the earlier case is that we have to assert that from the last point in $R_g$, the next point only occurs in the region $R_b$. Thus all the regions between $R_g$ and $R_b$ should be $\epsilon$ 
 in $(A_0,\ldots, A_{2K})$. That is, $A_w=\epsilon$ for $g < w<b$. 
 For $w=g$,  $A_g=(P_i \wedge  Y_{g}) \until (Q_j \wedge \Box \bot)$, where $P_i, Q_j$ are obtained from $\BD''_1[R_g]$. 
 Here again, conjuncting $\Box \bot$ with $Q_j$ signifies that the next point 
 is not enabled for $s_2$. Finally, for $w \geq b$, $A_w=X_w \wedge Y_w \wedge Z_w$. 
% 
% 
% 
% 
% 
% 
% 
% Take a cross product of  $\BD$ with $\Beh(F(U_i),R_g) {\times} \Beh(F(T_j),R_b)$.  This 
% gives us triplets of the form 
% $[X_0,\ldots, X_{g-1}, Q_j {\wedge} \nx(T_j), X_{g+1},\ldots, X_{2K}]$, 
% $[Y_{U,0},\ldots, Y_{U,2K}], [Y_{T,0},\ldots, Y_{T,2K}]$. The one difference in combining this triplet 
% as compared to the last one is that we have to assert that from the last point in $R_g$, the next point only occurs in the region $R_b$.  
% Thus all the regions between $R_g$ and $R_b$ should be conjuncted with $\wB \bot$.
%  We get a sequence  $[X'_0,\ldots, X'_{2K}]$ after combining,  such that 
%  \begin{itemize}
%  \item   For all $w<g$, $X'_w = X_w$. 
%  \item For $w=g$,  $X'_g=(P_i \wedge \nx Y_{U,g}) \until Q_j$. 
%  \item For $b>w>g$, $X'_w = X_w \wedge Y_{U,w} \wedge \wB \bot$. 
%  \item For $w \ge b$, $X'_w = X_w \wedge Y_{U,g} \wedge Y_{T,g}$.
%\end{itemize}
 
%   \item In case of formulae of the form  $[P_i \wedge \nx(U_i)] \weaku [Q_j \wedge \nx(T_j)]$ in $R_g$, 
%   we convert it into  $(\alpha_1 \until \alpha_2) \vee \wB \alpha_1$ where $\alpha_1=(P_i {\wedge} \nx(U_i))$ and $\alpha_2=(Q_j {\wedge} \nx(T_j))$. Then 
%   $\BD$ is $[X_0,\ldots, X_{g-1}, [\alpha_1 {\weaku} \alpha_2], X_{g+1},\ldots, X_{2K}]$
%   and  can be replaced by 2 $\BD$s
%   \begin{itemize}
%   \item    $\BD_1 {=} [X_0,\ldots, X_{g-1}, \alpha_1 {\wU} \alpha_2, X_{g+1},\ldots, X_{2K}]$
%     \item $\BD_2 {=} [X_0,\ldots, X_{g-1}, \wB(\alpha_1), X_{g+1},\ldots, X_{2K}]$. 
%    \end{itemize} 
%     For $\BD_1$ and $\BD_2$,  
%     we apply the operations defined previously. 
\end{itemize}
\end{itemize}
Note that elimination of $\nex Beh(s')$ from any $\BD$ in the set $\BDset(s)$ 
results in stitching some $\BD$ from $\BDset(s')$ to certain elements 
of $\BDset(s)$. At the end of the stitching, we obtain $\BDset(s)$ such that 
in each $\BD$ of $\BDset(s)$, $\nex \Beh(s')$  has been replaced.  

\paragraph*{Obtaining $\sfmtl$ Formulae}   
 Finally, we show that given a $\BD$ for $\Beh(s)$, we can construct an $\sfmtl$ formula, $\psi_s$, equivalent to $x.\nx(s)$.  
  That is, $\rho,i \models \psi_s$ if and only if $\rho,i,\nu \models x.\nx(\Beh(s))$, for any $\nu$. 
  Recall that $\Beh(s)$ is a 1-$\tptl$ formula, as computed in lemma \ref{aut-tptl-1}. 
  We give a constructive proof as follows:

    Assume $\rho,i,\nu \models x.\nx(\Beh(s))$. Note that according to the syntax of $\mathsf{TPTL}$, every constraint $x \in I$ checks the time elapse between the last point where $x$ was frozen. Thus satisfaction of formulae of the form $x.\phi$ at a point is independent of the clock valuation. 
  $\rho,i,\nu \models x.\nx(\Beh(s))$ iff  $\rho,i, \nu[x \leftarrow \tau_i] \models \nex \Beh(s)$. 
  We have precomputed $\BD(s,R_i, R_j)$ for all regions $R_i \preceq R_j$; and 
  $\BD(s,R_i, R_j)$  is guided by the 1-$\tptl$ formula $\Beh(s)$. The entry in region $R_i$
  of $\BD(s,R_i, R_j)$ depends on the behaviour allowed in region $R_i$ from location $s$;
  likewise, the entry in each region $R_g$ of $\BD(s,R_i, R_j)$ is 
  obtained by looking up $\Beh(s)$. In case $\Beh(s)$ does not admit any behaviour in a region $R_g$, then 
  the $g$th entry in $\BD(s,R_i R_j)$ is $\epsilon$.   
   Thus, $\rho,i,\nu \models x.\nx(\Beh(s))$ iff for all  $w \in{0,\ldots,2K}$, such that 
   $\Beh(s)$ has an allowed behaviour in region $R_w$, 
      $\rho,i+1,\tau_i \models (x \in R_w)$. 
     In addition, we also know that there is some $\BD(s,R_w, R_j)$ such that 
 $\BD[R_k]$ is the $\mathsf{LTL}$ formula that describes the behaviour in region $R_k$ of location $s$.

 % Let $\varphi(s,R_i,R_j)= \bigwedge_{i \leq g \leq j}\reg_{R_g}\re(\Beh(s, R_i, R_j)[R_g])$.
   Note that, $\rho,i+1,\tau_i {\models} (x \in R_w)$ is true, iff, $\rho,i \models \bigwedge \limits_{g \in \{1,\ldots,w-1\}}[\reg_{R_g}\epsilon] \wedge 
    \reg_{R_w}\Sigma^+$.  This is true iff 
         $\rho,i {\models} \bigvee\limits_{\BD=\BD(s,R_w,R_j)} \bigwedge \limits_{k \in \{1,\ldots,2K\}} \reg_{R_k}(\re(\BD[R_k]))$, where $\re(\BD[R_k])$ is a star-free rational expression equivalent to the LTL formula $\BD[R_k]$.

       Thus, $\rho,i,\nu {\models} x.\nx(\Beh(s))$, iff, $\rho,i {\models} (\psi_1 {\rightarrow} \psi_2)$ where  
       \begin{itemize}
       \item   $\psi_1{=}{\bigwedge \limits_{w \in \{0,\ldots,2K\}\setminus E}} \bigwedge \limits_{g \in \{1,\ldots,w-1\}} \reg_{R_g}\epsilon { \wedge}
    {\reg_{R_w}\Sigma^+}$ and 
       \item $\psi_2{=}\bigvee\limits_{\BD=\BD(s,R_w,R_j)} \bigwedge \limits_{k \in \{1,\ldots,2K\}} \reg_{R_k}(\re(\BD[R_k]))$. 
  \end{itemize}  
  where $E$ is the set of regions where $\Beh(s)$ has no behaviour.
    The $\sfmtl$ formula  $\psi_{s_0}$ is one which begins in the initial location $s_0$, stitches 
    the behaviours of locations $s_j$ that appear in a run from $s_0$ such that $L(\psi_{s_0})$ 
    is non-empty iff the language accepted by the $\po$-1-clock ATA $\mathcal{A}$ is non-empty, and 
     $L(\psi_{s_0})=L(\mathcal{A})$.

 Consider  the $\po$-1-clock ATA $\mathcal{A}=(\{a,b\}, \{s_0,s_a,s_{\ell}\}, s_0, \{s_0, s_{\ell}\},\delta)$ with transitions 
 $\delta(s_0,b)=s_0,\delta(s_0,a)=(s_0 \wedge x.s_a) \vee s_{\ell},$
 $\delta(s_a,a)=(s_a \wedge x<1) \vee (x>1)=\delta(s_a,b),$ and 
 $\delta(s_{\ell},b)=s_{\ell}, \delta(s_{\ell},a)=\bot$.       
  
Consider the subset of $L(\mathcal{A})$ consisting of timed words whose first 
symbol occurs at a time $>1$. We write a $\sfmtl$ formula that captures this subclass. 

Let us consider the formula we obtain if we consider allowed behaviours from
$s_0$ that begin in the region $R^+_1$; this is the subset of $\BDset(s_0)$ consisting of 
$\BD(s_0, R^+_1, R^+_1)=(\top, \top, \top, [(a\wedge x.\nex \Beh(s_a)) \vee b]\weaku 
(a \wedge \nex \Beh(s_{\ell})))$. We look at the $\sfmtl$ formula $\psi_{s_a}$ corresponding to 
$x.\nex \Beh(s_a)$, which is given by\\ 
$\reg_{R_0} \Sigma^+ \rightarrow \{\reg_{R_0}(a+b)^* \wedge \reg_{R_1}((a+b)^*+\epsilon) \wedge \reg_{R_2} \epsilon \wedge 
\reg_{R^+_1}(a+b)^*\}$ 
$\wedge$\\
$\reg_{R_0} \epsilon \wedge \reg_{R_1}\Sigma^+ \rightarrow \{\reg_{R_0}\top \wedge \reg_{R_1}(a+b)^* \wedge \reg_{R_2} \emptyset \wedge 
\reg_{R^+_1}(a+b)^*\}$ 
$\wedge$\\
$\reg_{R_0} \epsilon \wedge \reg_{R_1}\epsilon \wedge \reg_{R_2}\epsilon \wedge \reg_{R^+_1} \Sigma^+  \rightarrow 
\reg_{R^+_1}(a+b)^*$ \\

This formula $\psi_a$ is plugged in place of $x. \nex \Beh(s_a)$ in $\BD(s_0, R^+_1, R^+_1)$. We now combine  
$\BD(s_{\ell}, R^+_1, R^+_1) \in \BDset(s_{\ell})$  with  $\BD(s_0, R^+_1, R^+_1)[R^+_1]$ to obtain the combined behaviour 
of locations $s_{\ell}$ from the next point along with that of $s_0$. 
We know that $\BD(s_{\ell}, R^+_1, R^+_1)=(\top, \top, \top, b \weaku \bot)$. Thus, we obtain 
$\BD(s_0, R^+_1, R^+_1)$ after combining with   $\BD(s_{\ell}, R^+_1, R^+_1)$ and 
$\psi_{s_a}$ as $(\top, \top, \top,[(a\wedge \psi_{s_a}) \vee b]\weaku (a \wedge (b \weaku \bot)))$. 
Translating this into an $\sfmtl$ formula, we obtain the formula $\varphi_{R^+_1}(s_0)$\\
$\reg_{R_0} \epsilon \wedge \reg_{R_1}\epsilon \wedge \reg_{R_2} \epsilon 
\wedge \reg_{R^+_1} \Sigma^+ \rightarrow \reg_{R^+_1}\re([(a\wedge \psi_{s_a}) \vee b]\weaku (a \wedge (b \weaku \bot)))$.

$\varphi_{R^+_1}(s_0)$ is the formula which captures the subset of $L(\mathcal{A})$ which consists of timed words 
of the form $(a_1,t_1)(a_2,t_2) \dots (a_n, t_n)$ such that $t_1>1$. We can also write the formulae 
$\varphi_{R_0}(s_0)$, $\varphi_{R_1}(s_0)$, $\varphi_{R_2}(s_0)$, which capture respectively, the subset of words 
 of $L(\mathcal{A})$ which consists of timed words 
of the form $(a_1,t_1)(a_2,t_2) \dots (a_n, t_n)$ where $t_1=0$, $0< t_1<1$ and $t_1=1$ respectively. 
Thus, $L(\mathcal{A})$ is the union of the languages 
$L(\varphi_{R^+_1}(s_0)), L(\varphi_{R_0}(s_0)), L(\varphi_{R_2}(s_0))$ and 
$L(\varphi_{R_1}(s_0))$.

%  \end{enumerate}

\end{document}